\RequirePackage{fix-cm}

\documentclass[onefignum,onetabnum]{siamart171218}

\ifpdf
\hypersetup{
  pdftitle={Global analysis of a simplified model of anaerobic digestion and
a new result for the chemostat},
  pdfauthor={T. Meadows, M. Weedermann, G.S.K. Wolkowicz}
}
\fi

\usepackage{amsmath,amsfonts}
\usepackage{fancyhdr}
\usepackage{enumitem}
\usepackage{graphicx}
\usepackage{color}
\usepackage{subfigure}
\usepackage{bibentry}
\usepackage{tikz}
\usepackage{ulem}
\usepackage{cite}
\usepackage{array,booktabs}

\usetikzlibrary{shapes.geometric, arrows}
\newsiamthm{prop}{Proposition}
\newsiamthm{thm}{Theorem}
\newsiamthm{cor}{Corollary}
\newsiamthm{lem}{Lemma}
\newsiamthm{defn}{Definition}
\newsiamthm{rmk}{Remark}
\def\proto#1{\mu_{2,\text{#1}}}

\title{Global analysis of a simplified model of anaerobic digestion and
a new result for the chemostat \thanks{Submitted to the editors \today}}
\author{T. Meadows\thanks{McMaster University \email{meadowta@mcmaster.ca}}
	 \and M. Weedermann \thanks{Dominican University \email{mweederm@dom.edu}}
	 \and G.S.K. Wolkowicz \thanks{McMaster University \email{wolkowic@math.mcmaster.ca} \funding{Natural Sciences and Engineering Discovery Grant \# 9358 and Accelerator supplement.}}
	}
\headers{T. Meadows, M. Weedermann, G.S.K. Wolkowicz}{Global analysis of a simplified model of anaerobic digestion}

\tikzstyle{box} = [rectangle, rounded corners, minimum width=2.4cm, minimum height=1cm,text centered, text width=2.5cm, draw=black]
\tikzstyle{arrow} = [thick, ->,>=stealth]
\tikzstyle{ball} = [circle, minimum height = 0.5cm, text centered, draw =black]

\begin{document}

\maketitle
\begin{abstract}
A. Bornh\"oft, R. Hanke-Rauschenbach, and K. Sundmacher, [ \textit{Nonlinear Dyn.}, 73 (2013), pp. 535--549] introduced a qualitative simplification to the
	ADM1 model for anaerobic digestion.
     We obtain global results for this model by
	first analyzing  the limiting system, a
	model of  single species growth in the
	chemostat    in which the response function is non-monotone and
	the species decay rate is included. Using a Lyapunov function argument
	and the theory of asymptotically autonomous systems,
	we prove that even in the
	parameter regime where there is bistability, no periodic orbits
	exist and every solution converges to one of the equilibrium
	points.   We then describe two algorithms for stochastically
	perturbing the parameters of the model. Simulations done with
	these two algorithms are compared with   simulations
	done  using the Gillespie
	and tau-leaping algorithms.   They
	  illustrate the
	severe impact environmental factors may have on anaerobic
	digestion in the transient phase.
\end{abstract}
\begin{keywords}chemostat,  Lyapunov function, biogas production, anaerobic digestion,  global stability analysis, stochastic simulations
\end{keywords}
\begin{AMS}
34C60, 34D20, 34D23, 70K05, 93D30, 92D25, 93D30
\end{AMS}
\section{Introduction}
 Anaerobic digestion is a biochemical process where microorganisms or  multicellular organisms break down   organic material in the absence of oxygen.  Anaerobic digestion is an important part of many industrial practices, including the treatment of wastewater and the production of biogas. The role of anaerobic digestion in such applications has been an active area of recent research \cite{ADM1,Benyahia:2012,Bernard:2001,Bornhoft:2013,HMH:2010,HB:2008,Jeya:1997,Lokshina:2001,Shen:2007,SoRWE:2005}. This paper focuses on a particular model of anaerobic digestion
 in biogas production.

The foundation of previous work on the mathematical analysis of the
production of biogas is the \emph{Anaerobic Digestion Model 1} (ADM1)
\cite{ADM1} introduced in 2002. If implemented as a system of
differential equations, this model has 32 state variables, including
seven different species of microorganisms.  Understandably, anything other than numerical analysis has not been feasible.

In an effort to formally analyze the system, several groups \cite{Bornhoft:2013,HMH:2010,HB:2008,WeeSeoWolk:2013} have studied various subsystems  of ADM1. Recently, Weedermann et al. \cite{WeeSeoWolk:2013,WeeWolSas:2015}  combined two previous models \cite{HMH:2010,HB:2008} to get a reasonably complete picture using only eight state variables.
Due to the inclusion of two pathways to biogas production in \cite{WeeSeoWolk:2013} and because the model captures the ADM1's sensitivity to the accumulation of acetic acid,
\cite{WeeSeoWolk:2013} illustrates some of the complexity of ADM1, which must exhibit the same or an even richer dynamics than the model in \cite{WeeSeoWolk:2013}.

Bornh\"oft et al. \cite{Bornhoft:2013} introduced a model with five
state variables based on their observations from a   numerical steady-state analysis
of the ADM1 model, and conjectured that their model undergoes the same
bifurcations as the ADM1 model with the substrate inlet concentration as
bifurcation parameter.   The model in  \cite{Bornhoft:2013} is
the first simplified model to  consider the effects of ammonia. 
It is demonstrated that the proposed model 
is able to capture the  same effects of ammonia
on anaerobic digestion that are displayed by the ADM1 model.
However, the analysis in this paper shows that the model does not
possess all   of the  dynamics   of  ADM1, even if a broader class of growth
functions is considered than the ones that were initially proposed.
The model is missing some of
the dynamics shown in \cite{WeeSeoWolk:2013}, namely the possible
bistability between two equilibria that both correspond to biogas
production, a behaviour of the full ADM1 model that is also noted in \cite{Bornhoft:2013}.

The model in \cite{Bornhoft:2013} considers two stages of  anaerobic digestion, acidogenesis and methanogenesis. In the first stage, simple substrates are broken down by acidogenic microorganisms. The microorganisms use the energy from the simple substrates to grow, and produce volatile fatty acids (VFAs) and ammonia as byproducts. The VFAs and ammonia have opposing effects on the pH of the system;  an increase in the concentration of VFAs will decrease the pH, while an increase in the concentration of ammonia will increase the pH.  In the second stage, methanogenic microorganisms convert the VFAs to biogas. The methanogenic microorganisms are very sensitive to the environment, and can only tolerate a relatively narrow pH range. Furthermore, ammonia is  toxic to the methanogenic microorganisms in large quantities and will restrict their growth. The flow chart in \cref{fig:flowchart} summarizes the process.
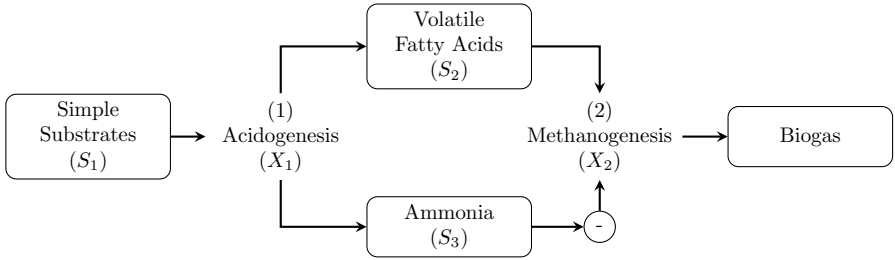
\begin{figure*}
\centering
\begin{tikzpicture} [node distance=2cm, scale=0.8, every node/.style={scale=0.8}]
\node (substrates) [box] {Simple Substrates\\ $(S_1)$};
\node (acid) [right of =substrates,xshift= 1.2cm,,text width=2.25cm, text centered] {(1)\\ Acidogenesis \\$(X_1)$};
\node (VFA) [box, right of =acid,xshift = 0.8cm,yshift=1.5cm] {Volatile Fatty Acids\\ $(S_2)$};
\node (ammonia) [box, right of =acid,xshift = 0.8cm, yshift=-1.5cm] {Ammonia\\ $(S_3)$};
\node (meth) [right of =VFA, xshift = 0.5cm, yshift=-1.5cm,text centered, text width =2.5cm] {(2)\\ Methanogenesis \\ $(X_2)$};
\node (biogas) [box, right of =meth,xshift = 1.5cm] {Biogas};
\node (inhibit) [ball, below of = meth,yshift = 0.5cm]{-};
\draw [arrow] (substrates) -- (acid);
\draw [arrow] (acid) |- (ammonia);
\draw [arrow] (acid) |- (VFA);
\draw [arrow] (VFA) -| (meth);
\draw [arrow] (ammonia) -- (inhibit);
\draw [arrow] (inhibit) -- (meth);
\draw [arrow] (meth) -- (biogas);
\end{tikzpicture}
\caption{\label{fig:flowchart} The anaerobic digestion process. (1) Acidogenic microorganisms break down simple substrates into VFAs and ammonia. (2) Methanogenic microorganisms break down VFAs into biogas such as methane. This process is inhibited by ammonia.}
\end{figure*}

In this paper we   provide  a formal mathematical analysis of the model
proposed in \cite{Bornhoft:2013},   allowing a more general class of response functions. In \cref{sec:model}, we
describe the model and assumptions, and give
 properties of the 
solutions  of the  system.
If the substrate input concentration is
 too low, the system converges to a
state where no microorganisms are present.
We show that if the  substrate input concentration is
high enough to allow the acidogenic microbial population
to survive, the system reduces to a
limiting system that is a two-dimensional basic model of growth in the chemostat that includes the decay rates and allows for any non-monotone response function.

  In \cref{sec:stability}, we study the dynamics
of this limiting system.
 We obtain a new global result in the case that the parameters
allow  bistability by proving that no nontrivial periodic orbits exist.

In \cref{sec:globalfull} we use the theory of asymptotically autonomous systems and
 the results for the limiting system from \cref{sec:stability}
to provide a complete global analysis of the anaerobic digestion model
in \cite{Bornhoft:2013} for a more general class of response functions.

In \cref{sec:bifurcations},  we propose two
alternative  prototype functions
 to model the growth of the methanogenic archaea and capture the inhibition by ammonia.
These prototypes  complement the
one used in \cite{Bornhoft:2013}, which has the property that
there is no growth in the absence of ammonia. The prototypes we
introduce allow growth in the absence of ammonia, but are either
unimodal or decreasing in ammonia.
We  provide bifurcation
diagrams for all three prototypes,  and compare how they
influence the outcome.

In \cref{sec:stochastics} we further investigate the model
  when the parameters are selected so that there are two stable steady states.
In industrial applications of processes
such as anaerobic digestion, operators must be aware of how physical and
environmental processes, and changes in the biology of the species can
affect the long term health of the reactor. One way to
address these
challenges is to include the   effects of stochasticity in
simulations of  the model.  Stochasticity
can be a result of random births and deaths, or of fluctuations in the
model parameters, possibly due to mutations  or changes in the environment.
Models of chemostats that
include stochasticity have been considered in the literature
(e.g., \cite{Campillo:2011,Imhof:2005}),
but our approach differs from the ones presented in those papers. We
consider stochasticity in the case where there
are fluctuations in the parameters, and compare the results to two well
known methods for simulating stochasticity in the case where there are
random births and deaths. Our studies give new insight into why
seemingly identical reactor setups can lead to very different reactor performances.
In \cref{sec:conclusion},  we summarize our results  and
discuss their implications for biogas production using anaerobic
digestion.

\section{The Model}\label{sec:model}
 Let $X_1$, $X_2$, $S_1$, $S_2$ and $S_3$ denote the concentrations of the acidogenic microorganisms, methanogenic microorganisms, simple substrates, acetic acid and ammonia, respectively. The model is described by the system
\begin{subequations}\label{eq:system}
	\begin{align}
		\dot{S_1} &= (S^{(0)}-S_1)D-y_1\mu_{1}(S_1)X_1, \label{eq:substrates}\\
		\dot{X_1} &= -D_1X_1+\mu_{1}(S_1)X_1,\label{eq:acidogens}\\
		\dot{S_2} &= -DS_2+y_2\mu_1(S_1)X_1-y_3\mu_2(S_2,S_3)X_2,	\label{eq:aceticacid}\\
		\dot{S_3} &= -DS_3+y_4\mu_1(S_1)X_1,\label{eq:ammonia}\\
		\dot{X_2} &= -D_2X_2 + \mu_2(S_2,S_3)X_2,\label{eq:methanogens}
	\end{align}
\end{subequations}
where  $D$ is the dilution rate, $S^{(0)}$ is the input concentration of
simple substrates, $D_i = D + k_i$, where $k_i\geq0$ are the respective
decay rates of $X_i$, and $y_i$ are yield constants.

Let $\mathbb{R}_+$ and $\mathbb{R}^2_+$  denote the set
of non-negative real numbers and  the non-negative plane,
respectively. We make the following assumptions concerning $\mu_1$ and $\mu_2$:

\begin{enumerate}[label=(H\arabic*)]
	\item $\mu_1(S_1)\in C^1(\mathbb{R}_+)$, and $\mu_1'(S_1)>0$ for all $S_1>0$.
	\item $\mu_1(0) = 0$, $\mu_1(S_1)>0$ for all $S_1>0$.
	\item $\mu_2(S_2,S_3)\in C^1(\mathbb{R}^2_+)$, and $\mu_2(S_2,S_3)>0$ if $S_2>0$ and $S_3>0$.	
	\item $\lim_{S_3\to\infty}\mu_2(S_2,S_3) = 0$ for all $S_2\geq 0 $.
	\item $\lim_{S_2\to\infty}\mu_2(S_2,S_3) = 0$ for all $S_3\geq 0 $.
	\item $\mu_2(0,S_3) = 0$ for all $S_3\geq0$ and $\mu_2(S_2,0)\geq 0 $ for all $S_2>0$
	\item There exists $\Gamma(S_3)\in C(\mathbb{R}_+)$ such that for $S_2<\Gamma(S_3),~\partial_{S_2}\mu_2(S_2,S_3)>0$ and for $S_2>\Gamma(S_3),$ $\partial_{S_2}\mu_2(S_2,S_3)<0$.
	\end{enumerate}
Unlike in \cite{Bornhoft:2013}, we do not assume that both $X_i$ have
identical decay rates.
(H1) and (H2) are satisfied by any of the Holling type I, II or III
growth functions, which are standard to chemostat models. (H3), (H4), and
(H5) capture the inhibitory nature of $S_2$ and $S_3$, guaranteeing that
large quantities of either $S_2$ or $S_3$ will be detrimental to the
growth of $X_2$. (H6) ensures that an absence of acetic acid will result
in no growth of the methanogenic microorganisms, while an absence of
ammonia does not necessarily have this effect. We would like to note
that the prototype describing the growth of methanogens proposed in
\cite{Bornhoft:2013} has the property that \mbox{$\lim_{S_3 \to 0}
\mu_2(S_2,S_3)=0$} for all $S_2\geq 0$. We decided to relax this
condition. (H7) intends to capture the nature of the inhibition
mechanisms outlined in \cite{Bornhoft:2013}, whereby small
concentrations of $S_2$ are limiting on the growth of $X_2$, while large
concentrations of $S_2$ will increase the pH and hence be inhibitory.
 The curve $\Gamma(S_3)$ indicates that for each fixed $S_3$ there is at most one $S_2$ such that $\partial_{S_2}\mu_2(S_2,S_3)=0$. The curve $\Gamma(S_3)$ therefore gives the optimal concentration of $S_2$ for growth of $X_2$ as a function of $S_3$. We make no further assumptions about how
$\mu_2(S_2,S_3)$ changes with $S_3$. In many cases, including ADM1, $\mu_2(S_2,S_3)$ will have a unimodal shape for fixed $S_2$, but we
  do not want to rule out the possibility of other profiles that may be
  useful.  For examples of functions satisfying these
  hypotheses, refer to \cref{sec:bifurcations}.

Here, we introduce some notation. The \emph{break-even concentration of
$S_1$}, $\lambda_1$, is the unique positive extended real number that
satisfies
\begin{equation}
 \mu_1(\lambda_1) = D_1.
\end{equation}
If no such number exists, we take $\lambda_1 = +\infty$.  When $S_1 = \lambda_1$ and $X_2 = 0$, the equilibrium concentrations of $S_2$ and $S_3$, $\lambda_2$ and $\lambda_3$, respectively, are given by
\begin{align}
	\lambda_2 &= \frac{y_2}{y_1}(S^{(0)}-\lambda_1)
	\label{eq:lambda2}\\
	\lambda_3 &= \frac{y_4}{y_1}(S^{(0)}-\lambda_1)
	\label{eq:lambda3}.
\end{align}

The \emph{break-even concentrations of $S_2$}, $\sigma_1$ and
$\sigma_2$, are the extended real numbers $\sigma_2\geq\sigma_1$ that solve
\begin{equation}
 \mu_2(\sigma_i,\lambda_3) = D_2.
\end{equation}
If no such numbers exist, which is the case when
$\mu_2(\Gamma(S_3),S_3)<D_2$, then we write $\sigma_1=\sigma_2=+\infty$.
 Note that by (H7), there is at most one turning point of
$\mu_2(S_2,\lambda_3)$, and therefore at most two solutions to $\mu_2(S_2,\lambda_3) = D_2$.

System \cref{eq:system} has a total of four possible equilibria,
\begin{subequations}
\begin{align}
	E &= (S^{(0)},0,0,0,0) \label{eq:E}\\
	E_0 &= (\lambda_1,X_1^*,\lambda_2,\lambda_3,0) \label{eq:E0}\\
	E_1&= (\lambda_1,X_1^*,\sigma_1,\lambda_3,X_{2,\sigma_1}^*)
	\label{eq:E1}\\
	E_2&= (\lambda_1,X_1^*,\sigma_2,\lambda_3,X_{2,\sigma_2}^*),
	\label{eq:E2}
\end{align}
\end{subequations}
where
$$   X_{1}^*
=\frac{D(S^{(0)}-\lambda_1)}{y_1D_1}   \quad \mbox{and} \quad X_{2,\sigma_i}^*
=\frac{D(\lambda_2-\sigma_i)}{y_3D_2}. $$
These equilibria are only biologically meaningful if each of the components is non-negative. $E_1$ and $E_2$, when they exist, are called interior equilibria, since they lie in the interior of the positive cone $\mathbb{R}_+^5$. $E$ and $E_0$ are called boundary equilibria, since they lie on the boundary of the positive cone $\mathbb{R}_+^5$.

  The following propositions give well-posedness results for
system \cref{eq:system}, provide conditions for the washout of the
microorganisms in the reactor when the  substrate input concentration is too low,
and introduce the limiting system when the input concentration is high
enough so that  the acidogens survive.  The proofs are given in \cref{ap:quasiproof}.

\begin{prop}\label{prop:properties} Assume that $X_i(0)\geq 0$ and $S_i(0)\geq 0 $.
\begin{enumerate}[label=\roman*)]
	\item If $X_1(0) = 0$, then solutions converge to $E$ as $t \to \infty$.
	\item If $X_1(0)>0$ and $X_2(0)=0$, then $X_1(t)>0$ and $S_i(t)>0$ for all $t>0$  while $X_2(t) = 0$ for all $t\geq 0$.
 	\item If $X_1(0)>0$ and $X_2(0)>0$, then $X_i(t)>0$ and $S_i(t)>0$ are positive for all $t>0$.
	\item All solutions are bounded for $t\geq 0$.
\end{enumerate}
\end{prop}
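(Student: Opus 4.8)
The four parts build on one another, so I would establish them in order. For (i), setting $X_1(0)=0$ forces $X_1(t)\equiv 0$, since \cref{eq:acidogens} is scalar-linear in $X_1$ and $X_1\equiv 0$ is a solution. Then \cref{eq:substrates} reduces to $\dot S_1 = D(S^{(0)}-S_1)$ and \cref{eq:ammonia} to $\dot S_3 = -DS_3$, which integrate explicitly to give $S_1(t)\to S^{(0)}$ and $S_3(t)\to 0$. Since $\mu_2(0,S_3)=0$ by (H6), the set $\{S_2=0\}$ is invariant, so $S_2(t)\ge 0$; combined with $\mu_2\ge 0$ on $\mathbb R^2_+$ (from (H3), (H6)), which gives $\dot S_2\le -DS_2$, we get $S_2(t)\to 0$. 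Joint continuity of $\mu_2$ (H3) together with (H6) then yields $\mu_2(S_2(t),S_3(t))\to\mu_2(0,0)=0$, so eventually $\dot X_2\le -\tfrac12 D_2 X_2$ and $X_2(t)\to 0$; hence the solution converges to $E$.

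For (ii) and (iii), note that each $X_i$-equation has the form $\dot X_i=(\mu_i-D_i)X_i$, so $X_i(t)=X_i(0)\exp\!\big(\int_0^t(\mu_i-D_i)\,ds\big)$, which is identically zero if $X_i(0)=0$ and strictly positive for all $t$ if $X_i(0)>0$; this settles the $X_i$-claims. For the substrates I would first record forward invariance of $\mathbb R^5_+$ by the standard quasipositivity check: on each face $\{Z=0\}\cap\mathbb R^5_+$ the corresponding component of the vector field is nonnegative — at $S_1=0$, $\dot S_1=DS^{(0)}>0$ (using $\mu_1(0)=0$ and $S^{(0)}>0$); at $S_3=0$, $\dot S_3=y_4\mu_1(S_1)X_1\ge 0$; at $S_2=0$, $\dot S_2=y_2\mu_1(S_1)X_1\ge 0$ (using $\mu_2(0,S_3)=0$). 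To upgrade to strict positivity for $t>0$: $S_1=0\Rightarrow\dot S_1>0$ already gives $S_1(t)>0$ for all $t>0$, and then variation of constants applied to \cref{eq:ammonia}, and to \cref{eq:aceticacid} when $X_2\equiv 0$, writes $S_3(t)$ (resp.\ $S_2(t)$) as $e^{-Dt}S_i(0)$ plus the integral of a term that is strictly positive on $(0,t)$ since $S_1(s),X_1(s)>0$ there, whence $S_i(t)>0$.

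For (iv), I would use the affine combinations familiar from the chemostat. From \cref{eq:substrates} and \cref{eq:acidogens}, $U:=S_1+y_1X_1$ satisfies $\dot U=DS^{(0)}-DS_1-y_1D_1X_1\le D(S^{(0)}-U)$, since $D_1=D+k_1\ge D$; hence $U(t)\le\max\{U(0),S^{(0)}\}$, and as $S_1,X_1\ge 0$ both are bounded, so $M_1:=\sup_{t\ge 0}\mu_1(S_1(t))X_1(t)<\infty$. Then \cref{eq:ammonia} gives $\dot S_3\le -DS_3+y_4M_1$, bounding $S_3$, while $V:=S_2+y_3X_2$ satisfies $\dot V=-DS_2-y_3D_2X_2+y_2\mu_1(S_1)X_1\le -DV+y_2M_1$ (using $D_2\ge D$), bounding $S_2$ and $X_2$. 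Boundedness on the maximal interval of existence then forces that interval to be $[0,\infty)$, and the same bounds give (iv).

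The step I expect to be the main obstacle is the strict positivity of $S_2$ in part (iii): unlike $S_1$ and $S_3$, which only gain mass from nonnegative sources, $S_2$ is also consumed by $X_2$, so one must exploit the vanishing of $\mu_2$ on $\{S_2=0\}$ quantitatively rather than merely qualitatively. The fix is that $\mu_2\in C^1$ with $\mu_2(0,\cdot)=0$ forces $\mu_2(S_2,S_3)\le LS_2$ on any compact time interval (where all variables are bounded), so $\dot S_2\ge -(D+Ly_3X_2^{\max})S_2+y_2\mu_1(S_1)X_1$, and comparison with the associated linear equation again yields $S_2(t)>0$ for $t>0$. Everything else is routine variation-of-constants and differential-inequality bookkeeping.
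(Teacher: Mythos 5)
Your proposal is correct and follows essentially the same strategy as the paper: invariance of the coordinate hyperplanes plus sign arguments on the boundary faces for positivity, and linear differential inequalities on affine combinations of the state variables (your $U=S_1+y_1X_1$, $V=S_2+y_3X_2$ versus the paper's use of known chemostat results and $\Sigma=y_3X_2+S_2-\tfrac{y_2}{y_4}S_3$) for boundedness. The one step you flagged as the main obstacle is handled more simply in the paper: for strict positivity of $S_2$ when $X_2>0$, no Lipschitz/comparison bound is needed, since at any first time $\hat{t}$ with $S_2(\hat{t})=0$ hypothesis (H6) makes the consumption term vanish, giving $\dot{S}_2(\hat{t})=y_2\mu_1(S_1(\hat{t}))X_1(\hat{t})>0$ and ruling out such a zero directly.
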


\begin{prop} If $\lambda_1 \geq S^{(0)}$, then $E$ is a globally asymptotically stable equilibrium of \cref{eq:system}. \label{prop:washout}
\end{prop}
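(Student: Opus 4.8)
The plan is to exploit the triangular structure of \cref{eq:system}: equations \cref{eq:substrates}--\cref{eq:acidogens} form an autonomous planar subsystem in $(S_1,X_1)$ that is completely decoupled from $S_2,S_3,X_2$, while the remaining three equations are driven by $(S_1,X_1)$ only through the term $\mu_1(S_1)X_1$. So I would first prove that $(S_1(t),X_1(t))\to(S^{(0)},0)$ for the planar subsystem, and then let the convergence cascade.

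By \cref{prop:properties}(i) we may assume $X_1(0)>0$; then by \cref{prop:properties}(ii)--(iii) we have $S_1(t)>0$ for $t>0$, and by \cref{prop:properties}(iv) the orbit is bounded, so we may restart time at some $t_0>0$. From $\dot S_1\le D(S^{(0)}-S_1)$ and a comparison argument, $\limsup_{t\to\infty}S_1(t)\le S^{(0)}$, which identifies this as the washout regime. For the planar subsystem I would use the Lyapunov function
\[
 V(S_1,X_1)=\int_{S^{(0)}}^{S_1}\frac{\mu_1(\xi)-\mu_1(S^{(0)})}{\mu_1(\xi)}\,d\xi+y_1X_1,\qquad S_1>0,\; X_1\ge0.
\]
Since $\mu_1$ is increasing, the integrand is negative on $\xi<S^{(0)}$ and positive on $\xi>S^{(0)}$, so $V\ge0$ with equality only at $(S^{(0)},0)$; and since $\mu_1(0)=0$, $V\to\infty$ as $S_1\to0^+$ (and, $\mu_1$ being strictly increasing, also as $S_1\to\infty$), so every sublevel set $\{V\le c\}$ is compact inside $\{S_1>0\}$, and the orbit stays in one such set for $t\ge t_0$. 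A short computation --- in which the $\mu_1(S_1)X_1$ cross terms cancel because the coefficient of $X_1$ in $V$ equals the yield $y_1$ appearing in \cref{eq:substrates} --- gives
\[
 \dot V=\frac{D\big(\mu_1(S_1)-\mu_1(S^{(0)})\big)\big(S^{(0)}-S_1\big)}{\mu_1(S_1)}+y_1\big(\mu_1(S^{(0)})-D_1\big)X_1.
\]
The first term is $\le0$ by monotonicity of $\mu_1$; the second is $\le0$ precisely because the hypothesis $\lambda_1\ge S^{(0)}$ forces $\mu_1(S^{(0)})\le\mu_1(\lambda_1)=D_1$ --- this is the only place the hypothesis is used. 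Hence $\dot V\le0$, with $\dot V=0$ only on $\{S_1=S^{(0)}\}$, and the largest invariant subset of that line for the planar subsystem is $\{(S^{(0)},0)\}$ (invariance forces $\dot S_1=-y_1\mu_1(S^{(0)})X_1=0$). The LaSalle invariance principle then yields $(S_1(t),X_1(t))\to(S^{(0)},0)$, and the same $V$, being positive definite about $(S^{(0)},0)$ with $\dot V\le0$, gives Lyapunov stability of $(S^{(0)},0)$.

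With $(S_1,X_1)\to(S^{(0)},0)$ we have $\mu_1(S_1)X_1\to0$; then from $\dot S_3=-DS_3+y_4\mu_1(S_1)X_1$ and $\dot S_2\le-DS_2+y_2\mu_1(S_1)X_1$ (using nonnegativity of $S_2,S_3,X_2,\mu_2$), a variation-of-constants estimate gives $S_3\to0$ and $S_2\to0$. By continuity of $\mu_2$ and (H6), $\mu_2(S_2,S_3)\to\mu_2(0,0)=0<D_2$, so eventually $\dot X_2\le-\tfrac{D_2}{2}X_2$ and $X_2\to0$; and $\dot S_1=D(S^{(0)}-S_1)-y_1\mu_1(S_1)X_1$ with $\mu_1(S_1)X_1\to0$ forces $S_1\to S^{(0)}$. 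Hence every solution converges to $E$. Lyapunov stability of $E$ follows by combining the planar stability of $(S^{(0)},0)$ with the same estimates run non-asymptotically: if $X_1$ stays small then $S_2$ and $S_3$ stay small, whence $\mu_2(S_2,S_3)<D_2$ and $X_2$ is decreasing, hence small.

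The comparison and variation-of-constants estimates and the cascade are routine. The one genuinely delicate point is that when $\lambda_1=S^{(0)}$ the $X_1$-direction at $E$ has a zero eigenvalue, so $E$ is non-hyperbolic and linearization cannot close the argument; this is exactly why one needs a Lyapunov/LaSalle argument rather than a spectral one. The device that makes the whole regime $\lambda_1\ge S^{(0)}$ go through with a single Lyapunov function is to anchor the integrand at $\mu_1(S^{(0)})$ rather than at $D_1$, which is what keeps $V\ge0$; and the mild singularity of $1/\mu_1$ at $S_1=0$, far from being an obstruction, is precisely what makes $V$ coercive on $\{S_1>0\}$, where \cref{prop:properties}(ii)--(iii) place the solutions for $t>0$.
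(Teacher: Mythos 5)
Your proposal is correct, and its skeleton---decouple the planar $(S_1,X_1)$ chemostat, prove washout there, then let the convergence cascade through $S_3$, $S_2$, $X_2$---is the same as in the paper's proof of \cref{prop:washout}; the differences lie in how the two stages are discharged. For the planar stage the paper simply cites the classical washout result for the single-species chemostat, whereas you reprove it with the explicit Lyapunov function $V$ and LaSalle's invariance principle; this makes the argument self-contained, the cross-term cancellation and the sign of $y_1\bigl(\mu_1(S^{(0)})-D_1\bigr)X_1$ under $\lambda_1\ge S^{(0)}$ check out, and your remarks that coercivity of $V$ as $S_1\to 0^+$ (which does hold, since $\mu_1\in C^1$ with $\mu_1(0)=0$ forces $\int_0 d\xi/\mu_1(\xi)=\infty$) confines the orbit, and that the degenerate case $\lambda_1=S^{(0)}$ requires LaSalle rather than linearization, are exactly right. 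For the second stage the paper shows $S_3\to 0$ by an $\epsilon$--$T$ differential inequality and then treats $S_2$ and $X_2$ simultaneously through the single variable $\Sigma_2=S_2+\tfrac{1}{y_3}X_2$, which obeys $\dot\Sigma_2\le -D\Sigma_2+y_2\mu_1(S_1)X_1$ because $D\le D_2$, so nonnegativity kills both components at once without invoking (H6); you instead drive $S_2\to 0$ alone (discarding the nonpositive $-y_3\mu_2 X_2$ term) and then use continuity of $\mu_2$ with $\mu_2(0,0)=0<D_2$ from (H6) to get eventual exponential decay of $X_2$. Both routes are valid: the paper's combination is slightly more economical and does not need $\mu_2$ to vanish at the origin, while yours is more transparent and gives an explicit decay rate; you also spell out Lyapunov stability of $E$ via non-asymptotic versions of the cascade bounds, a point the paper leaves to the cited planar theory. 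One cosmetic remark: when $\lambda_1=+\infty$ read the hypothesis as $\mu_1(S)<D_1$ for all $S>0$, so $\mu_1(S^{(0)})\le D_1$ still holds and the sign argument for $\dot V$ is unchanged.
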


\begin{prop}\label{prop:limit}
	If $\lambda_1 < S^{(0)}$, then \cref{eq:system} is a quasi-autonomous system with limiting system
	\begin{subequations}\label{eq:limitsystem}
	\begin{align}
	\dot{S_2} &= -DS_2+\lambda_2D-y_3\mu_2\left(S_2,\lambda_3\right)X_2,\label{eq:limitacid}\\
	\dot{X_2} & = -D_2X_2+\mu_2(S_2,\lambda_3)X_2.\label{eq:limitmethanogens}
	\end{align}
	\end{subequations}
\end{prop}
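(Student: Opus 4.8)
The plan is to exploit the one-way coupling built into \cref{eq:system}. Equations \cref{eq:substrates}--\cref{eq:acidogens} for $(S_1,X_1)$ do not involve $S_2$, $S_3$ or $X_2$, so $(S_1,X_1)$ evolves under an autonomous planar subsystem of its own, namely the classical single-species chemostat with species decay; once its long-term behaviour is understood, it drives the scalar equation \cref{eq:ammonia} for $S_3$, which in turn feeds \cref{eq:aceticacid}--\cref{eq:methanogens} for $(S_2,X_2)$. It therefore suffices to prove, for a solution with $X_1(0)>0$ (the case $X_1(0)=0$ being already covered by \cref{prop:properties}(i)), the three limits $S_1(t)\to\lambda_1$, $X_1(t)\to X_1^*$ and $S_3(t)\to\lambda_3$ as $t\to\infty$, and then to read off the limits of the non-autonomous terms in the $S_2$ and $X_2$ equations.

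For the first two limits I would study the planar subsystem $\dot S_1=D(S^{(0)}-S_1)-y_1\mu_1(S_1)X_1$, $\dot X_1=X_1\bigl(\mu_1(S_1)-D_1\bigr)$ on the (positively invariant) quadrant $\mathbb{R}^2_+$. Since $\lambda_1<S^{(0)}$ and $\mu_1$ is increasing by (H1) with $\mu_1(\lambda_1)=D_1$, the only equilibria there are the saddle $(S^{(0)},0)$ — linearization eigenvalues $-D<0$ and $\mu_1(S^{(0)})-D_1>0$, with stable manifold exactly the invariant axis $\{X_1=0\}$ — and the interior point $(\lambda_1,X_1^*)$ with $X_1^*>0$. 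I would rule out periodic orbits in the open half-plane $\{X_1>0\}$ with the Dulac function $B=1/X_1$, since $\partial_{S_1}\!\left(\frac{1}{X_1}\bigl(D(S^{(0)}-S_1)-y_1\mu_1(S_1)X_1\bigr)\right)+\partial_{X_1}\!\bigl(\mu_1(S_1)-D_1\bigr)=-\frac{D}{X_1}-y_1\mu_1'(S_1)<0$. As solutions are bounded by \cref{prop:properties}(iv), no orbit from $\{X_1>0\}$ can reach the stable manifold of the saddle, and there is no room for a separatrix cycle, the Poincar\'e--Bendixson theorem forces every such orbit to converge to $(\lambda_1,X_1^*)$. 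I expect this global-attractivity step to be the main obstacle: because the decay rate $k_1$ may be positive, the mass quantity $S_1+y_1X_1$ is not conserved and does not simply relax to a constant, so the planar subsystem cannot be collapsed onto a monotone scalar equation as in the textbook $k_1=0$ chemostat; the Dulac/phase-plane argument above — or, equivalently, a LaSalle argument with a suitable Lyapunov function — is the substitute.

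Given these limits, continuity of $\mu_1$ gives $\mu_1(S_1(t))X_1(t)\to\mu_1(\lambda_1)X_1^*=D_1X_1^*=D(S^{(0)}-\lambda_1)/y_1$, so \cref{eq:ammonia} becomes the scalar linear equation $\dot S_3=-DS_3+h(t)$ with $h(t)=y_4\mu_1(S_1(t))X_1(t)\to y_4D(S^{(0)}-\lambda_1)/y_1=D\lambda_3$; the variation-of-constants formula then yields $S_3(t)\to\lambda_3$. The same computation shows the non-autonomous source term in \cref{eq:aceticacid} satisfies $y_2\mu_1(S_1(t))X_1(t)\to y_2D(S^{(0)}-\lambda_1)/y_1=D\lambda_2$. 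Finally I would check that, viewing the right-hand sides of \cref{eq:aceticacid}--\cref{eq:methanogens} as functions of $(t,S_2,X_2)$, all of the $t$-dependence enters only through $y_2\mu_1(S_1(t))X_1(t)\to D\lambda_2$ and through $\mu_2(S_2,S_3(t))$, and that $\mu_2(S_2,S_3(t))\to\mu_2(S_2,\lambda_3)$ uniformly for $(S_2,X_2)$ in any compact set — this follows from $S_3(t)\to\lambda_3$, $\mu_2\in C^1(\mathbb{R}^2_+)$ (hence locally Lipschitz), and boundedness of the solutions of \cref{eq:system}. Hence these right-hand sides converge, uniformly on compact subsets of $\mathbb{R}^2_+$, to $-DS_2+\lambda_2D-y_3\mu_2(S_2,\lambda_3)X_2$ and $-D_2X_2+\mu_2(S_2,\lambda_3)X_2$, i.e.\ to the right-hand side of \cref{eq:limitsystem}; this is exactly the statement that \cref{eq:system} is quasi-autonomous with limiting system \cref{eq:limitsystem}.
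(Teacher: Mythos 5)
Your overall strategy (decouple the $(S_1,X_1)$ chemostat subsystem, push the convergence through $S_3$, and then compare the driven $(S_2,X_2)$ equations with the autonomous limit) is the same skeleton as the paper's argument, and your Dulac/Poincar\'e--Bendixson treatment of the planar subsystem is a legitimate, self-contained replacement for the paper's citation of \cite{WL:1992}. The genuine gap is in your final step: you establish that the time-dependent right-hand sides converge to the limiting vector field uniformly on compact sets and assert that ``this is exactly the statement that \cref{eq:system} is quasi-autonomous.'' It is not. The paper's notion of quasi-autonomous (\cref{def:quasiautonomous}), which is what its application of Thieme's Theorem~1.4 rests on, requires the integral condition $\int_{t_0}^\infty \sup_{x\in K}\|f(t,x)-g(x)\|\,dt<\infty$ for every compact $K$. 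Locally uniform convergence of $f(t,\cdot)$ to $g(\cdot)$ does not imply this (a discrepancy decaying like $1/t$ converges uniformly to zero but is not integrable), and nothing in your argument controls a rate: Dulac plus Poincar\'e--Bendixson gives convergence of $(S_1,X_1)$ with no rate whatsoever, and your variation-of-constants step for $S_3$ then likewise yields only convergence. So the actual content of the proposition --- integrability in time of the discrepancy, uniformly over compacta --- is never verified.

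The fix is exactly the quantitative input the paper supplies. The paper takes from \cite{WL:1992} that $(S_1(t),X_1(t))\to(\lambda_1,X_1^*)$ \emph{exponentially}, uses Cauchy--Schwarz and Young-type estimates to show the resulting discrepancy terms are integrable, obtaining the intermediate limiting system \cref{eq:intermediatelimit}, and then combines the explicit bound $|S_3(t)-\lambda_3|=|S_3(0)-\lambda_3|e^{-Dt}$ with the transitivity statement \cref{lem:quasi-autonomous} and the local Lipschitz property of $\mu_2$ to reach \cref{eq:limitsystem}. Your argument can be repaired along the same lines with little extra work: your own linearization shows $(\lambda_1,X_1^*)$ is hyperbolic (negative trace, positive determinant), so your global attractivity upgrades to eventually exponential convergence; then $\mu_1(S_1(t))X_1(t)-D_1X_1^*$ decays exponentially, variation of constants gives $S_3(t)-\lambda_3 = O\bigl(t e^{-ct}\bigr)$ for some $c>0$, and the Lipschitz estimate on $\mu_2$ over compact sets makes every discrepancy term integrable in $t$, which is what \cref{def:quasiautonomous} demands. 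Without that rate argument, the proof as written does not prove the stated proposition.
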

By Theorem~1.4 in \cite{Th:1994}, it will
be enough to study the dynamics of this limiting system.

\section{ Global Analysis of Growth in  the Chemostat}\label{sec:stability} 
After the change of variables
\begin{align*}
 X(t) =  y_3 X_2(t), \qquad S(t) =  S_2(t),\qquad  \mu(S(t)) = \mu_2(S_2(t),\lambda_3),\qquad S^{0} = \lambda_2,
\end{align*}
system \cref{eq:limitsystem} becomes a model of the chemostat:
\begin{subequations}\label{eq:Chemostat}
 \begin{align}
  \dot{S}(t) &= -(S(t)-S^{0})D - \mu(S(t))X(t)\\
  \dot{X}(t) &= -D_2X(t)+\mu(S(t))X(t).
 \end{align}
\end{subequations}

Recall that $\mu_2(S(t),\lambda_3)$ satisfies (H3) and (H4), and hence
 $\mu(S(t))$ is a non-monotone response function with break-even
concentrations    $0<\sigma_1<\sigma_2$,   the   extended real numbers that
solve $\mu(\sigma_i) =  D_2$.

  We  define the equilibria of \cref{eq:Chemostat} that
correspond to  $E_0, E_1,$ and $E_2$, respectively, for system
\cref{eq:system} defined in
\cref{eq:E0}-\cref{eq:E2}:
$$ \mathcal{E}_0 = (S^0, 0), \quad
\mathcal{E}_1 = (\sigma_1,X_{\sigma_1}^*), \quad
\mathcal{E}_2 = (\sigma_2,X_{\sigma_2}^*).$$
where $X^*_{\sigma_i} = \displaystyle \frac{D(S^0-\sigma_i)}{D_2}, \ i=1,2$.

 Models of the chemostat have been well studied (e.g., see
\cite{SW:1995,Harmand:2017} and
the references therein).
Model \cref{eq:Chemostat}  is a model of growth of a single species in
the   chemostat with non-monotone response
function that  includes the  species decay rate, i.e. $D_2=D+\epsilon$
where $\epsilon>0$ is the species decay rate.

In  Wolkowicz and Lu \cite{WL:1992}, model \cref{eq:Chemostat}
extended to the $n$ species case was analyzed.
 The results of that paper, if applied to the single species growth model,
completely determine the dynamics of \cref{eq:Chemostat} when $\mu(S(t))$ is any monotone increasing function or it is non-monotone and $\sigma_1<S^0\leq \sigma_2$.
However, the case that $\mu(S(t))$ is  a  non-monotone response function
and $\sigma_1<\sigma_2<S^0$, remained open.
Here, we provide a proof in this case
and thus complete the global
analysis of system \cref{eq:Chemostat}.
  In particular, we prove that
  there are no periodic orbits, and hence although the outcome is
  initial condition dependent, either the species dies
  out or it approaches an equilibrium.

  In the following theorem, we summarize what is known for  the dynamics of
  \cref{eq:Chemostat}, and provide a proof in the case that had
  remained open.

\begin{thm} \label{th:chemostat}
  Consider model \cref{eq:Chemostat}. Assume $\mu(S)$ is continuously differentiable, $\mu(0)=0$,
	$\mu(S)\geq 0$ for all $S>0$, and there exist positive  
	numbers     $\sigma_1 \leq \sigma_2$ (possibly infinite) such that
	$\mu(S)<D_2$ if $0<S< \sigma_1$,   $\mu(S)>D_2$ if
	$ \sigma_1<S<\sigma_2$, and  	$\mu(S)<D_2$ if $S>\sigma_2$.
	Let $S(0)\geq 0$ and	$X(0)>0$. 
\begin{enumerate}[label = \roman*)]
 \item If $S^0\leq \sigma_1\leq \sigma_2$, then $\mathcal{E}_0$ is globally asymptotically stable.
 \item If $\sigma_1<S^0\leq \sigma_2$, then $\mathcal{E}_1$ is globally asymptotically stable.
	 	\item   If $\sigma_1 = \sigma_2 < S^0$, then $\mathcal{E}_0$ is locally
		asymptotically stable and attracts all solutions except
		the solutions in the stable manifold of
		$\mathcal{E}_1=\mathcal{E}_2$.
 \item If $\sigma_1 < \sigma_2 < S^0$, then $\mathcal{E}_1$ and
	 $\mathcal{E}_0$ are locally asymptotically stable  and
		$\mathcal{E}_2$ is a saddle. Furthermore, any orbit that
		is not in the stable manifold of $\mathcal{E}_2$
		converges to either $\mathcal{E}_1$ or $\mathcal{E}_0$.
\end{enumerate}
\end{thm}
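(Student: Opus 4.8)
The parts (i) and (ii) are already covered by the results of Wolkowicz and Lu \cite{WL:1992} applied to the single-species model, so I would simply cite those and focus on (iii) and (iv). For (iii) note that $\sigma_1=\sigma_2$ is the degenerate boundary case obtained by taking a limit of the situation in (iv) (the saddle $\mathcal{E}_2$ and the node $\mathcal{E}_1$ merge into a saddle-node whose one-dimensional stable manifold separates the basin of $\mathcal{E}_0$); once (iv) is established, (iii) follows from the same phase-plane analysis, so the crux is (iv).

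\textbf{Local analysis and the invariant region.} First I would show that the triangular region
\[
  \Omega = \{(S,X)\in\mathbb{R}^2_+ : S\geq 0,\ X\geq 0,\ S + X \leq S^0 + c\}
\]
for a suitable constant, or more naturally the affine set $M = \{(S,X): S+X = S^0\}$ together with the region it bounds, is positively invariant: adding the two equations of \cref{eq:Chemostat} gives $\frac{d}{dt}(S+X) = -(S-S^0)D - D_2X - (S-S^0)D \cdots$ — more precisely $\dot S + \dot X = D(S^0-S) - D_2 X$, so on the line $S+X=S^0$ one gets $\dot S+\dot X = -\epsilon X \leq 0$, hence $S+X \leq S^0$ is forward-invariant and is reached asymptotically (the difference $S+X-S^0$ satisfies a linear differential inequality forcing it to $0$ or below). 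This both confines the flow to a bounded convex planar region and lets me invoke Poincar\'e--Bendixson. Next I would compute the Jacobian at each equilibrium: at $\mathcal{E}_0=(S^0,0)$ the eigenvalues are $-D<0$ and $\mu(S^0)-D_2<0$ (since $S^0>\sigma_2$ puts us in the region where $\mu<D_2$), giving local asymptotic stability; at $\mathcal{E}_1=(\sigma_1,X^*_{\sigma_1})$ and $\mathcal{E}_2=(\sigma_2,X^*_{\sigma_2})$ the trace/determinant sign is governed by the sign of $\mu'(\sigma_i)$ — at $\sigma_1$ we have $\mu'(\sigma_1)>0$ so (after checking the trace) $\mathcal{E}_1$ is a stable node/focus, while at $\sigma_2$ we have $\mu'(\sigma_2)<0$, which makes the determinant negative, so $\mathcal{E}_2$ is a saddle.

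\textbf{Ruling out periodic orbits.} The heart of the argument — and the step I expect to be the main obstacle — is excluding closed orbits and homoclinic/heteroclinic loops so that Poincar\'e--Bendixson forces convergence to an equilibrium. I would look for a Dulac function $B(S,X)$ making $\partial_S(B\dot S) + \partial_X(B\dot X)$ of one sign on the interior. The classical choice for chemostat-type systems is $B(S,X) = 1/X$ (or a variant $B = 1/(X\,g(S))$): with $B=1/X$ one gets
\[
  \partial_S\!\left(\frac{\dot S}{X}\right) + \partial_X\!\left(\frac{\dot X}{X}\right)
  = -\frac{D}{X} - \mu'(S),
\]
which is \emph{not} sign-definite once $\mu$ is non-monotone, since $\mu'(S)<0$ for $S>\Gamma$. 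So the obstacle is precisely the non-monotonicity, and a plain Dulac function will not work. The remedy I would pursue is to change variables to the conserved-quantity coordinate $u = S + X$ (so $\dot u = D(S^0 - S) - D_2 X = -Du - \epsilon X + D S^0 \cdots$), reducing the dynamics on and near the invariant line; on the invariant set $\{u = S^0\}$ the system collapses to the scalar equation $\dot S = D(S^0-S) - \mu(S)(S^0 - S) = (S^0-S)(D - \mu(S))$, which is one-dimensional and hence has no periodic orbits and only monotone convergence to a zero of $(S^0-S)(D-\mu(S))$, namely $S=S^0$ or $S=\sigma_1$ or $S=\sigma_2$. Then I would use the theory of asymptotically autonomous systems (or a direct comparison/attractivity argument for the exponentially attracting line $u=S^0$) to lift this one-dimensional conclusion to the full planar flow: every solution has $u(t)\to S^0$, so its $\omega$-limit set lies in $\{u=S^0\}$ and is therefore one of the three equilibria. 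A short additional argument — using that $\mathcal{E}_2$ is a saddle with a one-dimensional stable manifold transverse to the line, so that on $\{u=S^0\}$ the point $S=\sigma_2$ repels to both sides — shows that solutions not on the stable manifold of $\mathcal{E}_2$ land on $\mathcal{E}_1$ (if they start, in the reduced coordinate, on the $\sigma_1$ side) or on $\mathcal{E}_0$ (if on the $S^0$ side), completing (iv); letting $\sigma_1\uparrow\sigma_2$ and repeating the reduced-equation analysis with a double root at $\sigma_1=\sigma_2$ yields (iii).
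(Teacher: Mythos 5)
Your reduction to a conserved quantity is where the argument breaks down, and it breaks down precisely because of the feature that makes this case new: the decay rate. In \cref{eq:Chemostat} we have $D_2=D+\epsilon$ with $\epsilon>0$, so adding the equations gives $\frac{d}{dt}(S+X) = -D\,(S+X-S^0)-\epsilon X$, not $-D\,(S+X-S^0)$. Hence the set $\{S+X=S^0\}$ is not invariant (on it, $\frac{d}{dt}(S+X)=-\epsilon X<0$ whenever $X>0$), and it is false that every solution satisfies $S(t)+X(t)\to S^0$. Indeed the interior equilibria themselves violate this: $\mathcal{E}_1=(\sigma_1,X^*_{\sigma_1})$ has $\sigma_1+X^*_{\sigma_1}=\sigma_1+\frac{D(S^0-\sigma_1)}{D_2}<S^0$, so $\mathcal{E}_1$ and $\mathcal{E}_2$ lie strictly below the line you propose to attract the flow. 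Consequently the $\omega$-limit sets are not contained in $\{S+X=S^0\}$, the collapse to the scalar equation $\dot S=(S^0-S)(D-\mu(S))$ is unavailable (and even formally it would locate equilibria at $\mu(S)=D$ rather than $\mu(S)=D_2$), and your exclusion of periodic orbits and the ensuing basin description for part (iv) have no support. This conservation-law trick is exactly the Armstrong--McGehee device that works only when the decay rate is ignored; the reason the case $\sigma_1<\sigma_2<S^0$ with $\epsilon>0$ remained open is that this reduction fails there. (Your local analysis of the equilibria and the forward invariance of $\{S+X\le S^0+c\}$ are fine, and parts (i)--(ii) are correctly delegated to \cite{WL:1992}; part (iii) in the paper is handled directly by phase-plane analysis of the degenerate saddle rather than as a limit of (iv).)

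What is needed in place of the missing step is a genuinely planar argument that tolerates non-monotone $\mu$ and $D_2>D$. The paper first uses the nullcline geometry to show that any periodic orbit would have to lie in $\mathcal{G}=\{0<S<\sigma_2,\ X>0\}$ and surround $\mathcal{E}_1$, and then rules this out with the Lyapunov function of \cite{WL:1992}, namely $V(S,X)=\int_{\sigma_1}^{S}\frac{(\mu(\xi)-D_2)(S^0-\sigma_1)}{D_2(S^0-\xi)}\,d\xi + X-X^*_{\sigma_1}-X^*_{\sigma_1}\ln\bigl(X/X^*_{\sigma_1}\bigr)$, whose derivative along solutions is non-positive for $S\le\sigma_2$. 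The sublevel set of $V$ through the saddle point $(\sigma_2,X^*_{\sigma_1})$ of $V$ is positively invariant, any periodic orbit would have to enter it, and LaSalle's invariance principle shows the largest invariant subset of $\{\dot V=0\}$ there is $\{\mathcal{E}_1\}$, a contradiction; Poincar\'e--Bendixson then forces convergence of every orbit to one of the three equilibria, with the stable manifold of $\mathcal{E}_2$ as separatrix. Your instinct that a plain Dulac function fails for non-monotone $\mu$ was correct; the repair is this Lyapunov/LaSalle construction, not the conservation law.
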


\begin{proof}
	\textit{i)-ii)}  See
	\cite{WL:1992}.

	\textit{iii)}    This result follows from standard phase plane analysis.
When $\sigma_1=\sigma_2$, $\mathcal{E}_1$ and $\mathcal{E}_2$ coalesce and are unstable.
All orbits converge to $\mathcal{E}_0$ except those in the stable manifold of the
degenerate saddle  $\mathcal{E}_1=\mathcal{E}_2$.

\textit{iv)}
	 	If $\sigma_1< \sigma_2<S^0$, then
	$\mathcal{E}_0$, $\mathcal{E}_1$, and $\mathcal{E}_2$ all lie in
	$\mathbb{R}_+^2$.
 	From standard local stability analysis, it
	follows that $\mathcal{E}_0$ and $\mathcal{E}_1$ are both
	locally asymptotically stable and $\mathcal{E}_2$ is a saddle.

	Next we show that no nontrivial periodic solutions are possible.   We proceed
	using proof by contradiction. Suppose that
	there exists a nontrivial periodic solution, $\Phi$.
	By \cref{prop:properties} all solutions are bounded in forward
	time and the first
	quadrant is invariant.
	Orbits above the $\dot{S}=0$ nullcline (shown in \cref{fig:Lyapunov} by the dashed curve) move from right to left, and so if they cross the $\dot{X}=0$ nullclines (Shown in \cref{fig:Lyapunov} by dashed vertical lines) cross them from right to left. Orbits below the $\dot{S}=0$ nullcline move from left to right, and so if they cross the $\dot{X}=0$ nullclines, cross them from left to right. 
	 Orbits thatmare to the right  of the line $S=\sigma_2$ that meet the $S'=0$ nullcline cross it downward. Therefore, if  a periodic orbit exists,  it must lie entirely to the left of  $S=\sigma_2$, since if an orbit enters the region below the $S'=0$ nullcline and to the right of $S=\sigma_2$, it is  trapped in that  region and must converge to $\mathcal{E}_0$ by the Poincar\'e-Bendixson Theorem. If it is to the left of $S=\sigma_2$ and above $\dot{S}=0$ it moves to the left.
	Therefore, by the Poincar\'e-Bendixson Theorem and
	standard phase-plane analysis, $\Phi$ must surround
	$\mathcal{E}_1$, and must lie entirely in the set
	$$  \mathcal{G} = \{(S,X):
	\ 0<S<\sigma_2, \ X>0 \}.$$
	Define the Lyapunov function,
\begin{align}\label{eq:lyapunov}
V(S,X) = \int_{\sigma_1}^{S} \frac{(\mu(\xi)-D_2)(S^0-\sigma_1)}{D_2(S^0-\xi)}d\xi+\left[X-X_{\sigma_1}^*-X_{\sigma_1}^*\ln\left(\frac{X}{X_{\sigma_1}^*}\right)\right],
\end{align}
as in \cite{WL:1992}. See \cref{fig:Lyapunov} for
	 	phase portraits of system \cref{eq:Chemostat}
	with typical level sets of the Lyapunov function. Note that \cref{eq:lyapunov} is a valid Lyapunov function for $\mathcal{E}_1$ in $\mathcal{G}$, and
\begin{align}\label{eq:tderivative}
	\dot{V}(S,X) &=
	X(\mu(S)-D_2)\left(1-\frac{\mu(S)(S^0-\sigma_1)}{D_2(S^0-S)}\right),\notag
\end{align}
is non-positive, for all $S\in[0,\sigma_2]$ and $X\geq0$,
i.e.,  for all $S$ in the
closure of $\mathcal{G}$. By examining
	\begin{align*}
		\nabla V(S,X) &= \left(
		\frac{(\mu(S)-D_2)(S^0-\sigma_1)}{D_2(S^0-S)},1-\frac{X_{\sigma_1}^*}{X}\right)={\bf 0},
	\end{align*}
we see that $\mathcal{E}_1$ and $(\sigma_2,X_{\sigma_1}^*)$ are the only
critical points of $V(S,X)$ for which $S\leq \sigma_2$. The point $(\sigma_2,X_{\sigma_1}^*)$ is directly above $\mathcal{E}_2$ in phase
space, since by definition $X_{\sigma_1}^*> X_{\sigma_2}^*$. Notice that $\partial^2 V(S,X)/\partial X^2 = X_{\sigma_1}^*/X^2 > 0$ for all $X>0$, $\partial V(S,X)/\partial S < 0$ for $0<S<\sigma_1$, $\partial V(S,X)/\partial S >0$ for $\sigma_1<S <\sigma_2$, and $\partial V(S,X)/\partial S <0$ for $\sigma_2 < S<S^0$. It follows that 
$\mathcal{E}_1$ is a local minimum of $V(S,X)$, and $(\sigma_2,X_{\sigma_1}^*)$ is a saddle point of $V(S,X)$.
The level set  $V(S,X)=
	V(\sigma_2,X_{\sigma_1}^*) $ is given by
\begin{align*}
	V(\sigma_2,X_{\sigma_1}^*) =
	\int_{\sigma_1}^{\sigma_2}\frac{(\mu(\xi)-D_2)(S^0-\sigma_1)}{D_2(S^0-\xi)}
	d\xi.
\end{align*}
For $S\leq \sigma_2$,   this level set   is a closed curve surrounding $\mathcal{E}_1$ that passes through
the point  $(\sigma_2,X_{\sigma_1}^*)$. (See the bold level set in
\cref{fig:Lyapunov} where  two possible
configurations are shown.) Since $\dot{V}(S(t),X(t))\leq 0$, the set
\begin{align*}
\mathcal{U} = \{(S,X)\in \mathbb{R}_+^2: 0\leq S\leq\sigma_2, V(S,X) \leq V(\sigma_2,X_{\sigma_1}^*)\}
\end{align*}
is a positively invariant set. Since any periodic orbit must lie entirely in $\mathcal{G}$ and must
surround an equilibrium point, it must enter  $\mathcal{U}$.
Since $\mathcal{U}$  positively invariant, it follows that $\Phi$ is
contained entirely in $\mathcal{U}$. By the   minor
variation of  LaSalle's
invariance principle \cite{WL:1992}, any trajectory in $\mathcal{U}$ converges to the largest invariant set in $\mathcal{U} \cap \{(S,X): \dot{V}(S,X) =0\}$.
The only such invariant set is $\mathcal{E}_1$, and hence no nontrivial
periodic orbit exists, a contradiction.

Now, from standard phase plane analysis and the Poincar\'{e}-Bendixson
Theorem, it follows that all orbits converge to one of the three equilibria as $t$ tends to
infinity. The one-dimensional stable manifold of $\mathcal{E}_2$ acts as a separatrix, defining the basins of attraction for $\mathcal{E}_1$ and $\mathcal{E}_0$.
\qed
\end{proof}

\begin{figure}
\begin{centering}
\subfigure[]{\includegraphics[width=0.4\textwidth]{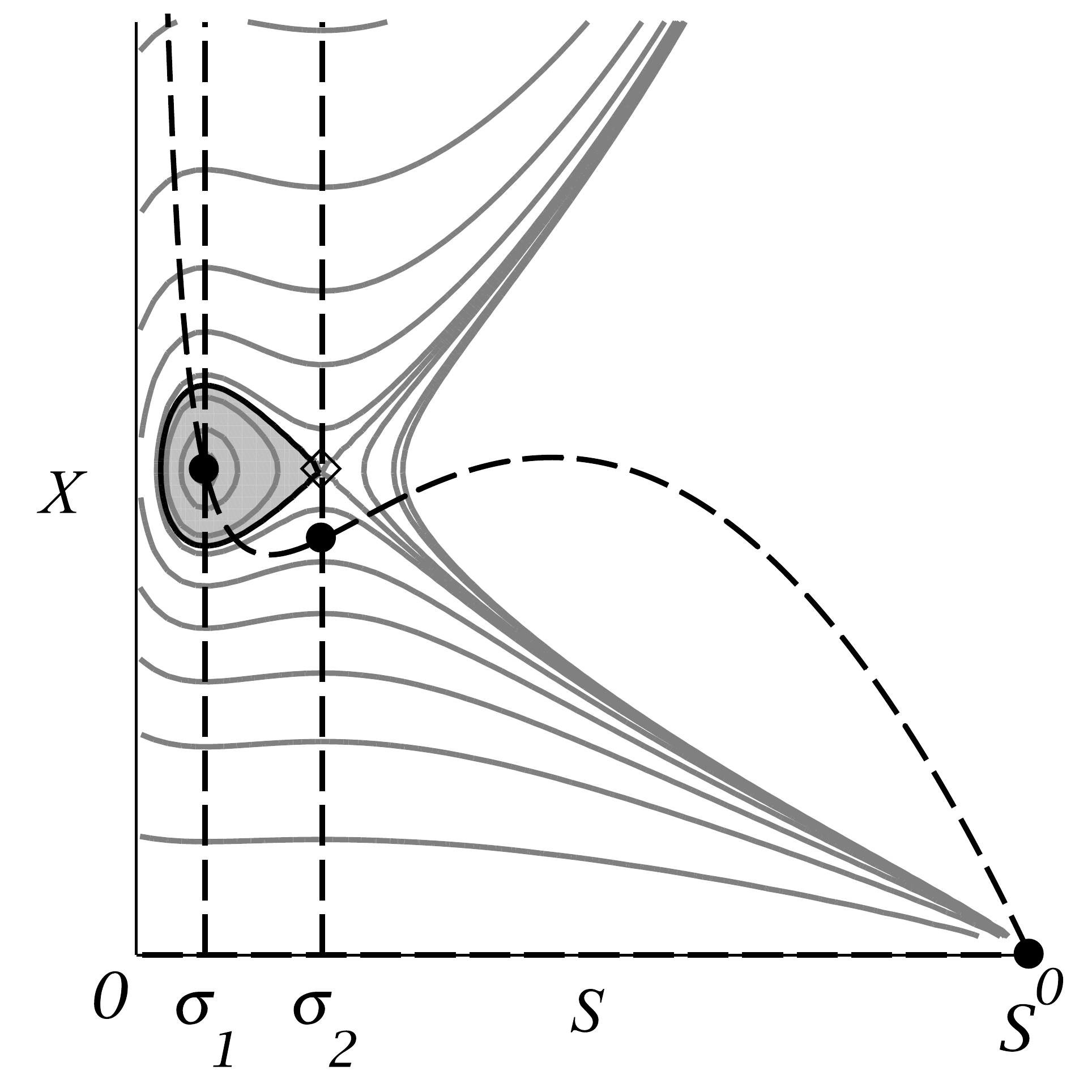}}
\subfigure[]{\includegraphics[width=0.4\textwidth]{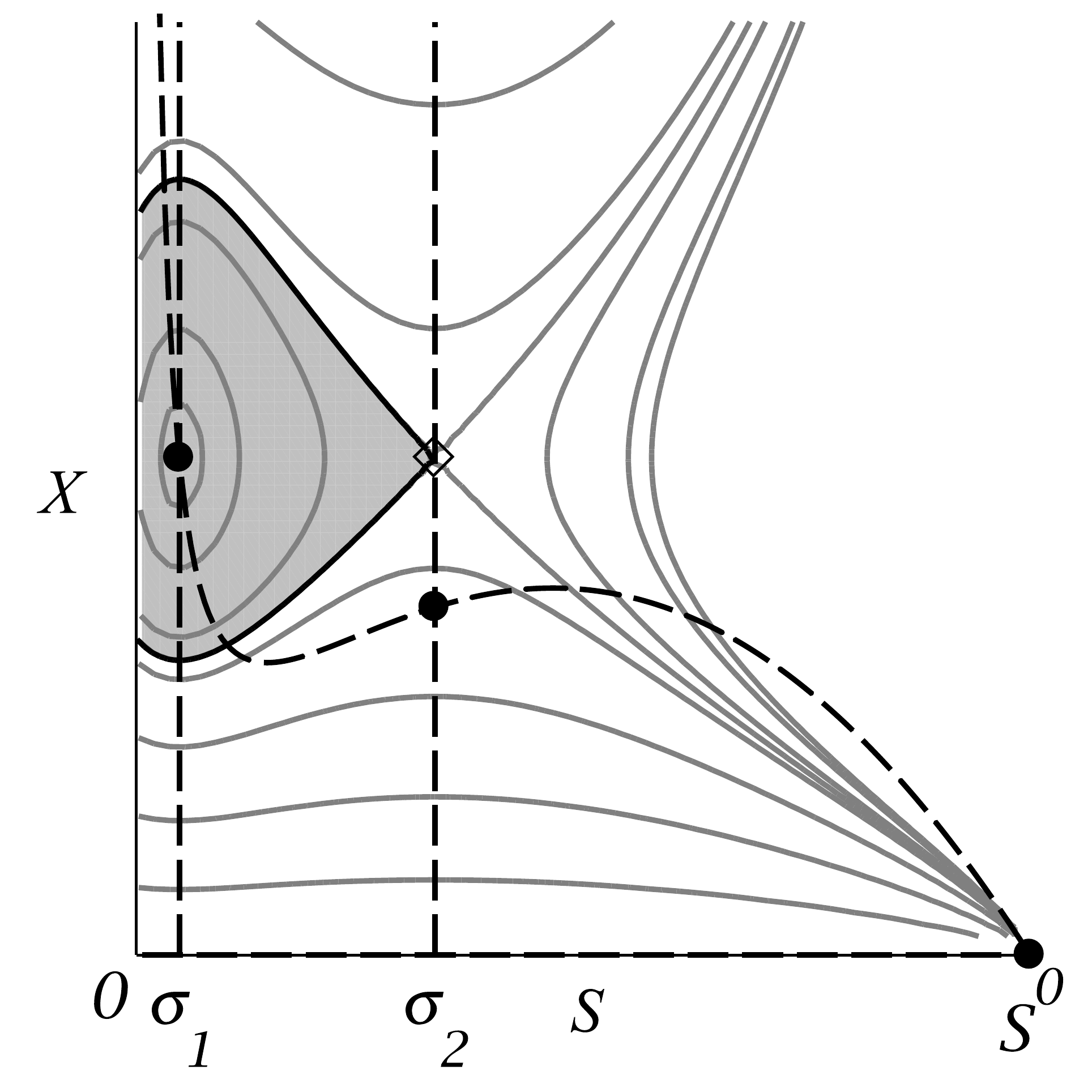}}
\caption{Phase portraits of system 
\cref{eq:Chemostat}
with the level sets of $V(S,X)$. The dashed lines are the nullclines for $X$ and the dashed curve is the nullcline for $S$. The equilibria
	$\mathcal{E}_0$, $\mathcal{E}_1$, and $\mathcal{E}_2$ are
	indicated by filled circles, and the point $(\sigma_2,X_{\sigma_1}^*)$ is
	indicated by an open diamond. The grey curves are the level sets of
	$V(S,X)$. The bold curve is the level set of $V(S,X)$ that passes through the point $(\sigma_2,X_{\sigma_1}^*)$. The set $\mathcal{U}$ is shaded in gray. These figures were produced using Maple \cite{maple}. \label{fig:Lyapunov}}
\end{centering}
\end{figure}

\section{Global Analysis of the Full System
\cref{eq:system} }\label{sec:globalfull}

\begin{thm}\label{thm:fullsystem}
	Consider model \cref{eq:system}.
\begin{enumerate}[label = \roman*)]
	\item If $\lambda_1 \geq S_1^{(0)}$, then $E$ is globally
		asymptotically stable.
	\item If $\lambda_1<S_1^{(0)}$ and $\lambda_2 \leq
		\sigma_1\leq\sigma_2$, then $E_0$ is a globally asymptotically stable equilibrium.
	\item If $\lambda_1<S_1^{(0)}$ and $\sigma_1<\lambda_2 \leq
		\sigma_2$, then $E_1$ is a globally asymptotically stable equilibrium.
	\item If $\lambda_1<S_1^{(0)}$ and $\sigma_1<
		\sigma_2<\lambda_2$, then $E_0$ and $E_1$ are locally
		asymptotically stable, and $E_2$ is a saddle.
		Furthermore any orbit that does not lie on the stable manifold of $E_2$ converges to one of $E_0$ or $E_1$.
\end{enumerate}
\end{thm}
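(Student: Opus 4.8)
The plan is to reduce everything to results already in hand. Part~i) is exactly \cref{prop:washout}. For ii)--iv) assume $\lambda_1<S^{(0)}$ and confine attention to biologically relevant initial data, in particular $X_1(0)>0$ (solutions with $X_1(0)=0$ converge to $E$ by \cref{prop:properties}); for iii) one also needs $X_2(0)>0$, since on the invariant face $X_2=0$ every solution converges to $E_0$. By \cref{prop:properties} all solutions are bounded, so every $\omega$-limit set is nonempty, compact and connected.

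Next I would spell out the reduction behind \cref{prop:limit}. The $(S_1,X_1)$-subsystem \cref{eq:substrates}--\cref{eq:acidogens} is an autonomous planar chemostat whose interior equilibrium $(\lambda_1,X_1^*)$ is globally asymptotically stable when $\lambda_1<S^{(0)}$ (classical, and the special case of \cref{th:chemostat}\,ii) with $\mu=\mu_1$, $\sigma_2=\infty$); hence $(S_1,X_1)\to(\lambda_1,X_1^*)$ and $\mu_1(S_1)X_1\to D_1X_1^*$. Then \cref{eq:ammonia} is an asymptotically autonomous scalar equation with globally attracting equilibrium $\lambda_3$, so $S_3\to\lambda_3$; and $(S_2,X_2)$ is then governed asymptotically by the \emph{planar} limiting system \cref{eq:limitsystem}, i.e.\ the chemostat \cref{eq:Chemostat} with $S^0=\lambda_2$ and break-even values $\sigma_1\le\sigma_2$ solving $\mu_2(\sigma_i,\lambda_3)=D_2$. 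Hypotheses (H3)--(H7) make $\mu(\cdot):=\mu_2(\cdot,\lambda_3)$ satisfy the hypotheses of \cref{th:chemostat}, and the inequalities in ii), iii), iv) are exactly cases i), ii), iv) of \cref{th:chemostat} with $S^0=\lambda_2$.

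The lifting exploits that this limiting system is planar. In cases ii) and iii), \cref{th:chemostat} gives a globally asymptotically stable equilibrium of the planar limiting system ($\mathcal{E}_0$, resp.\ $\mathcal{E}_1$), so by Theorem~1.4 of \cite{Th:1994} every solution of \cref{eq:system} converges to $E_0$, resp.\ $E_1$. In case iv), \cref{th:chemostat}\,iv) gives that the planar limiting system has exactly the equilibria $\mathcal{E}_0,\mathcal{E}_1$ (locally asymptotically stable) and the saddle $\mathcal{E}_2$, no periodic orbits, and no homoclinic loop at $\mathcal{E}_2$ (by the index theorem such a loop would enclose the sink $\mathcal{E}_0$, which lies on the invariant axis $X=0$); hence, by the Poincar\'e--Bendixson trichotomy for asymptotically autonomous planar systems \cite{Th:1994}, the $\omega$-limit set of every solution of \cref{eq:system} is $\{E_0\}$, $\{E_1\}$ or $\{E_2\}$. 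A short Jacobian computation completes the picture: in the ordering $(S_1,X_1,S_3,S_2,X_2)$ the linearization at any $E_j$ is block lower triangular with diagonal blocks the $2\times2$ Jacobian of the $(S_1,X_1)$-chemostat at $(\lambda_1,X_1^*)$ (trace $<0$, determinant $>0$ by (H1), so eigenvalues of negative real part), the scalar $-D$, and the $2\times2$ Jacobian of \cref{eq:limitsystem} at the corresponding planar equilibrium; this makes $E_0$ and $E_1$ locally asymptotically stable and $E_2$ hyperbolic with a one-dimensional unstable and a four-dimensional stable eigenspace. Since $E_2$ is hyperbolic, the set of solutions with $\omega$-limit $\{E_2\}$ is precisely its codimension-one stable manifold $W^s(E_2)$, and every other solution converges to $E_0$ or $E_1$.

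I expect the main obstacle to be case iv): transferring the planar ``separatrix'' picture of \cref{th:chemostat}\,iv) to five dimensions. The delicate steps are (a) collapsing each $\omega$-limit set to a single equilibrium --- combining ``no periodic orbits'' with the structure of $\omega$-limit sets of asymptotically autonomous semiflows and the index argument that excludes a homoclinic loop --- and (b) identifying the solutions attracted to the saddle $E_2$ with its stable manifold, which needs hyperbolicity of $E_2$ and hence (via (H1)) nonsingularity of the $(S_1,X_1)$-block. The rest is bookkeeping, notably fixing the convention under which $E_0,E_1$ are called ``globally'' asymptotically stable even though solutions on the lower-dimensional faces $X_1=0$ and (for iii) $X_2=0$ converge instead to $E$ or $E_0$.
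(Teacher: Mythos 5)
Your overall route is exactly the paper's: part i) is \cref{prop:washout}; for ii)--iv) you pass to the planar limiting chemostat via \cref{prop:limit}, invoke \cref{th:chemostat} for the limit dynamics, and lift back with the theory of asymptotically autonomous systems (Thieme's Theorem~1.4, or alternatively the Butler--McGehee argument of \cite{BW:1985}, which is all the paper itself says). Your extra bookkeeping --- the block lower-triangular Jacobian in the ordering $(S_1,X_1,S_3,S_2,X_2)$ giving local stability of $E_0,E_1$ and hyperbolicity of $E_2$ with a one-dimensional unstable direction, and the identification of the exceptional set with $W^s(E_2)$ --- is correct and in fact more explicit than what the paper writes down.

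The one step that does not hold as stated is your index-theoretic exclusion of a homoclinic loop at $\mathcal{E}_2$ in case iv). Index theory only forces the equilibria enclosed by such a loop to have indices summing to $+1$; the interior sink $\mathcal{E}_1$ has index $+1$ and is a perfectly admissible candidate, so your argument does not rule out a loop surrounding $\mathcal{E}_1$. (The equilibrium $\mathcal{E}_0$ can never be enclosed anyway, since it lies on the invariant axis $X=0$ and any loop lies in $X>0$, so the parenthetical excludes the wrong configuration.) Since ruling out cyclic chains of equilibria of the limit system is precisely what the Thieme trichotomy requires before you may conclude convergence to a single equilibrium, this gap matters for the delicate part you yourself single out. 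It can be closed by reusing the machinery of \cref{th:chemostat}: the same nullcline arguments that confine periodic orbits to $S<\sigma_2$ confine a putative homoclinic orbit, and any loop around $\mathcal{E}_1$ in $\{S\le\sigma_2\}$ must meet the positively invariant sublevel set $\mathcal{U}$ of the Lyapunov function \cref{eq:lyapunov}, whence LaSalle forces forward convergence to $\mathcal{E}_1$, contradicting that the orbit is homoclinic to $\mathcal{E}_2$. Alternatively, follow the paper's second option and argue directly with the Butler--McGehee lemma as in \cite{BW:1985}, which bypasses the homoclinic issue altogether.
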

\begin{proof}
	\textit{i)} was proved in \cref{prop:washout}.
	
	\textit{ii) - iv)} Since each of the $E_i, ~i=0,1,2$ for model \cref{eq:system} corresponds to
$\mathcal{E}_i$ for system \cref{eq:Chemostat},  the
	results follow   from the results for the limiting
	system given in \cref{th:chemostat}, followed by an application of the
	theory for asymptotically autonomous systems, either by using
	Theorem~1.4
in \cite{Th:1994}, or by a direct proof using the Butler-McGehee Lemma (as
	stated in Lemma~5.2 in  \cite{BW:1985} and applied there).
\qed
\end{proof}

	\section{Bifurcation Analysis of Full System~\cref{eq:system}}\label{sec:bifurcations}

As a result of the analysis of the previous two sections, the only possible bifurcations that can occur in \cref{eq:system} are transcritical bifurcations and saddle-node bifurcations.

	\begin{figure*}
	\begin{centering}
	\subfigure[]{\includegraphics[width=.4\textwidth]{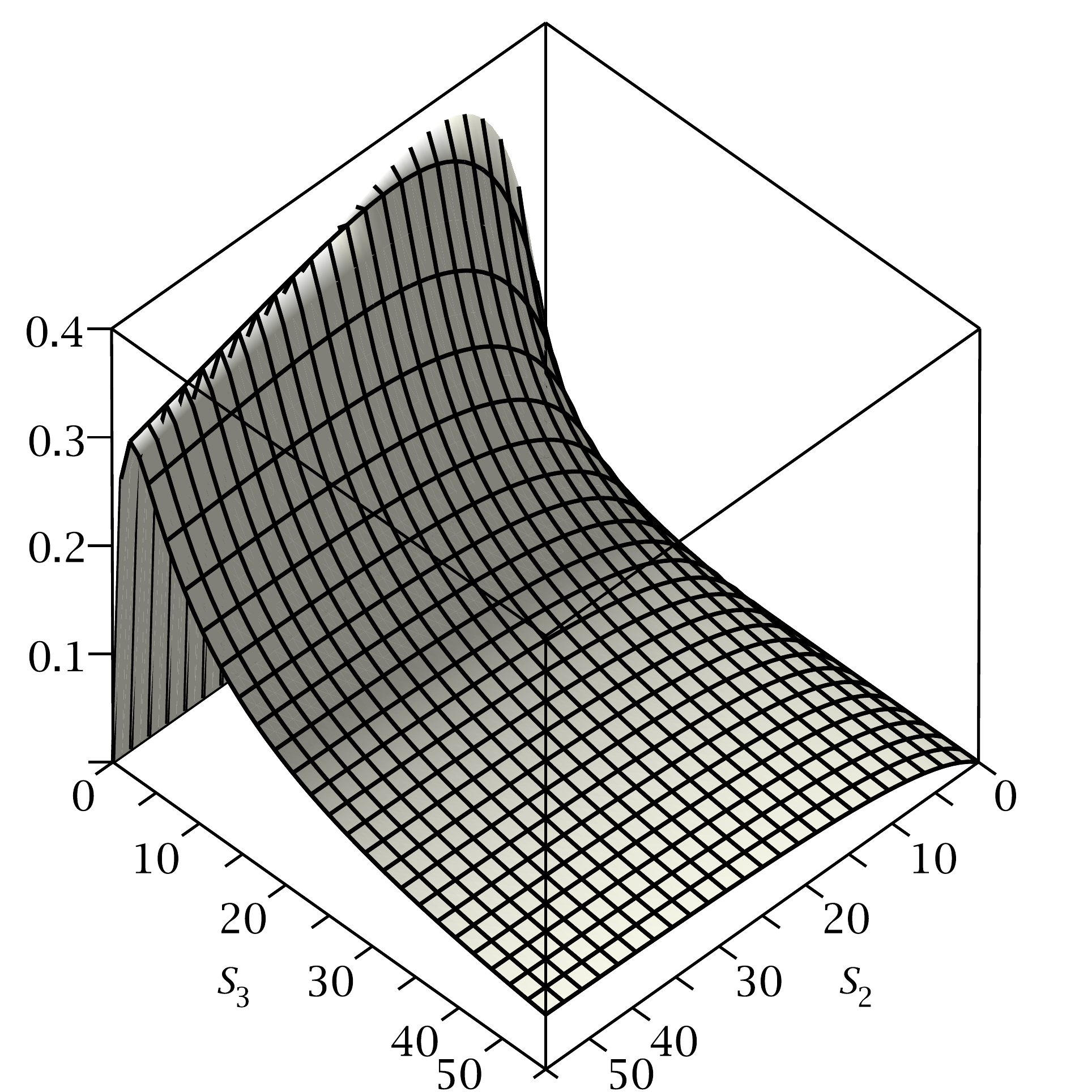}}
	\subfigure[]{\includegraphics[width=.4\textwidth]{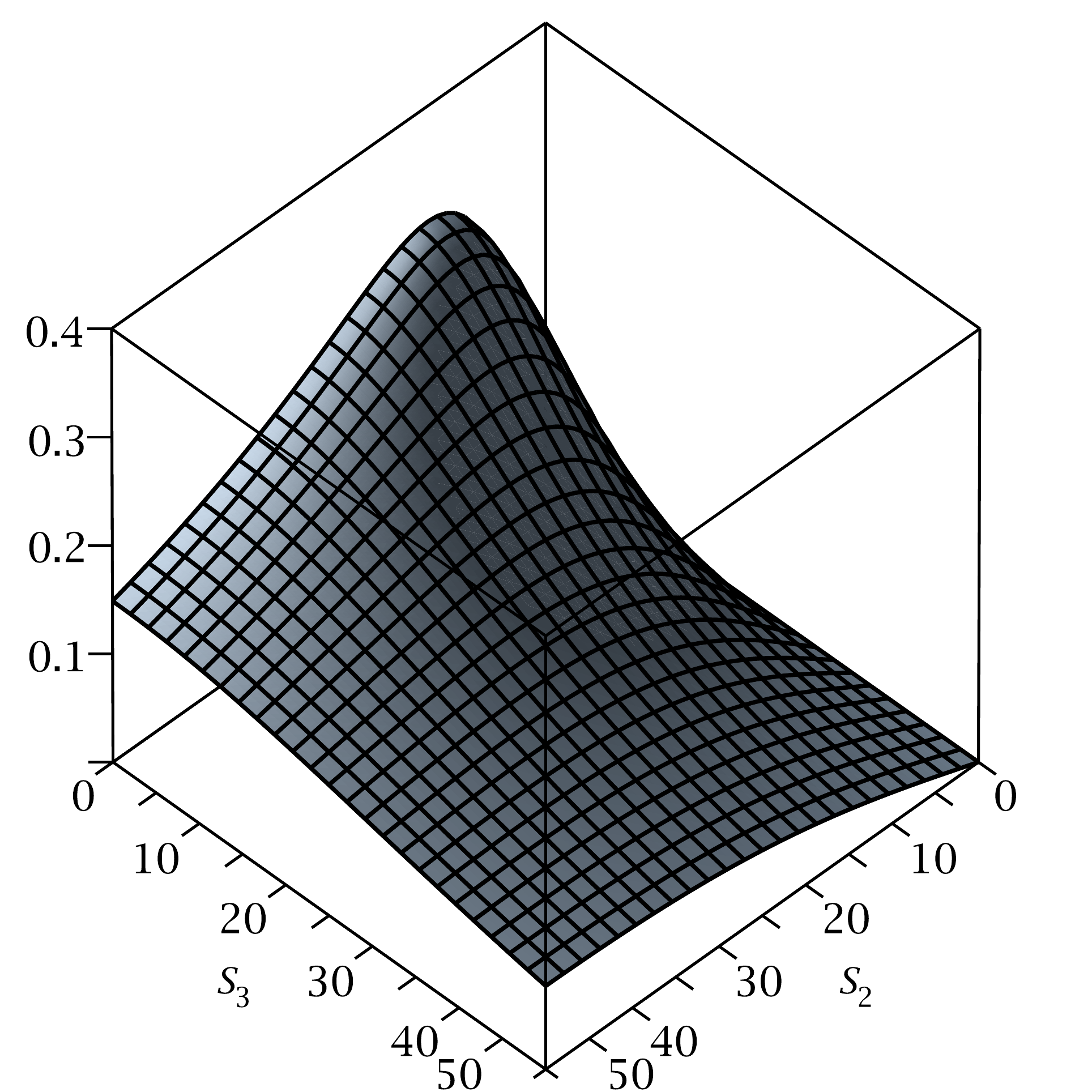}}
	\subfigure[]{\includegraphics[width=.4\textwidth]{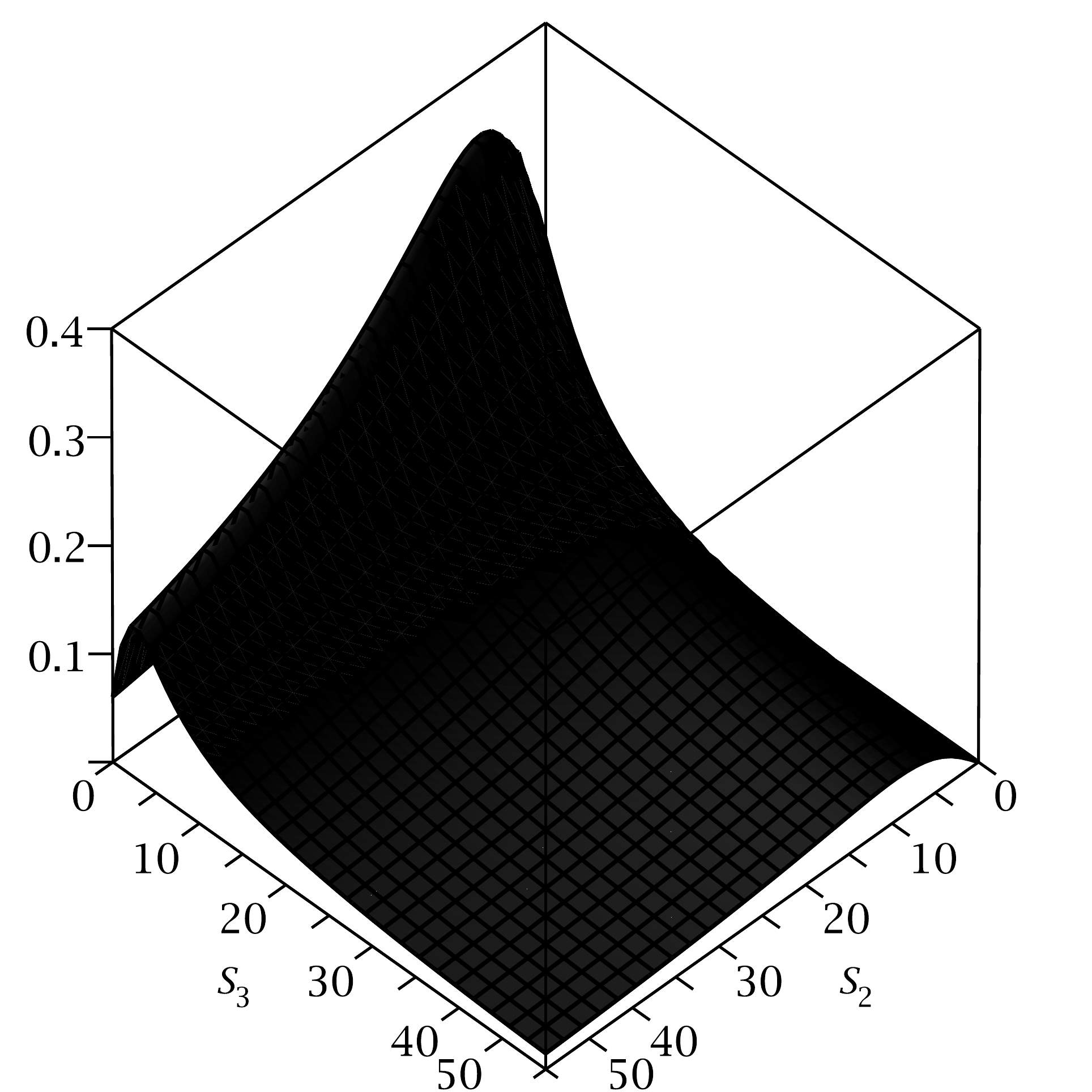}}
	\subfigure[]{\includegraphics[width=.4\textwidth]{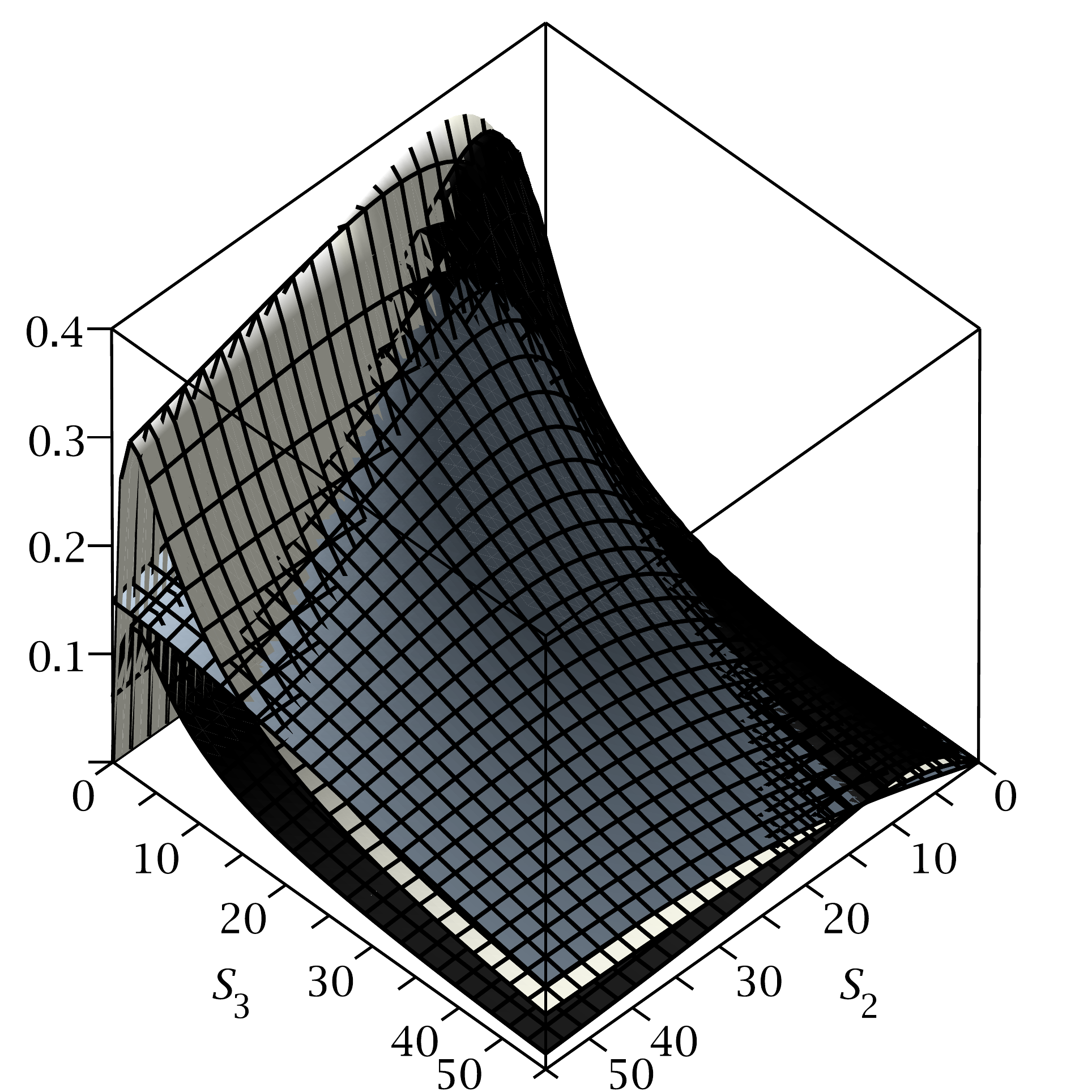}}
\caption{Plots of each prototype in 3-dimensions: (a)
$\proto{I}(S_2,S_3)$, (b) $\proto{II}(S_2,S_3)$, (c)
$\proto{III}(S_2,S_3)$. (d) shows all three prototypes on the same axes
for comparison.  Parameters values used are given in
\cref{tab:parameters}. \label{fig:prototypes}}\end{centering}
\end{figure*}

In \cite{Bornhoft:2013}, a prototype growth function was introduced to capture the inhibition caused by ammonia. This prototype
\begin{equation}
\proto{I}(S_2,S_3) = \frac{m_{I}S_2S_3}{(K+S_2)(S_3+k_1S_2)(1+k_2S_3)},
\end{equation}
has the property that when there is no ammonia, which is toxic to the methanogenic microorganisms, the methanogenic microorganisms are unable to grow. We introduce two additional prototype functions
\begin{subequations}
\begin{align}
 \proto{II}(S_2,S_3) &= \frac{m_{II} S_2}{K+k_1(S_2-S_3)^2+rS_2S_3},\\
 \proto{III}(S_2,S_3)&= \frac{m_{III} S_2(1+S_3)}{(K+k_1S_2+rS_2^2)(a+S_3^2)},
\end{align}
\end{subequations}
that satisfy (H3)-(H7). Both $\proto{II}(S_2,S_3)$ and
$\proto{III}(S_2,S_3)$ satisfy the additional property,
$\proto{}(S_2,0) \geq 0$ with equality only when $S_2=0$ or in the limit
as $S_2 \to \infty$.   For the parameters given in
\cref{tab:parameters}, $\proto{II}(S_2,S_3)$ is strictly decreasing
in $S_3$ and can be thought of as the opposite extreme of
$\proto{I}(S_2,S_3)$. It describes the scenario where ammonia is
strictly inhibitory and the methanogenic microorganisms do best without
any ammonia present. With a different set of parameters this response
function can be unimodal in $S_3$. The difference term in
the denominator of $\proto{II}(S_2,S_3)$ acts as a proxy for the
influence of pH in
the system. Since $S_2$ is acidic, and $S_3$ is basic, we assume that a
large difference between the two concentrations would cause the pH to be outside of the acceptable range for the growth of $X_2$.
The third prototype, $\proto{III}(S_2,S_3)$ covers the middle ground
between $\proto{I}(S_2,S_3)$ and $\proto{II}(S_2,S_3)$; it is unimodal
in $S_3$, like $\proto{I}(S_2,S_3)$, but is non-zero when $S_2>0$ and
$S_3=0$, like $\proto{II}(S_2,S_3)$.

The  substrate input concentration, $S^{(0)}$, and
dilution rate, $D$, are the two parameters that the operator of
a reactor has the ability to control.
In our  bifurcation analysis, we focus on how the dynamics of the full system
\cref{eq:system}  change when these parameters vary.
We note that  $\lambda_2$ and $\lambda_3$ depend on $S^{(0)}$ (see
\cref{eq:lambda2,eq:lambda3}),
and hence,
  $\max_{S_2>0}\mu_2(S_2,\lambda_3)$ changes when $S^{(0)}$ changes. From the
stability analysis in \cref{sec:stability}, two scenarios are
possible. In the first scenario (see \cref{fig:bifurcation}), there
is a transcritical bifurcation when $\lambda_2 = \sigma_1$, a transcritical
bifurcation when $\lambda_2 = \sigma_2$, and a saddle-node bifurcation
when $\max_{S_2>0}\mu_2(S_2,\lambda_3) = D_2$. In the second scenario (see
\cref{fig:bifloop}) there are two saddle-node bifurcations as
$\lambda_3$ increases. This sequence of bifurcations occurs because
 $\proto{I}(S_2,S_3)$ and $\proto{III}(S_2,S_3)$
are unimodal in $S_3$. With the parameters listed in \cref{tab:parameters}, the
second prototype, $\proto{II}(S_2,S_3)$, is strictly decreasing in
$S_3$, and so only the first scenario is possible. The other two
prototypes, $\proto{I}(S_2,S_3)$ and $\proto{III}(S_2,S_3)$, are unimodal
in $S_3$, and  either scenario is possible.

In the  bifurcation diagrams
  shown in \cref{fig:bifurcation,fig:bifloop},
\begin{equation} \mu_1(S_1) = \frac{\kappa S_1}{r_1+S_1},
\end{equation} and the parameters are the ones used in \cite{Bornhoft:2013}. Any  parameters
not given
in   \cite{Bornhoft:2013} (e.g.,
$m_{II}$, $m_{III}$, $r$, and $a$),  were chosen so that the functions, $\proto{II}$ and
$\proto{III}$, closely resemble the function $\mu_{2,I}$ given in
\cite{Bornhoft:2013}.   See \cref{tab:parameters} for the parameter
values used.   A plot of each function is shown in \cref{fig:prototypes}.
The bifurcation diagrams in \cref{fig:bifurcation} are qualitatively
similar for each uptake function. The bifurcation diagrams corresponding
to $\proto{II}(S_2,S_3)$ and $\proto{III}(S_2,S_3)$  resemble the
diagram for ADM1 in \cite{Bornhoft:2013} more closely than the diagram for $\proto{I}(S_2,S_3)$.

\def\arraystretch{1.35}%
\begin{table}
{\footnotesize
\begin{center}
\caption{The parameter values  used in the
	following bifurcation diagrams are the ones used in \cite{Bornhoft:2013},
	except  $m_{II},
	m_{III}, r,$ and $a$, which were chosen so that the response
	functions $\mu_{2,II}$ and $\mu_{2,III}$ closely resemble
	$\mu_{2,I}$.  The parameter
     $D$ is the bifurcation parameter in
	\cref{fig:bifa,fig:bifc,fig:bife}, and $S^{(0)}$ is the bifurcation
	parameter in \cref{fig:bifb,fig:bifd,fig:biff}.
	\label{tab:parameters} }
\begin{tabular}{|c|c|c|c|c|c|c|c|c|c|} \hline
Parameter & $S^{(0)}$& $D$ & $D_i, i=1,2$ & $\kappa$ & $K$ & $k_1$ & $k_2$ & $r$ & $r_1$ \\ \hline
Value & 50 & 0.15&0.16 &1.2 & 9.28 &0.05 &0.5 &0.1 &7.1\\\hline
\end{tabular}
\begin{tabular}{|c|c|c|c|c|c|c|c|c|} \hline
Parameter & $m_I$ & $m_{II}$& $m_{III}$&$y_1$&$y_2$&$y_3$&$y_4$& $a$     \\\hline
Value &1.64&0.4&3& 42.14& 116.5& 268 & 1.165& 12  \\ \hline
\end{tabular}
\end{center}}
\end{table}

In the diagrams where $D$ was used as the bifurcation parameter
(\cref{fig:bifa,fig:bifc,fig:bife}), there are three clear regions. In the first
region when $0<D<D^*_1$, only the equilibria $\mathcal{E}_1$ and
$\mathcal{E}_0$ lie in the positive cone, $\mathcal{E}_1$ is globally
asymptotically stable and therefore all non-stationary solutions
converge to $\mathcal{E}_1$. When $D=D^*_1$ the washout equilibrium
$\mathcal{E}_0$ undergoes a transcritical bifurcation. In the second
region, where $D^*_1<D<D_2^*$ all three equilibria lie in the positive
cone. $\mathcal{E}_1$ and $\mathcal{E}_0$ are locally asymptotically
stable and $\mathcal{E}_2$ is a saddle. All solutions (except the stable
manifold of $\mathcal{E}_2$) converge to one of $\mathcal{E}_1$ or $\mathcal{E}_0$, depending on initial conditions. When $D=D_2^*$, the two interior equilibria $\mathcal{E}_1$ and $\mathcal{E}_2$ undergo a saddle-node bifurcation. In the third region, where $D_2^* < D$ only $\mathcal{E}_0$ exists, and it is globally asymptotically stable. Therefore all solutions tend to $\mathcal{E}_0$.

\begin{figure*}
\centering
	\subfigure[\label{fig:bifa}]{\includegraphics[width=.3\textwidth]{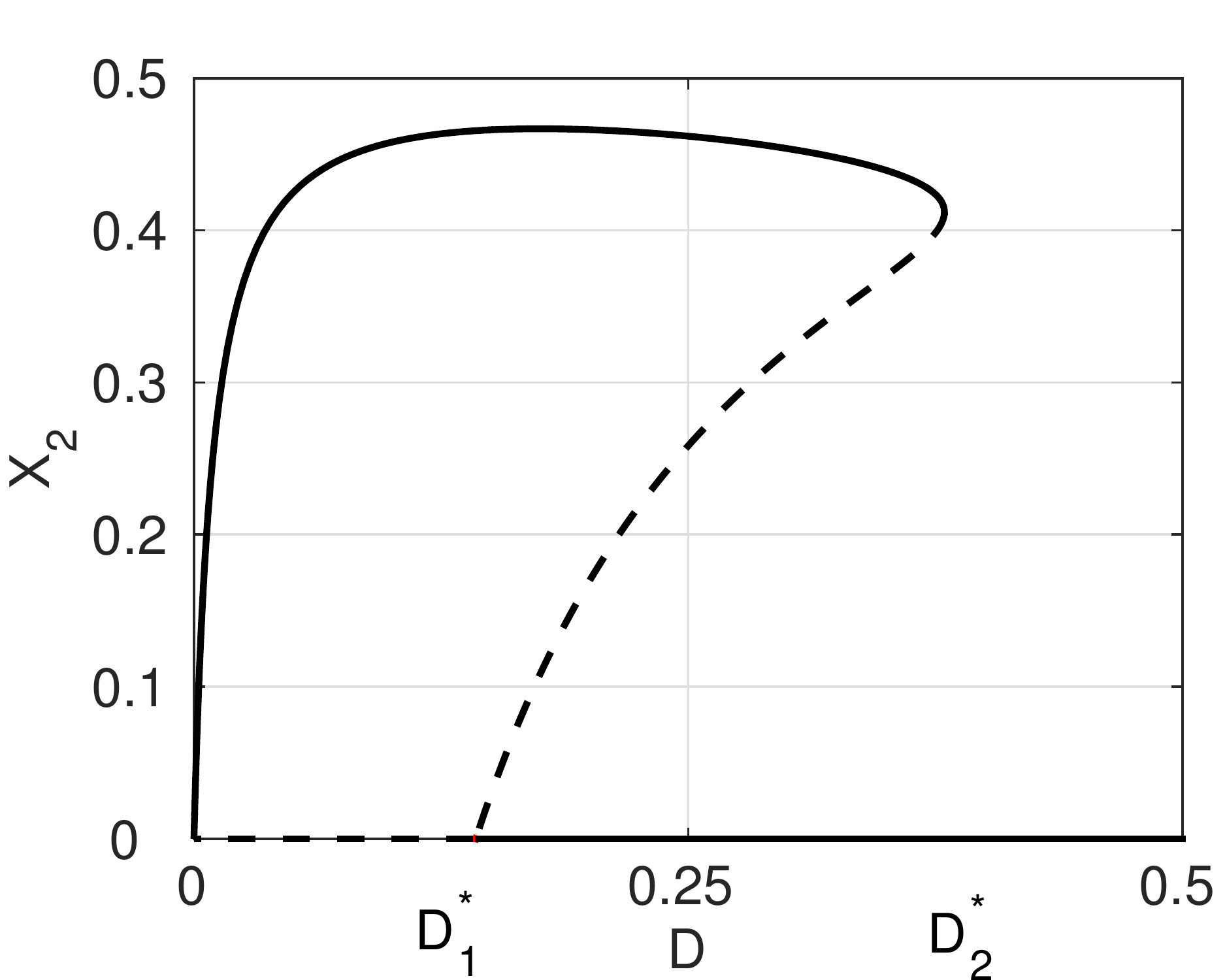}}
	\subfigure[\label{fig:bifb}]{\includegraphics[width=.3\textwidth]{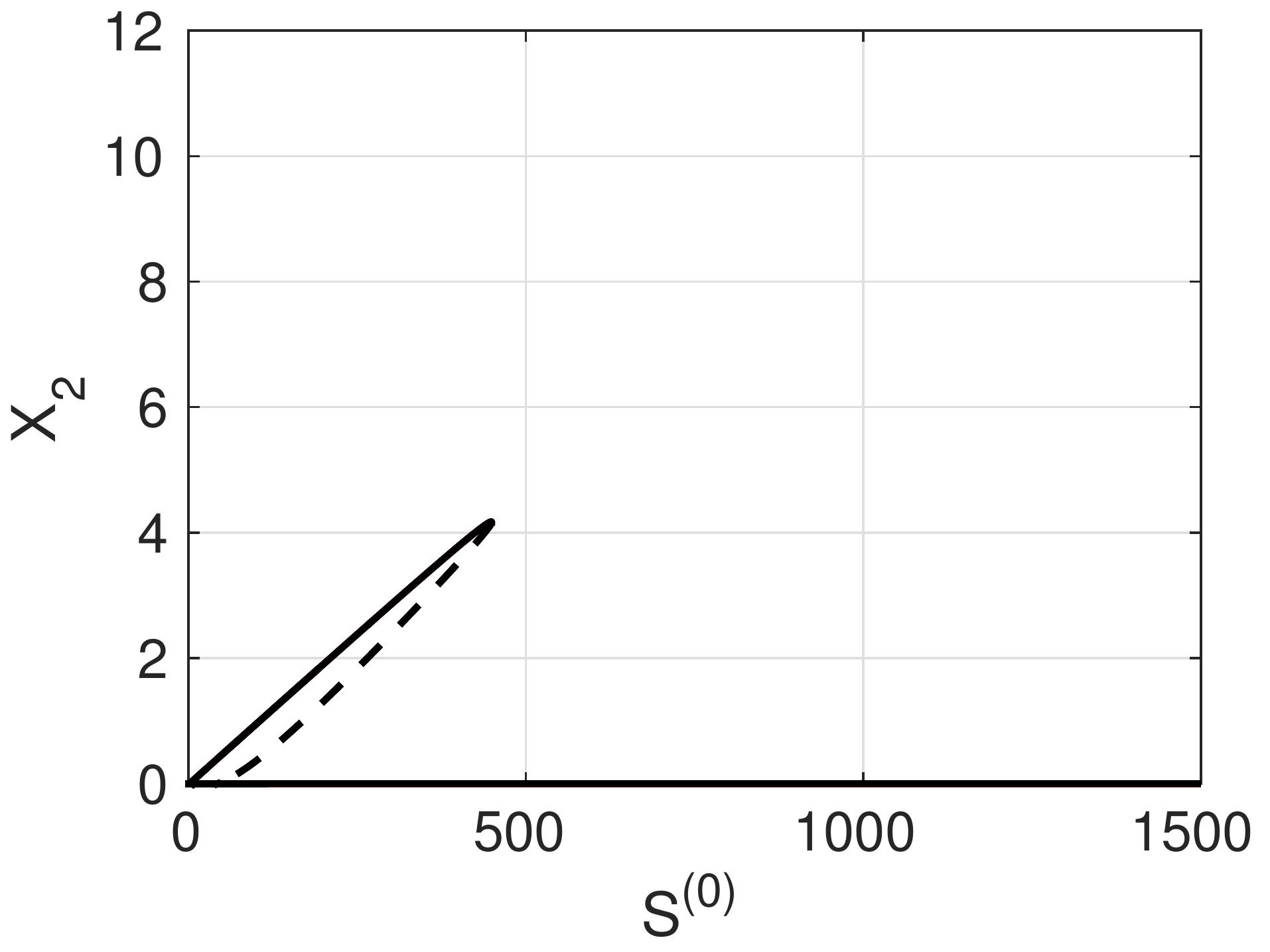}}
	\subfigure[\label{fig:bifc}]{\includegraphics[width=.3\textwidth]{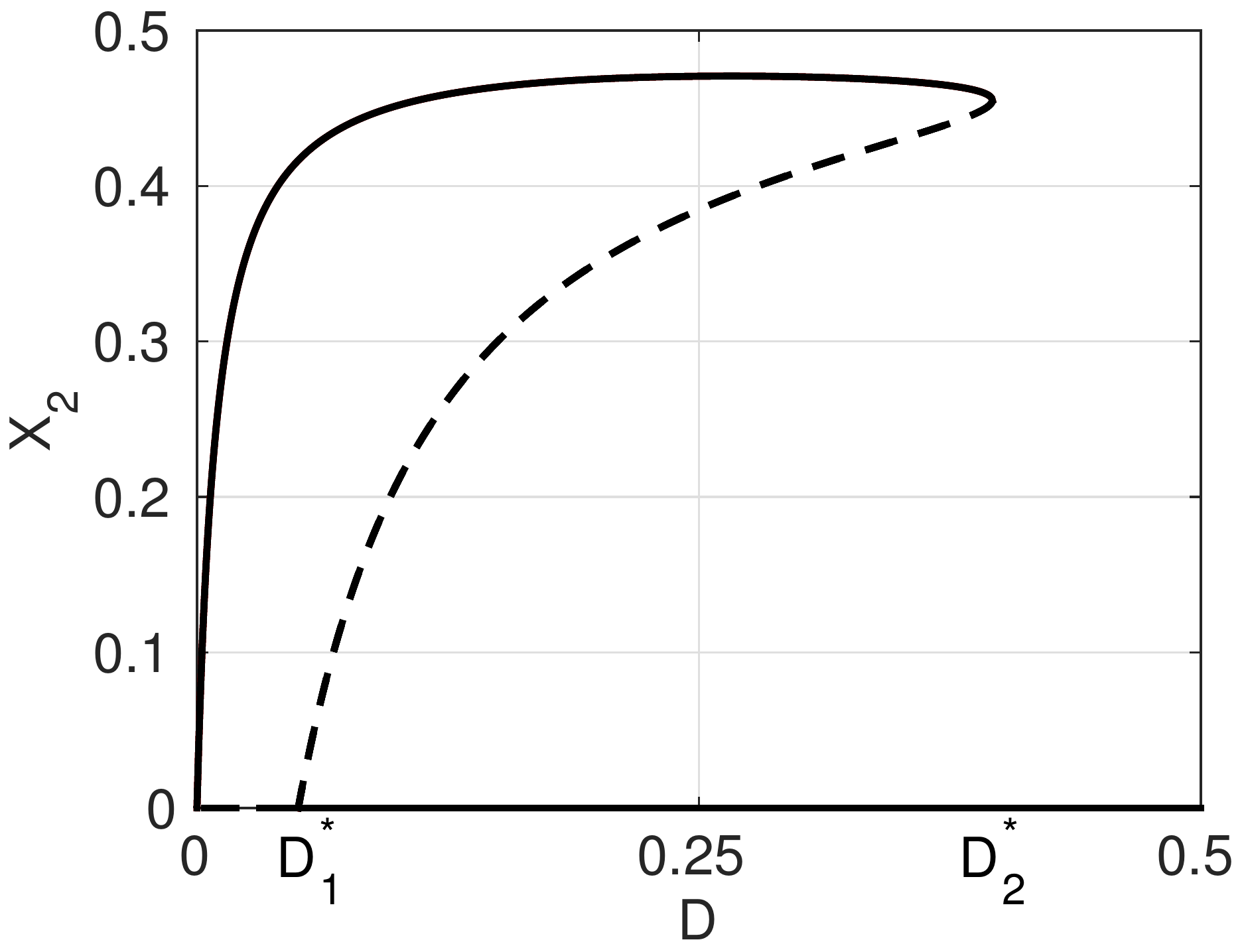}}
	\subfigure[\label{fig:bifd}]{\includegraphics[width=.3\textwidth]{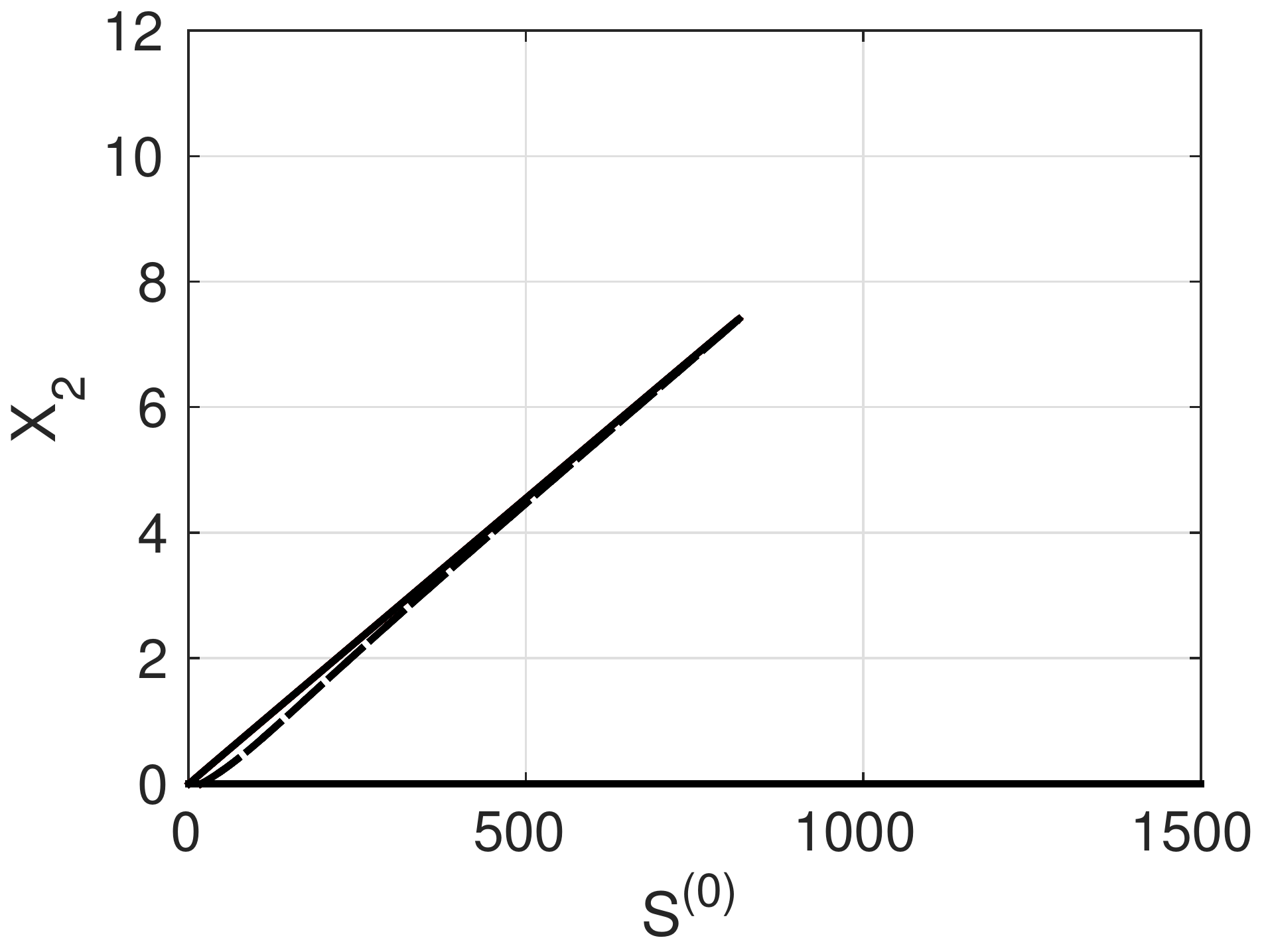}}
	\subfigure[\label{fig:bife}]{\includegraphics[width=.3\textwidth]{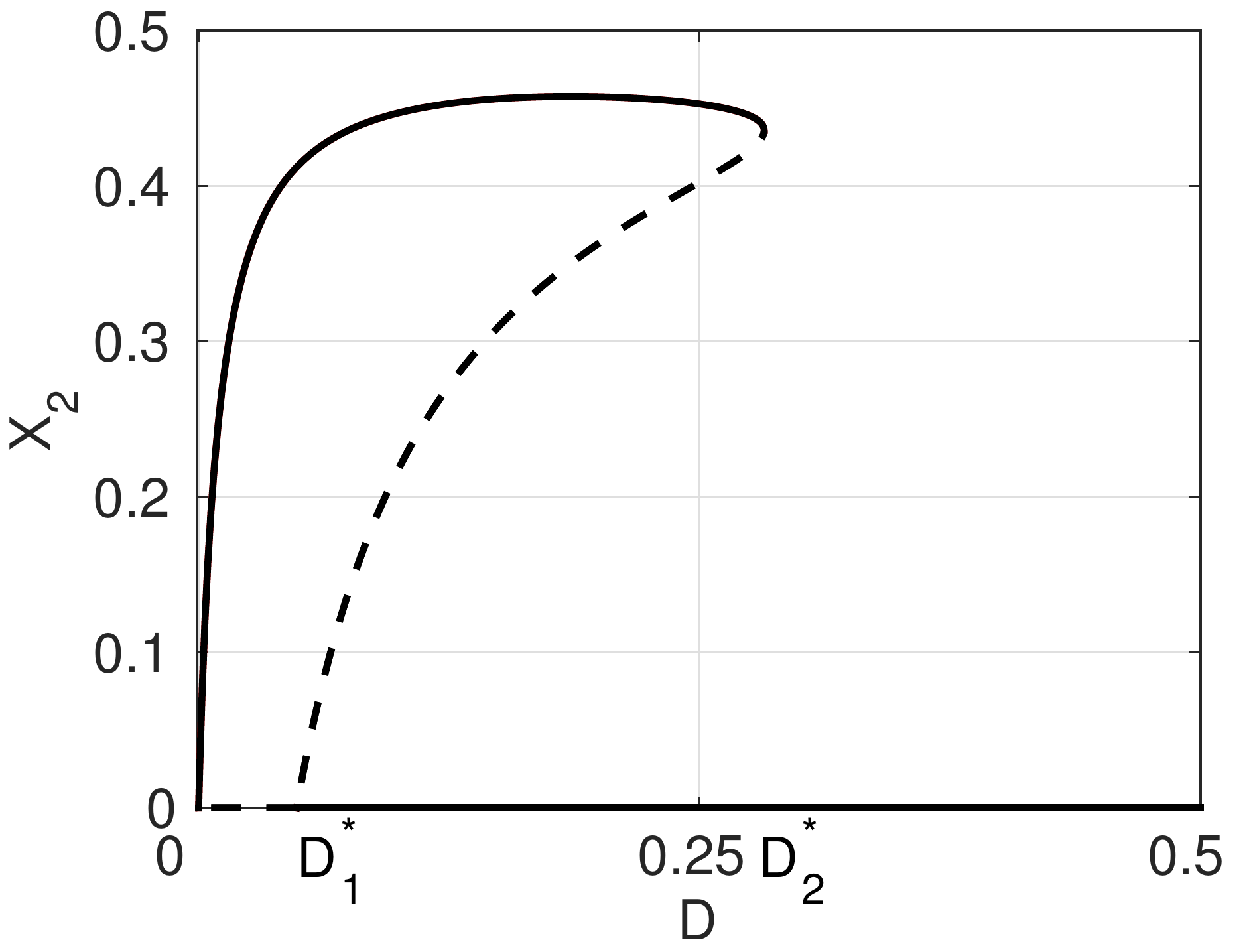}}
	\subfigure[\label{fig:biff}]{\includegraphics[width=.3\textwidth]{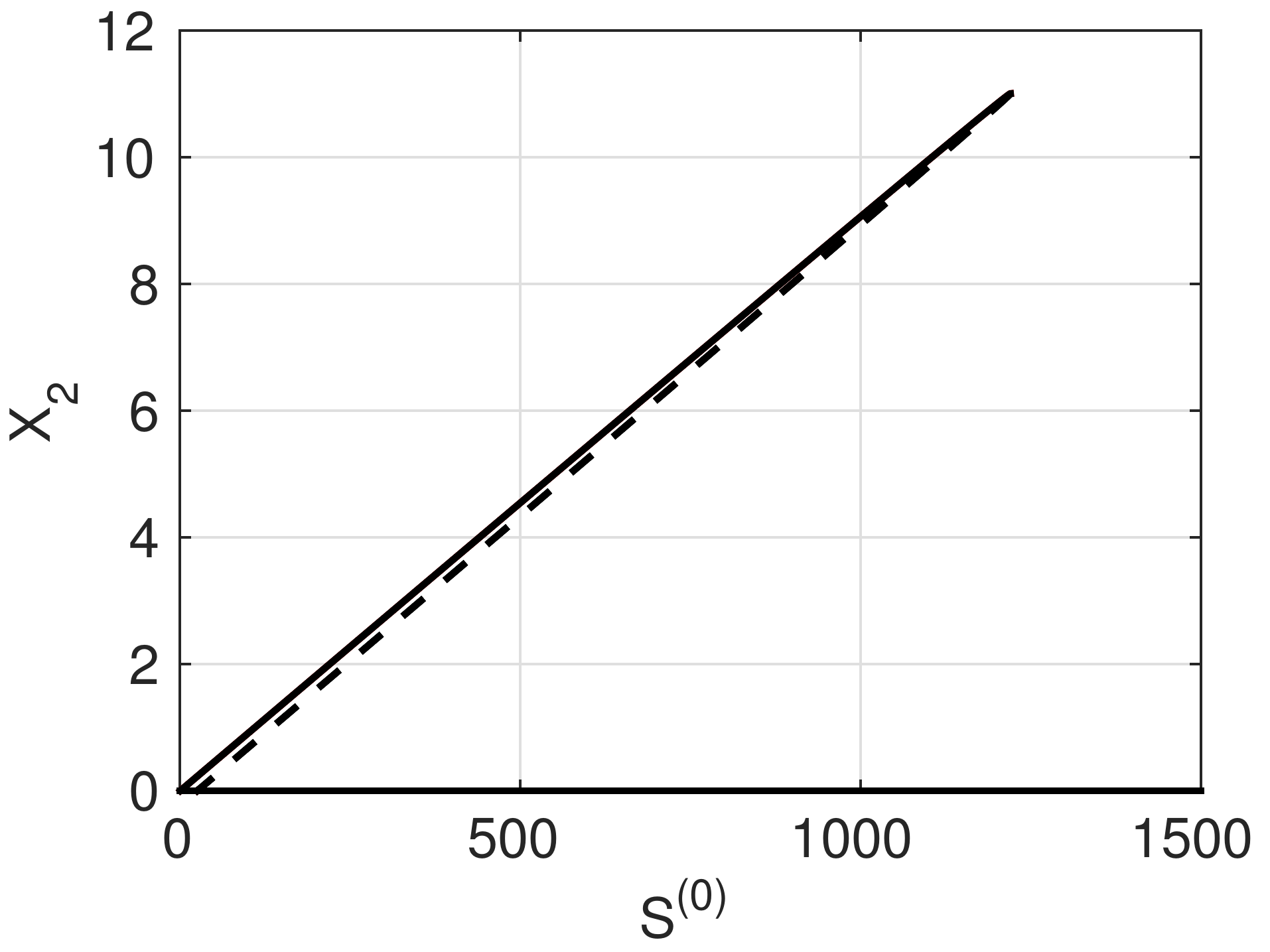}}
\caption{Bifurcation diagrams with bifurcation parameter $D$ in (a),
	(c), (e)  and $S^{(0)}$ in (b),(d), (f) and response function
	$\mu_{2}(S_2,S_3)=\mu_{2,I}(S_2,S_3)$ in (a) and (b), and
	$\mu_{2}(S_2,S_3)=\mu_{2,II}(S_2,S_3)$ in (c) and (d) and
	$\mu_{2}(S_2,S_3)=\mu_{2,III}(S_2,S_3)$ in (e) and (f).
	The solid curves correspond to the $X_2$ (methane
	producing)  coordinate of the
	asymptotically stable equilibrium of model \cref{eq:system}, and the dashed curves
	correspond to  the $X_2$ component of unstable equilibria.} \label{fig:bifurcation}
\end{figure*}
\begin{figure*}
\centering
	\subfigure[]{\includegraphics[width=.4\textwidth]{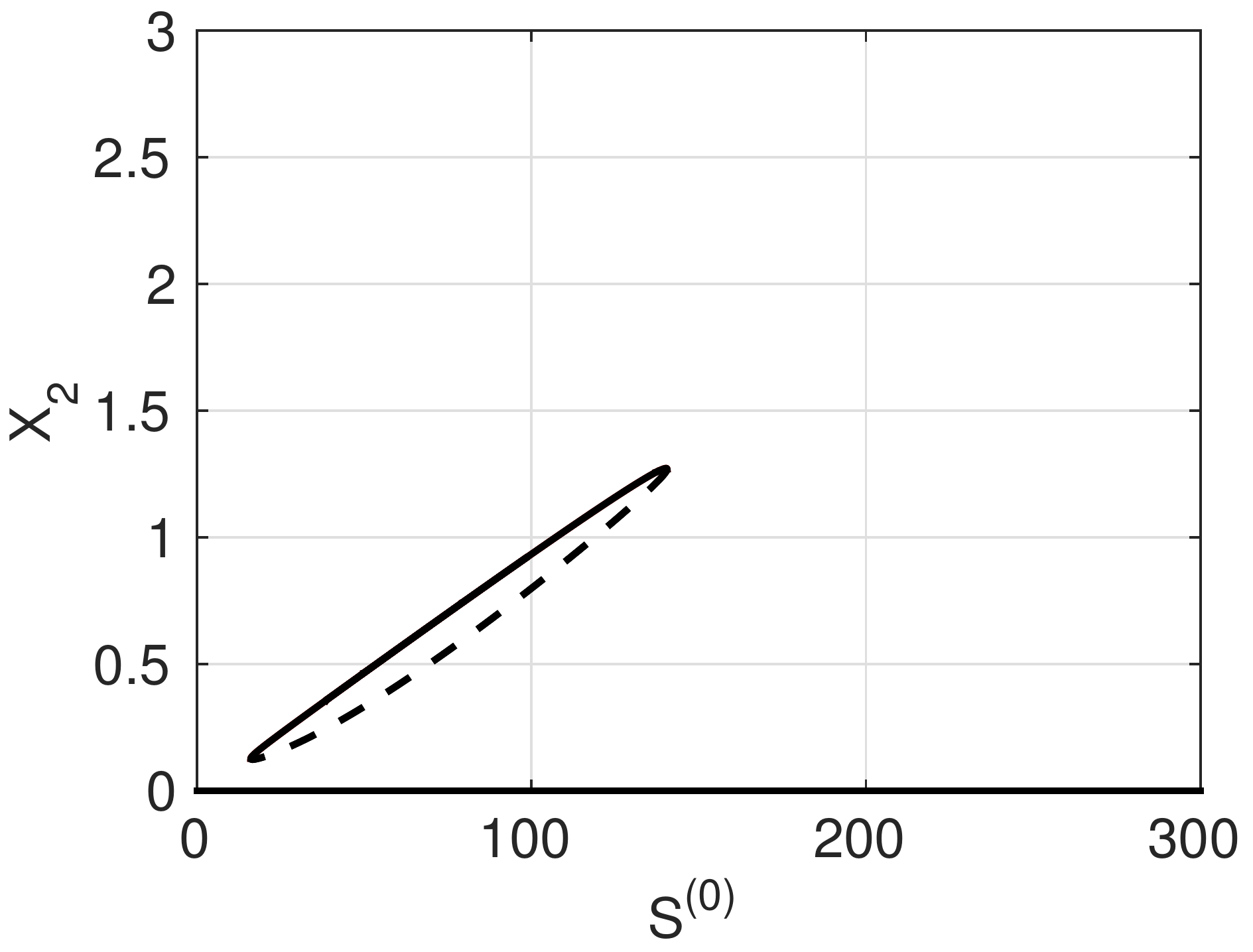}}
	\subfigure[]{\includegraphics[width=.4\textwidth]{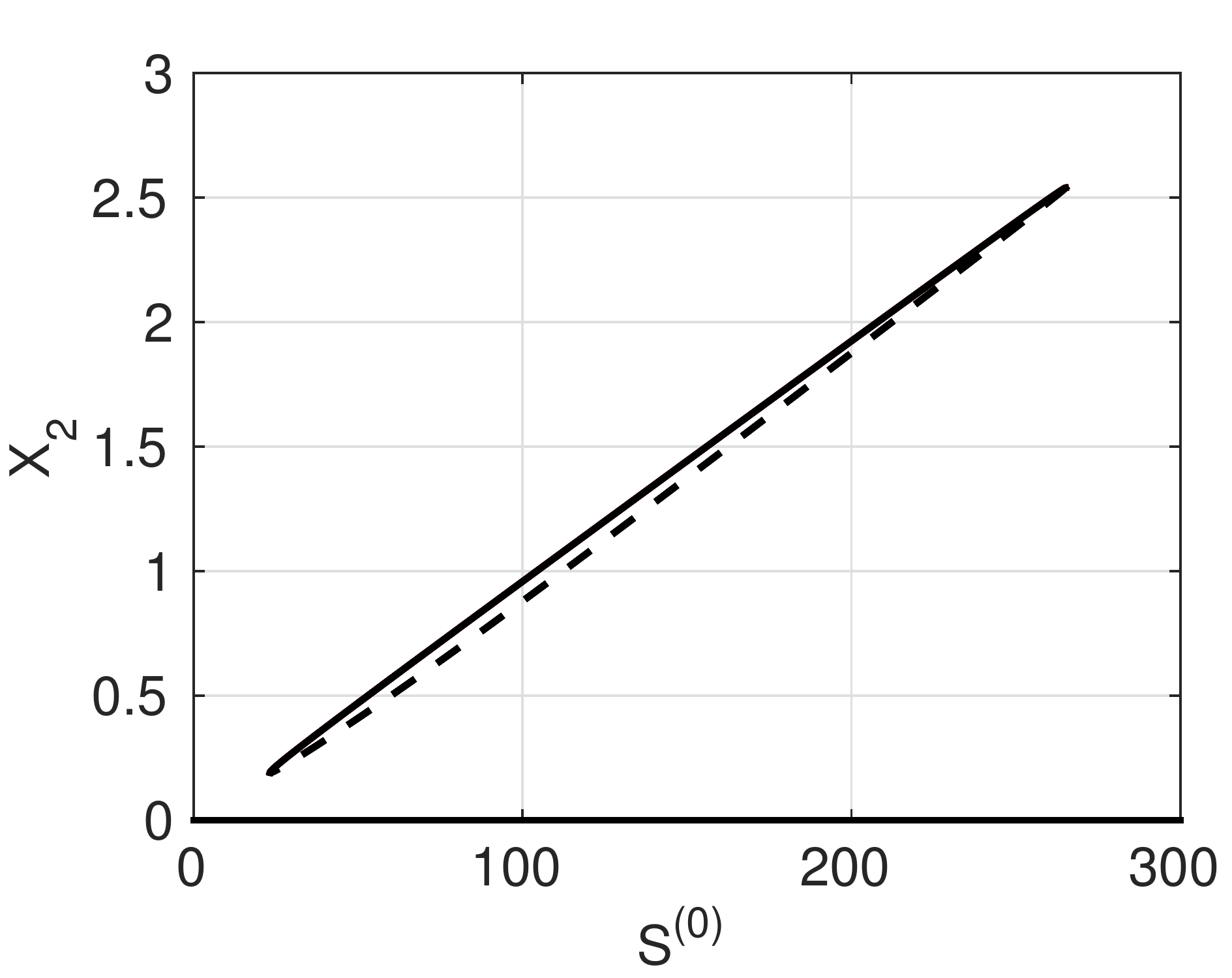}}
	\caption{Bifurcation diagrams with bifurcation parameter
  $S^{(0)}$ illustrating two saddle-node
	bifurcations.  (a)   $\mu_2(S_2,S_3) =\proto{I} (S_2,S_3)$.
	(b)  $\mu_2(S_2,S_3) =\proto{III}(S_2,S_3)$. The
	solid curves correspond to the $X_2$ component of  locally asymptotically stable equilibria, and
	the dashed curves correpond to the $X_2$ component of  unstable
	equilibria for model \cref{eq:system}. The
	analogous 
	diagrams for 
	$\proto{II}(S_2,S_3)$ do not exhibit this behaviour.}\label{fig:bifloop}
\end{figure*}

\section{Stochastic Simulations of the Full
System~\cref{eq:system} }\label{sec:stochastics}
We describe   two stochastic algorithms to capture
stochasticity  in the parameters.   For comparison we also include simulations done with Gillespie's stochastic simulation algorithm \cite{Gillespie:1977} and the adaptive tau-leaping algorithm \cite{Gillespie:2001}.

The simulations in this section are all done for the full system \cref{eq:system} with
\begin{equation}
\mu_1(S_1)=  \frac{\kappa S_1}{r_1+S_1} \nonumber
\end{equation} and $\mu_2(S_2,S_3) = \proto{III}(S_2,S_3)$.
 The parameters are listed in \cref{tab:parameters}. With these parameters, the
deterministic system  has two stable equilibria, $E_0$ and
$E_1$, and so the long-term behaviour of the solutions is initial condition dependent.  If
\begin{equation} \label{eq:IC_E1}
	(S_1(0),X_1(0),S_2(0),S_3(0),X_2(0))=(50,0.4,0,0,1.16),
\end{equation}
 the solution of the deterministic system converges to $E_1$ (see
\cref{tab:equilibria}), and if
\begin{equation} \label{eq:IC_E0}
	(S_1(0),X_1(0),S_2(0),S_3(0),X_2(0))=(50,0.4,0,0,1.14),
\end{equation}
 the solution of the deterministic system converges to $E_0$ (see
\cref{tab:equilibria}). Thus,  for one set of initial conditions, the
deterministic system \cref{eq:system} predicts that
  the methanogens survive  and produce biogas, and
    for the other it predicts that they do not.
  These initial conditions simulate the start
up and inoculation of the reactor.  The only difference
between  the  initial conditions in  \cref{eq:IC_E1,eq:IC_E0}
is the value of $X_2(0)$. Both initial conditions are close to the
separatrix.
We only include figures that show the population
of  methanogens, $X_2(t)$,
    to compare the effect of
stochasticity on biogas production, which  only occurs
if $X_2$ is positive.  In simulations (not shown) with initial
conditions   farther  from the separatrix, solutions converged to the
same equilibrium predicted by the deterministic model every time.
The figures were produced using Matlab \cite{MATLAB:2015}.

\begin{table}[tbhp]
{\footnotesize
\caption{Equilibria for   system \cref{eq:system} with parameters
	 given in \cref{tab:parameters}, with
	 $\mu_{2}(S_2,S_3) = \proto{III}(S_2,S_3)$.  \label{tab:equilibria}}
\begin{center}
\begin{tabular}{|l|l|}
	\hline
	\multicolumn{2}{|c|}{Equilibria} \\
\hline
$E$ & (50, 0, 0, 0, 0)\\
\hline
$E_0$ & (1.092, 1.088, 135.2, 1.352, 0)  \\
\hline
$E_1$ & (1.092, 1.088, 3.304, 1.352, 0.4614)\\
\hline
$E_2$ & (1.092, 1.088, 28.09, 1.352, 0.3747)\\
\hline
\end{tabular}
\end{center}
}	
\end{table}

 We use two different approaches to study the behavior of \cref{eq:system} under stochastic perturbations. The first method is meant to model fluctuations in the parameters due, for example, to fluctuations in the
environment. The second method captures the effect of potential mutations in members of the
populations. In both schemes, multiple parameters are perturbed at randomly chosen times. Because we are varying many parameters, some of which appear in the non-linearities of the system, we are unable to write the resulting stochastic equations as a linear stochastic perturbation of
the original system as was done in \cite{Wang:2016,Xu:2013} for chemostat models. In \cite{Wang:2016}, the dilution rate and in \cite{Xu:2013}, the dilution rate and the decay rates are assumed to vary stochastically.
In one
algorithm the perturbations are from the mean and in the other the
perturbations are accumulative.
Between perturbations the system is treated as a
deterministic system that is solved numerically.

Let $\tau_0=0$ and
$\tau_{i+1} = \tau_{i} -\ln(T_i)$, where $T_i\in(0,1)$ is a uniformly
distributed random variable. Therefore, $\{\tau_i\}$ describes a
monotone increasing sequence of times.  By applying the inverse sampling transform, we see that the difference $\tau_{i+1}-\tau_{i}$ is exponentially distributed with unit mean and variance. Let $P_0$ be a  row
vector  containing the parameter values present in
the  deterministic system that are  affected by
stochasticity.    At each  randomly chosen time  $\tau_i$,  these parameters values
are updated  to obtain a sequence of  vectors
$\{P_{\tau_i}\}_{i=1}^{\infty},$ and we set the
parameters  equal to
$P_t=P_{\tau_i}$, for $t \in[\tau_i,\tau_{i+1})$.

In the first stochastic  algorithm, which we call the
environmental  based fluctuation algorithm,  we assume that the parameter
values are influenced by the environment.  As such, they cannot be perfectly
controlled and so at random intervals of time they undergo small
random changes.  However, the parameters remain near their
mean values given in the row vector $P_0$.
Following this
interpretation, we let
$N_t$ be a diagonal matrix
with entries given by Gaussian random
variables with mean $\mu=1$ and standard deviation, $\sigma$.
 We assume that $N_t = N_{\tau_i}$ for $t\in[\tau_i,\tau_{i+1})$. Then
\begin{equation}
	P_{\tau_{i+1}} = P_0 N_{\tau_i}.
\end{equation}

\cref{fig:Stochastic_a,fig:Stochastic_b} show five
simulations using the
environmental  based algorithm   with
 $\sigma=\frac{1}{10}$ and
$$P_0 = [S_0,D,y_1,y_2,y_3,y_4,K,k_1,m_{II},r]$$
In \cref{fig:Stochastic_a} the initial conditions are
given by \cref{eq:IC_E0} and the solution to the
deterministic system converges to $E_0$. In  \cref{fig:Stochastic_b},  the initial conditions are
given by \cref{eq:IC_E1} and solutions converge to $E_1$. The solutions
for the deterministic system are shown in bold for comparison.

\begin{figure*}
\centering
	\subfigure[]{\includegraphics[width=.4\textwidth]{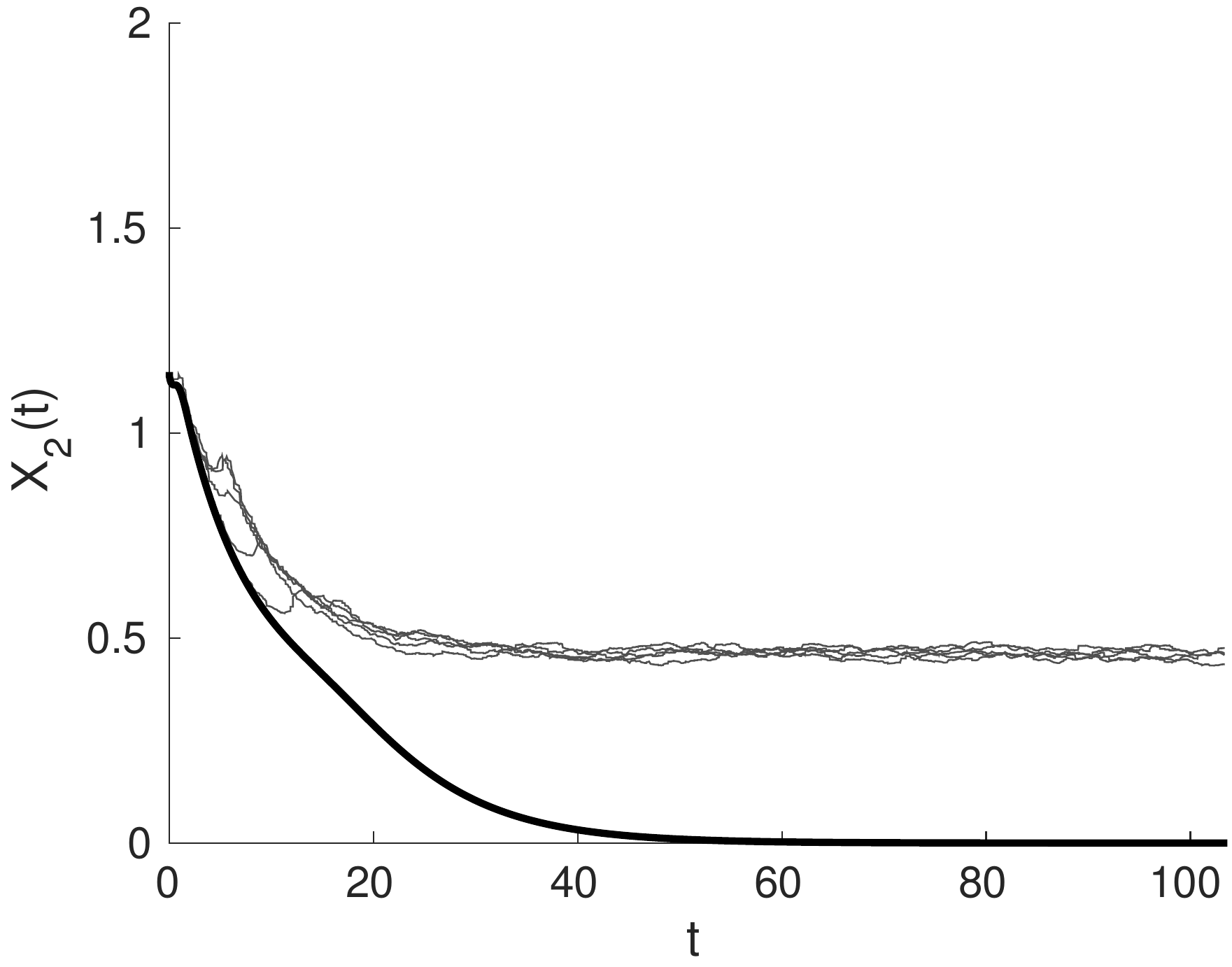}\label{fig:Stochastic_a}}
	\subfigure[]{\includegraphics[width=.4\textwidth]{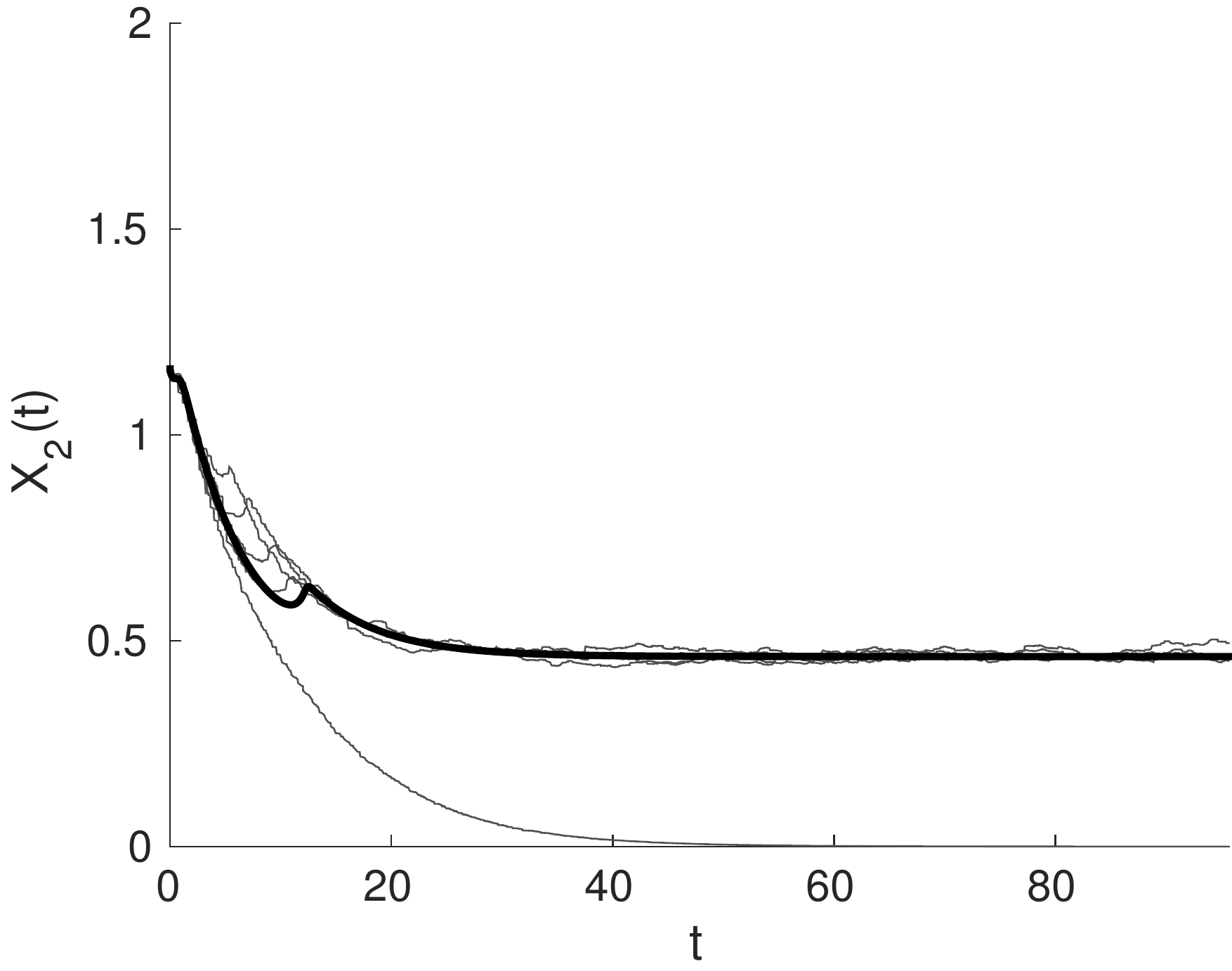}\label{fig:Stochastic_b}}
	\subfigure[]{\includegraphics[width=.4\textwidth]{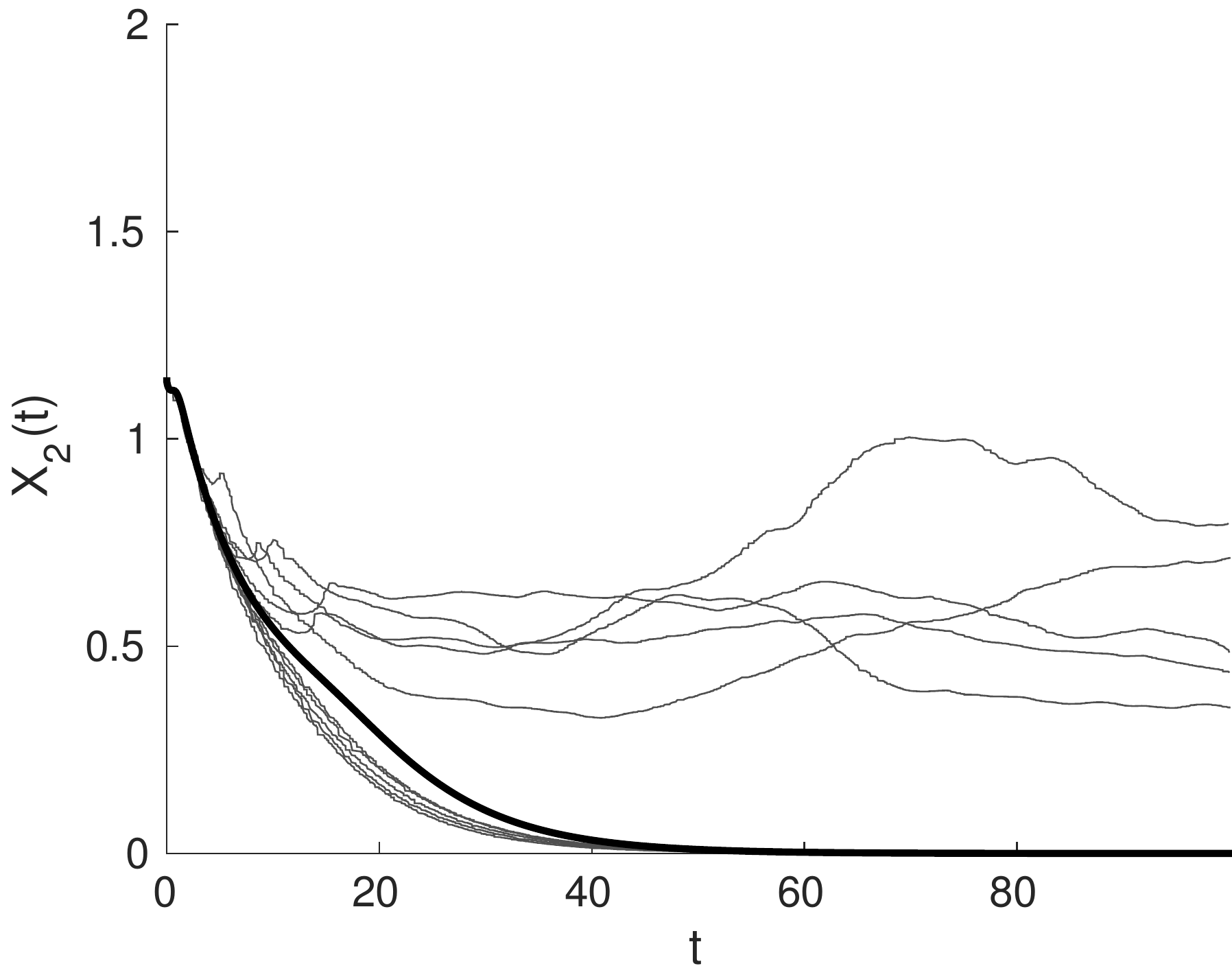}\label{fig:Stochastic_c}}
	\subfigure[]{\includegraphics[width=.4\textwidth]{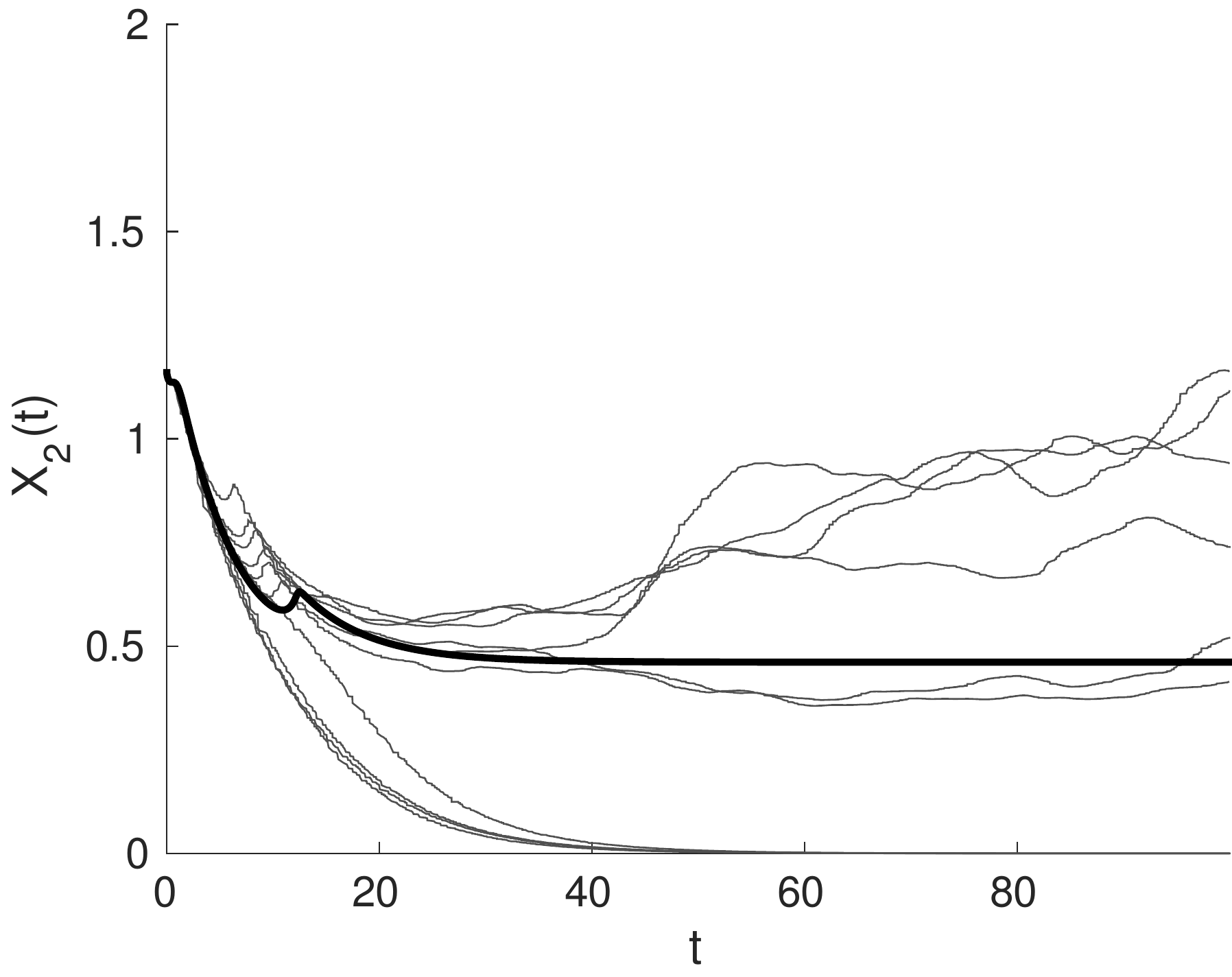}\label{fig:Stochastic_d}}
	\caption{
	  Sample paths of system \cref{eq:system}
	for the
	methanogens, $X_2(t)$, using the environmental fluctuation based method in \cref{fig:Stochastic_a,fig:Stochastic_b}, and using the mutation based method in \cref{fig:Stochastic_c,fig:Stochastic_d}. On the left, the initial
	conditions
	are given in
	\cref{eq:IC_E0} and are in the basin
	of attraction of $E_0$ for the deterministic system.   On the right, the initial conditions
	are given in
	\cref{eq:IC_E1} and are in the basin
	of attraction of $E_1$ for the deterministic system.
	The darker
	curve in each graph is the solution of the deterministic system
	and the lighter curves show the results of different stochastic
	runs. \label{fig:Stochastic_1}}
\end{figure*}
\begin{figure*}
\centering
	\subfigure[]{\includegraphics[width=.4\textwidth]{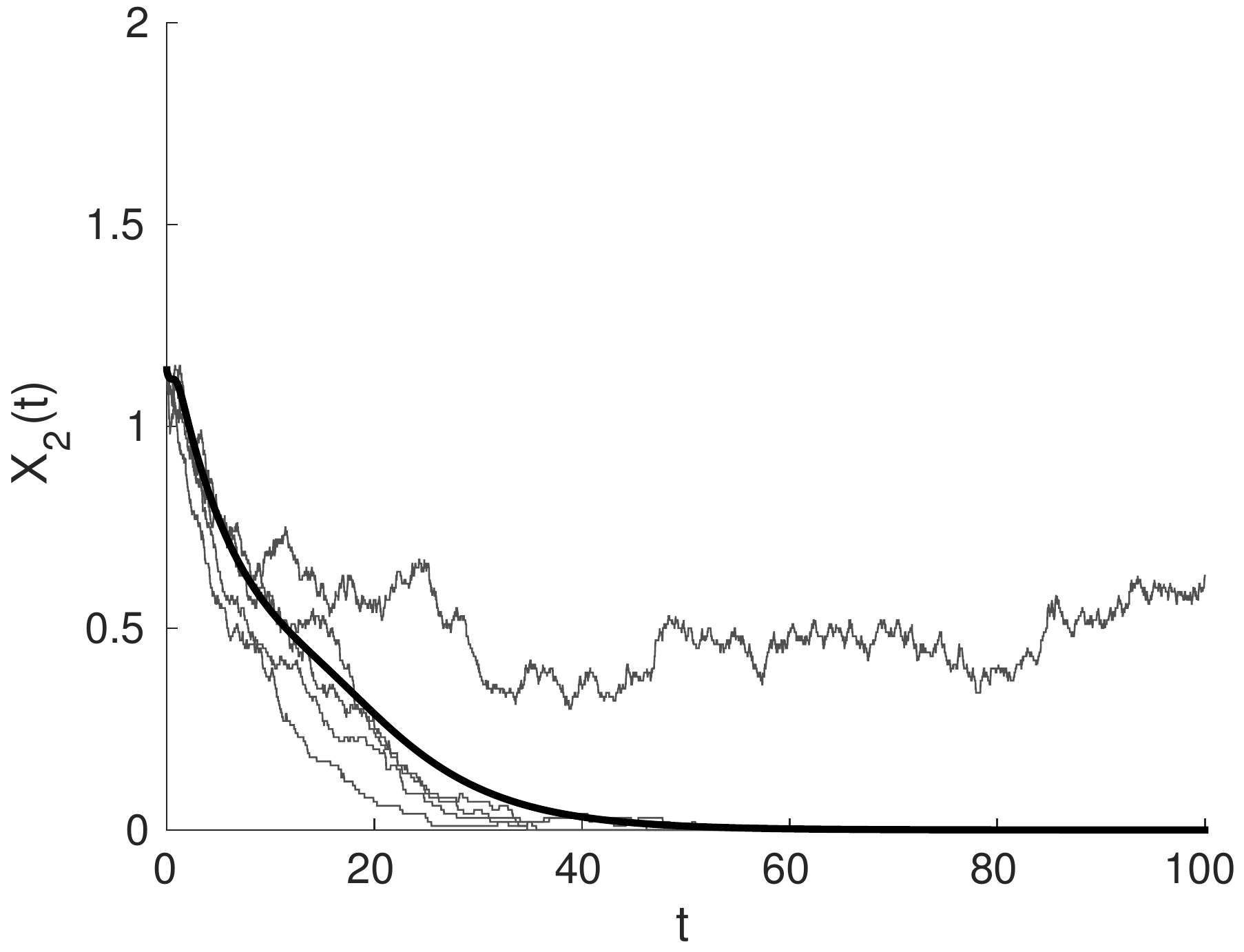}\label{fig:Stochastic_e}}
	\subfigure[]{\includegraphics[width=.4\textwidth]{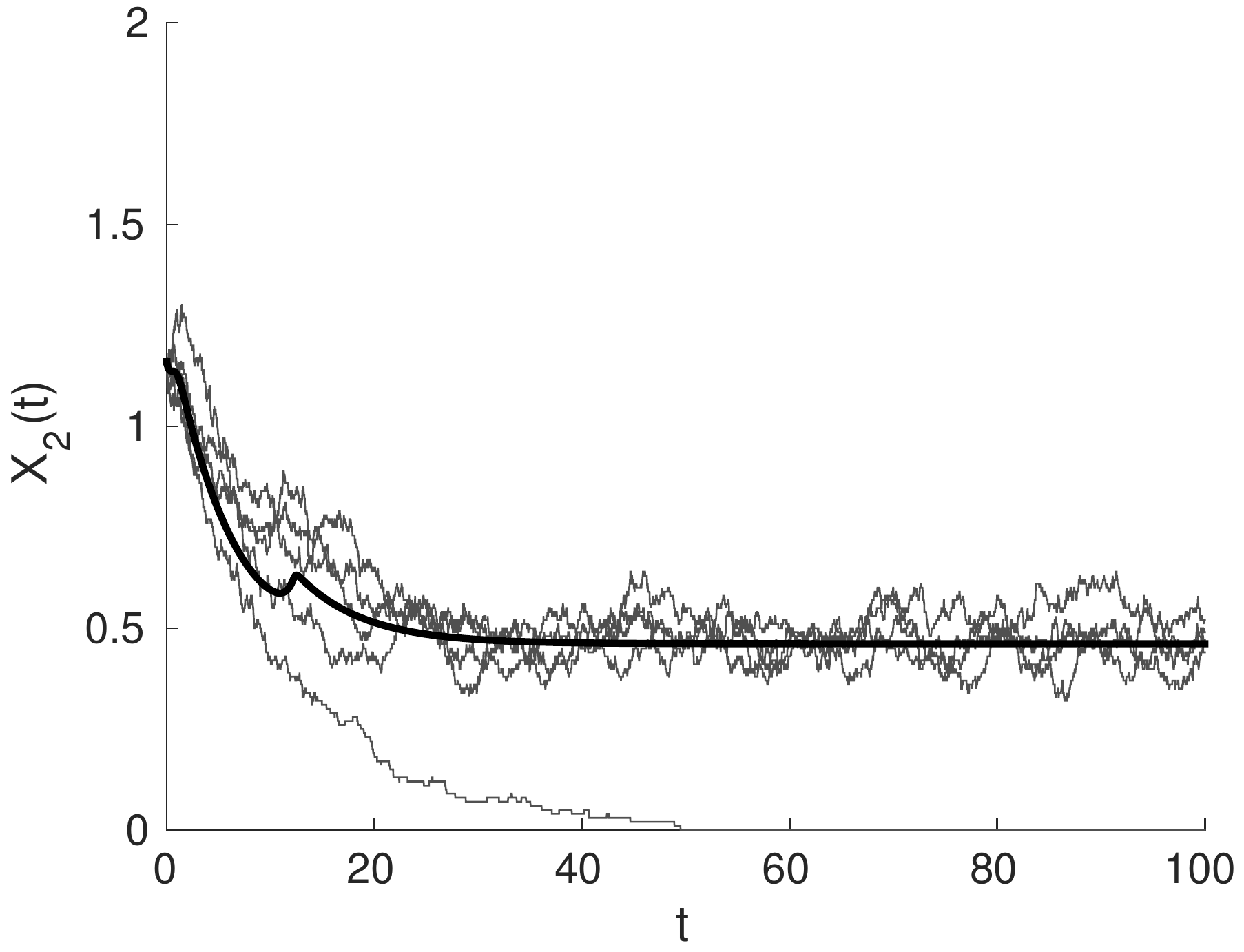}\label{fig:Stochastic_f}}
	\subfigure[]{\includegraphics[width=.4\textwidth]{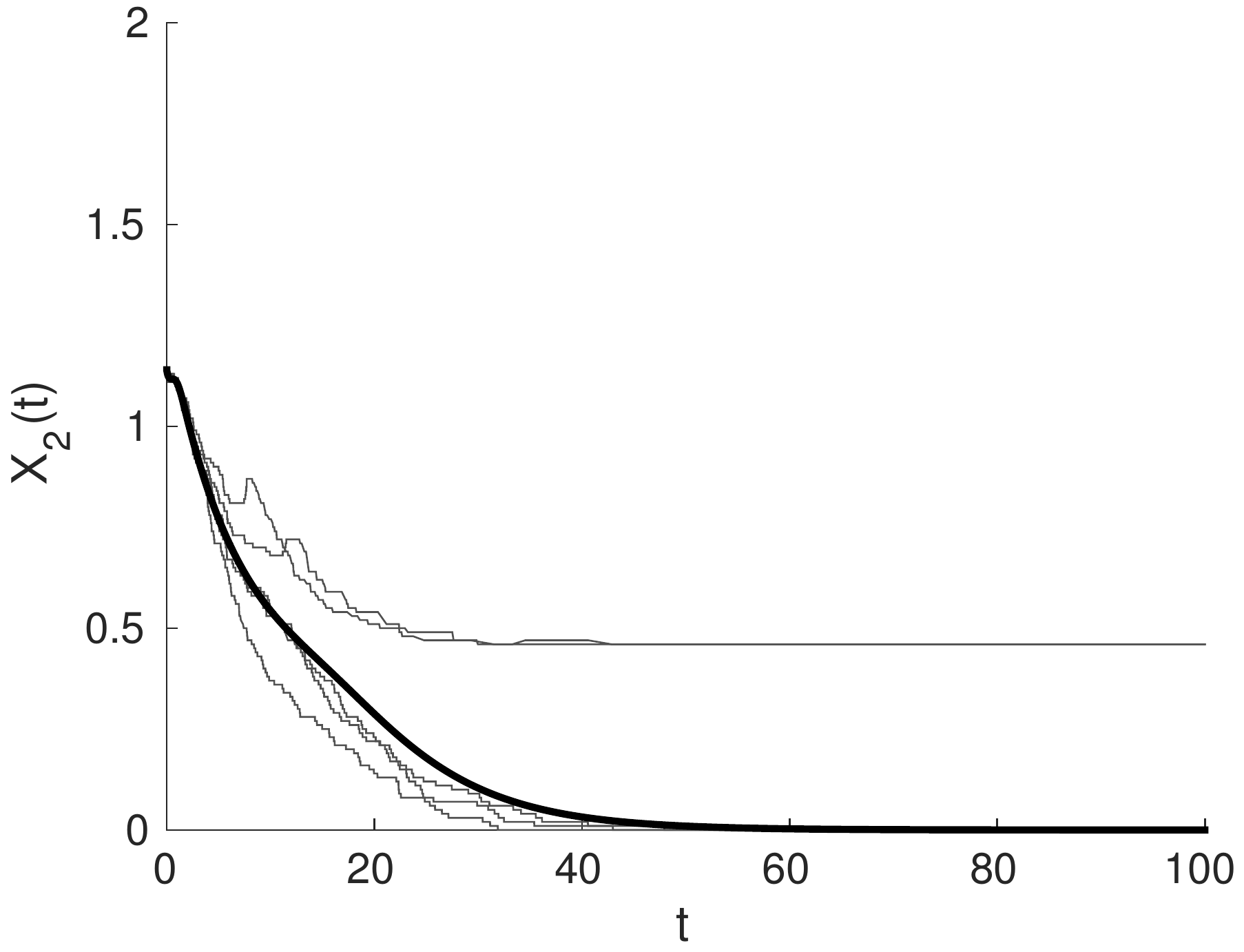}\label{fig:Stochastic_g}}
	\subfigure[]{\includegraphics[width=.4\textwidth]{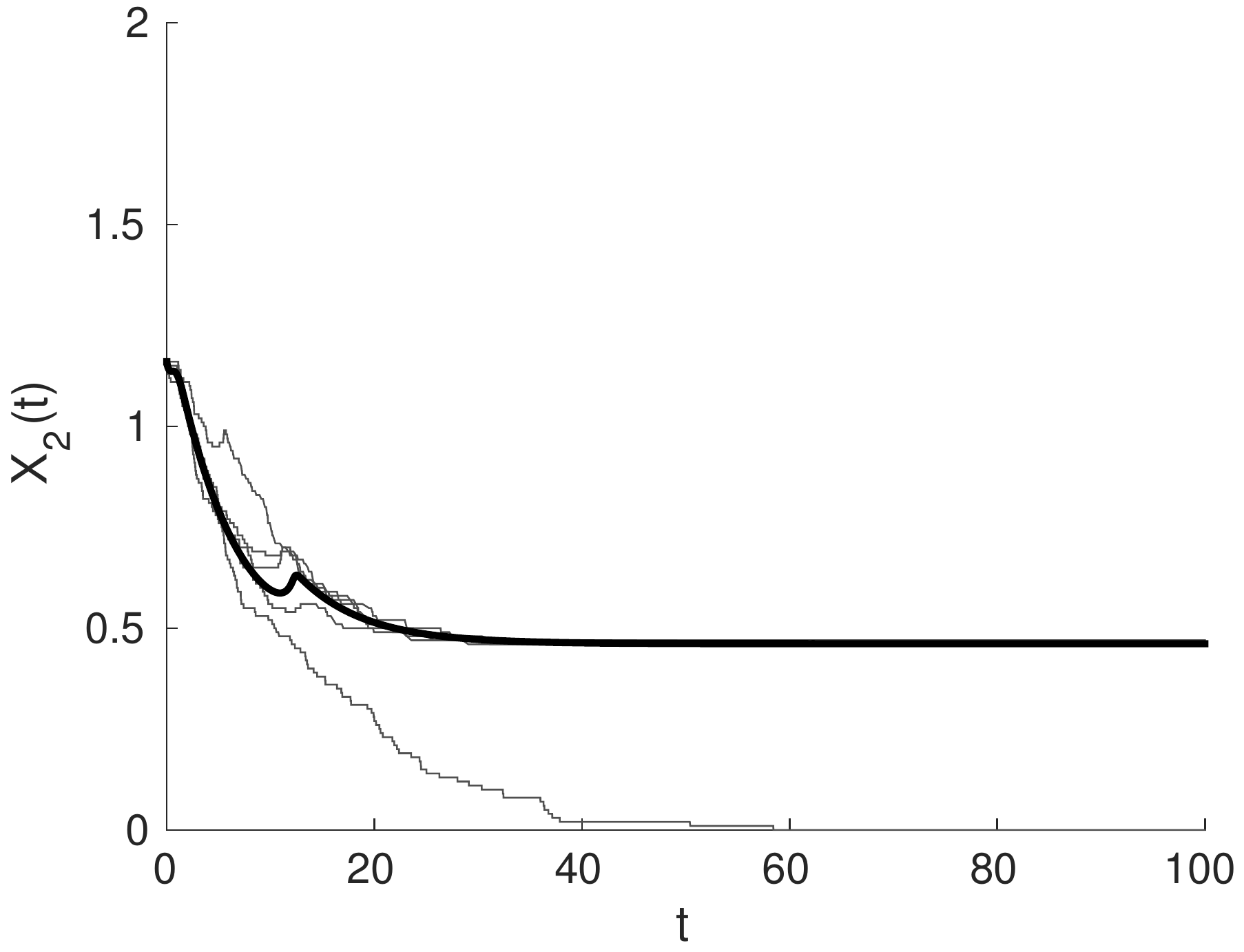}\label{fig:Stochastic_h}}
	\caption{Sample paths of system \cref{eq:system}
	for the
	methanogens, $X_2(t)$, using Gillespie's SSA in \cref{fig:Stochastic_e,fig:Stochastic_f}, and using the tau-leaping method in \cref{fig:Stochastic_g,fig:Stochastic_h}. On the left, the initial
	conditions
	are given in
	\cref{eq:IC_E0} and are in the basin
	of attraction of $E_0$ for the deterministic system.   On the right, the initial conditions
	are given in
	\cref{eq:IC_E1} and are in the basin
	of attraction of $E_1$ for the deterministic system.
	The darker
	curve in each graph is the solution of the deterministic system
	and the lighter curves show the results of different stochastic
	runs.}
\end{figure*}

In the second stochastic   algorithm, which we call the mutation based
algorithm,  we assume that the parameters are dependent on properties of the
microorganisms that can mutate, and therefore are subject to changes
  at random times that accumulate. In this case, many of the parameters are
beyond the control of the operator, however we assume that the operator
has  complete control of the dilution rate $D$ and the input concentration
$S_0$. Following this interpretation, we
  update the parameters at random times to obtain,
\begin{equation}
	P_{\tau_{i+1}} = P_{\tau_i}N_{\tau_i}= P_0\prod_{n=1}^iN_{\tau_n},
\end{equation}
where again $\sigma=\frac{1}{10}$,
$$ P_0=[y_1,y_2,y_3,y_4,K,k_1,m_{II},r],$$  and $N_{\tau_i}$ are
as before.   Using this algorithm,       $\{P_{\tau_i}\}_{i=1}^{\infty}$
is a random walk with mean $P_0$, and
the mutations accumulate.
Random walks have the property that $\sigma^2 \to \infty$ as $t\to
\infty$, and therefore the system is subject to wild fluctuations as time
increases. Care must be taken so that the parameters, which have
interpretations as positive quantities only, do not become negative. We
ensure non-negativity by taking $P_{\tau_{i+1}} =
\max\{0,P_{\tau_i}N_{\tau_i}\}$, and control the wild fluctuations by
limiting the difference between current parameter values $P_{\tau_i}$
and the initial parameter values $P_0$ to be less than four standard
deviations. 
\cref{fig:Stochastic_c,fig:Stochastic_d} shows five
simulations using the mutation
based algorithm.

We also include simulations using Gillespie's stochastic simulation
algorithm (SSA) \cite{Gillespie:1977}, in
\cref{fig:Stochastic_e,fig:Stochastic_f}. 
 The
SSA is an essentially exact description for systems with a finite number
of interacting particles. The SSA is based on the principle of mass
action, and as such the deterministic system must be converted to an equivalent
system that is of the form
\begin{subequations}\label{eq:massaction}
\begin{align}
\dot{S_1} &= \sum_{i,j,k,\ell,m} a_{ijklm}X_1^i S_1^jS_2^kS_3^\ell X_2^m,\\
\dot{X_1} &= \sum_{i,j,k,\ell,m} b_{ijklm}X_1^i S_1^jS_2^kS_3^\ell X_2^m,\\
\dot{S_2} &= \sum_{i,j,k,\ell,m} c_{ijklm}X_1^i S_1^jS_2^kS_3^\ell X_2^m,\\
\dot{S_3} &= \sum_{i,j,k,\ell,m} d_{ijklm}X_1^i S_1^jS_2^kS_3^\ell X_2^m,\\
\dot{X_2} &= \sum_{i,j,k,\ell,m} e_{ijklm}X_1^i S_1^jS_2^kS_3^\ell X_2^m.
\end{align}
\end{subequations}
To do so, we rescale the time variable by $dt
=(r_1+S_1)(K+k_1S_2+rS_2^2)(a+S_3^2)d\hat{t}$. The resulting system has
104 different reaction terms that  must be accounted for. As such,
reporting the system here would be impractical. Although we have rescaled
the time variable, the dynamics of system \cref{eq:massaction} are
identical to those of \cref{eq:system}. The SSA assumes that each
reaction occurs independent of the others, and occurs with rates given
by the coefficients of the differential equations. The SSA determines a
time until each reaction takes place using the rate coefficients and the
population of individuals relevant to that reaction, and increases or
decreases the population(s) of the fastest reaction by a set step size.
Once we have realized the simulation, we scale time back to the original
time variable before plotting in order to compare with the other
stochastic   algorithms. Five simulations with a step size of
$\tfrac{1}{100}$ are shown in \cref{fig:Stochastic_e,fig:Stochastic_f}. In reality, the
step size is meant to represent a single individual in the population,
but since SSA is notoriously slow, modelling a population of trillions
of microorganisms and on the order of $10^{23}$ molecules in this way is computationally impossible. It is also well known that as you decrease the step size, the SSA will approach the deterministic solution \cite{Gillespie:1977}.

Finally, we include simulations using the adaptive tau-leaping algorithm
in \cref{fig:Stochastic_g,fig:Stochastic_h}. 
The tau-leaping algorithm is an improvement on the SSA in terms of
speed, and is generally easier to implement, although it is less
accurate. One interpretation of the tau-leaping algorithm is that it is analogous to Euler's method, but instead of the derivative,  a Poisson random variable with mean proportional to the derivative is used.
 Here, \cref{eq:system} takes the form
\begin{subequations}
\begin{align}
 S_1(t+\tau) &=  S_1(t) + \delta P(\tau \dot{S_1}(t)),\\
 X_1(t+\tau) &=  X_1(t) + \delta P(\tau \dot{X_1}(t)),\\
 S_2(t+\tau) &=  S_2(t) + \delta P(\tau \dot{S_2}(t)),\\
 S_3(t+\tau) &=  S_3(t) + \delta P(\tau \dot{S_3}(t)),\\
 X_2(t+\tau) &=  X_2(t) + \delta P(\tau \dot{X_2}(t)),
\end{align}
\end{subequations}
where $\delta$ is the step size (typically interpreted to be an individual particle, as with the SSA). There has been much discussion on how to choose $\tau$ appropriately \cite{Gillespie:2006,Gillespie:2001}. We chose
\begin{equation}
\tau= \min\left\{\frac{1}{|\dot{S_1}|},\frac{1}{|\dot{X_1}|},\frac{1}{|\dot{S_2}|},\frac{1}{|\dot{S_3}|},\frac{1}{|\dot{X_2}|}\right\}
\end{equation}
so that the fastest reaction determines $\tau$.

The stochasticity as simulated in the environmental based fluctuation algorithm and the mutation based algorithm stems from
uncertainty in the system parameters, whether due to environmental noise
or from mutations. The stochasticity of the SSA and tau-leaping
algorithm is derived from the fact that the populations are treated as
discrete quantities. Since the populations are very large
in practice, it may be more realistic to implement stochasticity using
continuous hybrid algorithms that reflect the uncertainty in the parameters.

In  the simulations using all four  algorithms, if
the stochasticity caused the system
to  predict a different
outcome than the deterministic  system, it usually happened while the system was transient. Once the system neared an
equilibrium, the behaviour was usually quite stable. In rare instances,
noise caused the system to destabilize after nearing an equilibrium, but
this seemed only to occur for the 
mutation based method
when the noise was quite large.

\section{Conclusion}\label{sec:conclusion}

We analyze the system introduced by Bornh\"oft et al.
\cite{Bornhoft:2013}, which was proposed as a qualitative reduction of
the ADM1 model, and claimed to capture the most relevant qualitative
features of the ADM1 model. We give a complete global analysis of the
dynamics of the model. If the concentration of the simple substrates is
too low, both the acidogenic and methanogenic populations of
microogranisms are eliminated from the reactor and no biogas is
produced. Even if the input concentration of simple substrates is high
enough, if the equilibrium concentration of VFAs produced by the
acidogenic microorganisms is too low, then the methanogenic
microorganisms will be eliminated from the reactor, and the system will
converge to an equilibrium where no biogas is produced. If the VFA
concentration is in a proper range, the system has a single globally
stable interior equilibrium. Finally, if the equilibrium concentration
of VFAs is very high, then the system possesses two stable equilibria
and one unstable equilibrium, and no sustained oscillatory behaviour is
possible. In this case the long-term behavior is initial condition
dependent. Only one of the two stable equilibria corresponds to the
production of biogas meaning that it depends on the initial conditions
whether the reactor will produce biogas in the long-term. The system
does not allow bistability involving two or more biogas producing
equilibria, previously shown to be possible for the ADM1 model \cite{ADM1} and for the models studied in \cite{WeeSeoWolk:2013,WeeWolSas:2015}

The dynamics predicted by a bifurcation analysis of the model is
qualitatively similar for all three prototype functions. Ammonia
inhibition is included in the ADM1 model, however, in ADM1 ammonia is
not included as a dynamic variable. Ammonia concentration in ADM1 is
computed as the difference of the concentration of inorganic nitrogen
and $NH_4^+$. In the present model, ammonia is included as a dynamic
variable and it is important to determine how to best model the effect
of ammonia on the growth of the methanogens to capture the behaviour of
ADM1. For all three prototype functions, inhibition of the growth of
acetoclastic methanogens due to ammonia is unimodal with respect to the
ammonia concentration. However, for $\proto{I}(S_2,S_3)$, acetoclastic
methanogens will not grow in the absence of ammonia, while for
$\proto{II}(S_2,S_3)$ and $\proto{III}(S_2,S_3)$ the organisms grow even
if the ammonia concentration is zero.   Based on a
comparison with Fig.~10 in \cite{Bornhoft:2013}, using  $\mu_{2,II}$ or
$\mu_{2,III}$ in model  \cref{eq:system},   the
behavior resembles the behavior or the   ADM1 model shown in
\cite{Bornhoft:2013} more closely than using $\proto{I}(S_2,S_3)$. This indicates that these two functions are better suited to model the dependence of acetoclastic methanogens on ammonia.

We consider two algorithms that simulate stochastic   effects in system
\cref{eq:system}. The aim of these two algorithms is to  model the
uncertainty and variation in environmental and biological parameters
that are  hard to control with  numerical algorithms that are easy to implement and run
relatively quickly.
We compare the resulting graphs  with the graphs
produced using  the well-known  Gillespie algorithm and the
the tau-leaping algorithm.The
stochastic simulations from all four algorithms seem to  indicate that a failure of the reactor is most
likely to occur early in the reactors operating cycle, and that once the
reactor has reached  a steady state, it is quite
resilient and less affected by minor perturbations due to mutations or
small fluctuations in the environment.
The one possible exception is in  our mutation
based stochastic algorithm  that is intended to
simulate the accumulation of mutations within the
 microbial population. Therefore, it appears to be most important to control  the environment of the
reactor during start up, and then to carefully monitor
 the characteristics of the microorganisms within the reactor after start up.

 The analysis of the model of anaerobic digestion proposed
by Bornh\"oft et al.
\cite{Bornhoft:2013}   involved studying the  limiting system
\cref{eq:Chemostat}, a model of growth in the chemostat in the case of
a non-monotone response function with species decay rate added to the dilution rate.
 Armstrong and McGehee \cite{MA:1977} considered   model
 \cref{eq:Chemostat} extended to $n$ species competition  in the case
 of  monotone   response functions. By ignoring the species decay rate, they were able to apply a conservation law to obtain a limiting system.
They then studied the resulting limiting system,
but did not apply the theory of
asymptotically autonomous systems to obtain results for the full
system.  Butler and Wolkowicz \cite{BW:1985} used 
a different method, provided a complete global
analysis of this $n$ species model for both arbitrary  monotone and
non-monotone response
functions, and  applied results for asymptotically autonomous systems  so
that their results applied to the full system, not just the limiting system.
They proved that competitive exclusion holds, i.e., all solutions
approach an equilibrium that can be initial condition dependent in
the non-montone case.
In  Wolkowicz and Lu \cite{WL:1992}, the decay rates were no longer
ignored. There it
was proved that for a large class of monotone and non-monotone response
functions, 
again competitive exclusion holds and all populations
approach equilibrium. However, in the case of  non-monotone response
they only considered  the case  when the species with the lowest break-even
concentration also has its larger break-even concentration larger than
the substrate input concentration.
 In the case of only one species, their method works for all monotone
 response functions, but for
non-monotone response functions still requires the assumption that   the larger break-even concentration
is larger than the input concentration.  In this paper we were able to
eliminate this assumption,
and hence complete the analysis for the model of growth in the basic
chemostat.

\begin{appendix}\section{Proofs}\label{ap:quasiproof}

	\begin{proof} {\it of \cref{prop:properties}}

		\textit{i)}   Assume
	first that $X_1(0)=0$ and all other initial conditions are
	non-negative. It follows that $X_1(t)= 0$, for all $t\geq 0 $. Hence, \cref{eq:system} reduces to the system of first order differential equations
\begin{subequations}
	\begin{align}
		\dot{S_1} &= (S^{(0)}-S_1)D,\label{eq:reducedsubstrates}\\
		\dot{S_2} &= -DS_2-y_3\mu_2(S_2,S_3)X_2,\label{eq:reducedacid}\\
		\dot{S_3} &= -DS_3,\label{eq:reducedammonia}\\
		\dot{X_2} &= -D_2X_2 + \mu_2(S_2,S_3)X_2\label{eq:reducedmethanogens}.
	\end{align}
\end{subequations}

Equations \cref{eq:reducedsubstrates} and \cref{eq:reducedammonia}
		imply that $S_1$ and $S_3$ converge exponentially to
		$S^{(0)}$ and $0$, respectively. The hyperplane given by
		$S_2=0$ is invariant under \cref{eq:reducedacid} by
		(H6), and the hyperplane given by $X_2=0$ is invariant
		under \cref{eq:reducedmethanogens}. By uniqueness of
		solutions to initial value problems, if $S_2(0)\geq0$
		and $X_2(0)\geq 0 $, then   $S_2(t)\geq 0$ and
		$X_2(t)\geq 0$ for all $t\geq0$. Consider $\Sigma =
		S_2+y_3X_2$. Then $\dot{\Sigma} = -DS_2-y_3D_2X_2 \leq
		-D \Sigma$ and thus $\Sigma(t)\to 0$ as $t \to 0$, implying $X_2(t)$ and $S_2(t)$ must each converge to 0 as $t \to \infty$.

	\textit{ ii) and iii)} Assume that $X_1(0)>0$ and $X_2(0)\geq0$,
	with all other initial conditions non-negative. Notice first
	that \cref{eq:substrates} and \cref{eq:acidogens} decouple
	from the system. They describe a simple chemostat, for which it
	is  known that if $X_1(0)>0$ and $S_1(0)\geq 0$, then
	$S_1(t)>0$ and $X_1(t)>0$ for all $t>0$ (e.g., see
	\cite{SW:1995,WL:1992,Harmand:2017}).
	Note that the hyperplane $X_2 = 0 $ is invariant  under
	\cref{eq:reducedmethanogens}, and so if $X_2(0)=0$,   $X_2(t)
	= 0$ for all $t\geq 0$, and if $X_2(0)>0$, then $X_2(t) > 0$ for
		all $t\geq 0$. If $S_3(0)=0,$ then by
		\cref{eq:ammonia}, $\dot{S}_3(0)>0$, and so there exists
		$\epsilon>0$ such that $S(t)>0$ for  all
		$t\in(0,\epsilon)$. Let
		$S_3(0) \geq 0$.  Suppose that there exists  $\hat{t}>0$ such that
		$S_3(t) >0$ for  all $t\in (0,\hat{t})$ and $S_3(\hat{t}) =
		0$. Then, $\dot{S}_3(\hat{t})\leq
		0$, but again from \cref{eq:ammonia},
		$\dot{S_3}(\hat{t}) = y_4\mu_1(S_1(\hat{t}))X_1(\hat{t})
		> 0$, a contradiction.  Hence, $S_3(t)>0$ for all $t>0$.
		Using \cref{eq:aceticacid}, a
		similar argument applies to $S_2$.

	\textit{iv)} It is  known (e.g., see
	\cite{SW:1995, WL:1992, Harmand:2017})
	that solutions to the simple chemostat
	(\cref{eq:substrates,eq:acidogens}) are   bounded.
	Hence, there exists $0<M<\infty$ such that  $S_1(t)<M$ and
	$X_1(t)<M$ for all $t\geq 0$ . Thus, $S_3$ satisfies
\begin{equation}\label{eq:diffineq1}
	\dot{S_3} \leq -DS_3+y_4\tilde{M},
\end{equation}
where $\tilde{M} = \mu_1(M)M$. This differential inequality implies that $S_3(t)\leq \frac{y_4\tilde{M}}{D}+S_3(0)e^{-Dt}$ for all $t\geq 0$, and thus $S_3(t)$ is bounded for $t>0$. Since $D_i\geq D$ the following differential inequality holds
	\begin{equation}\label{eq:diffineq}
	\dot{X_2}\leq -DX_2+\mu_2(S_2,S_3)X_2.
	\end{equation}
Let $\Sigma = y_3X_2+S_2-\frac{y_2}{y_4}S_3$. Using
	\cref{eq:diffineq}, we see that $\dot{\Sigma} \leq -D\Sigma$, which implies that $\Sigma(t) \leq \Sigma(0)e^{-Dt}$. Since $S_3(t)$ is bounded above and we know that $S_2(t), X_2(t) \geq 0$ for all $t\geq 0 $, they too must be bounded above.
\qed
\end{proof}

	\begin{proof} {\it of \cref{prop:washout}}

		  Since \cref{eq:substrates,eq:acidogens} depend only on
		 	$S_1(t)$ and $X_{1}(t)$, these equations decouple from the full
		system \cref{eq:system}, and it follows
		from known results on the basic model of the
		chemostat (e.g., see
		\cite{SW:1995,Harmand:2017} )
		that if $\lambda_1\geq S^{(0)}$, then $(S_1(t),X_1(t)) \to (S^{(0)},0)$  as $t\to\infty$.

	Therefore,	for any $\epsilon>0$, there is a $T>0$ such
            that for $t>T$, $S_1(t) < S^{(0)} + \epsilon$ and $X_1(t) <
		\epsilon$. Then, for $t>T$,
            $\dot{S_3}(t)\leq -DS_3(t) +y_4 \mu_1(S^{(0)}+\epsilon)\epsilon$, which gives
            $S_3(t) \leq S_3(T) e^{-Dt} + \frac{y_4}{D}\mu_1(S^{(0)}+\epsilon)\epsilon \left( 1 - e^{-D(t-T)} \right)$.
            Then, $\lim_{t\to\infty} S_3(t) =
		\frac{y_4}{D}\mu_1(S^{(0)}+\epsilon)\epsilon $. Since
		this holds for all $\epsilon >0$, letting $\epsilon\to
		0$, gives $\lim_{t\to\infty} S_3(t) = 0$.
        Next, let $\Sigma_2(t) = S_2(t)+\frac{1}{y_3} X_2(t)$. Since
		$D\leq D_2$, $\dot{\Sigma_2}(t) \leq -D \Sigma_2(t) +
		y_2 \mu_1(S_1(t))X_1(t)$. The same argument as before
		proves that $\lim_{t\to\infty} \Sigma_2(t) = 0$. Since
		for all $t$, $S_2(t)\geq 0$ and $X_2(t)\geq 0$, it
		follows that
        $\lim_{t\to\infty} S_2(t) = \lim_{t\to\infty} X_2(t) = 0$.
\end{proof}

In the proof of \cref{thm:fullsystem} we rely on the fact that \cref{eq:system} is quasi-autonomous with the limiting system \cref{eq:Chemostat}. For completeness, we include both the definition of quasi-autonomous, and Theorem~1.4 of \cite{Th:1994}.

\begin{definition}\label{def:quasiautonomous}
Let $X$ be an open subset of $\mathbb{R}^k$, $k \geq 1$.  A system 
\begin{equation}
	\dot{x}(t) = f(t,x(t))
\end{equation}
with $x(t) \in X$ is called a \emph{quasi-autonomous system} with limiting system
\begin{equation}
	\dot{y}(t) = g(y(t))
\end{equation}
if for any compact set $K\subset X$
\begin{equation}
	\int_{t_0}^\infty \sup_{x(t)\in K} ||f(t,x(t))-g(x(t))||dt < \infty.
\end{equation}
\end{definition}
\begin{thm}[H. Thieme]
Let 
\begin{equation}\label{eq:AA}
\dot{x} = f(t,x)
\end{equation}
be quasi-autonomous with limit system
\begin{equation}\label{eq:L}
\dot{y} = g(y).
\end{equation}
Let $x$ be a forward bounded solution of \cref{eq:AA} that is defined for all forward times such that the closure of its forward orbit is contained in the open set $X\subset \mathbb{R}^2$. The following alternative holds for the $\omega$-limit set of 
$x$, $\omega$:
\begin{enumerate}
\item $\omega$ contains a periodic orbit of \cref{eq:L}.
\item $\omega$ contains at least one equilibrium of \cref{eq:L} and no periodic orbits of \cref{eq:L}.
\end{enumerate}
\end{thm}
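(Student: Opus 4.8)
The plan is to derive the stated dichotomy from two ingredients: (i) the $\omega$-limit set of a forward-bounded quasi-autonomous planar system is a nonempty compact set that is positively invariant under the semiflow of the limiting autonomous equation~\cref{eq:L}, and (ii) the Poincar\'e--Bendixson theorem. Write $\Phi_t$ for the (local) semiflow generated by $g$, well defined in a neighbourhood of every compact subset of $X$ under the standing regularity assumptions on $g$; write $\mathcal{O}^+$ for the forward orbit of the given solution $x$, $K:=\overline{\mathcal{O}^+}\subset X$, which is compact by hypothesis, and $\omega:=\omega(x)\subseteq K$. Since $x$ is forward bounded, $\omega$ is nonempty and compact.

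First I would establish the invariance statement (i). The key estimate is a shadowing lemma: for each $T>0$, solutions of~\cref{eq:AA} started at a large time $\tau$ near a point $p\in K$ stay uniformly close on $[\tau,\tau+T]$ to the solution $s\mapsto\Phi_s(p)$ of~\cref{eq:L}. Indeed, letting $L$ be a Lipschitz constant for $g$ on a compact neighbourhood of $K$ and writing $e(t):=\sup_{y\in K}\|f(t,y)-g(y)\|$, the quasi-autonomy hypothesis $\int_{t_0}^\infty e(t)\,dt<\infty$ forces $\int_\tau^{\tau+T}e(t)\,dt\to 0$ as $\tau\to\infty$; then a Gronwall argument applied to the difference of the two solutions, together with continuous dependence on initial data, yields the claimed uniform convergence. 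Applying this with $p\in\omega$ and a sequence $t_n\to\infty$ with $x(t_n)\to p$, we get $x(t_n+s)\to\Phi_s(p)$ uniformly for $s$ in any bounded interval; since $t_n+s\to\infty$, every such $\Phi_s(p)$ lies in $\omega$, so $\Phi_s(\omega)\subseteq\omega$ for all $s\ge 0$. (Running the argument backwards shows $\omega$ is in fact fully invariant, but positive invariance is all that is needed.) Because $\omega$ is compact and positively invariant, $\Phi_t(q)$ is defined for all $t\ge 0$ and remains in $\omega$, for every $q\in\omega$.

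With (i) in hand, the conclusion follows from a clean dichotomy. Either $\omega$ contains a periodic orbit of~\cref{eq:L}, in which case alternative~1 holds; or it does not, and I claim $\omega$ must then contain an equilibrium of~\cref{eq:L}, which gives alternative~2. Suppose, for contradiction, that $\omega$ contains neither a periodic orbit nor an equilibrium. Pick any $q\in\omega$; by positive invariance its forward $\Phi$-orbit lies in the compact set $\omega$, so its $\Phi$-$\omega$-limit set $M$ is nonempty, compact, connected and invariant under $\Phi$, with $M\subseteq\omega$. In particular $M$ contains no equilibrium of~\cref{eq:L}. By the Poincar\'e--Bendixson theorem, a nonempty compact $\omega$-limit set of a planar flow that contains no equilibrium is a periodic orbit; thus $M$ is a periodic orbit of~\cref{eq:L} contained in $\omega$, contradicting that $\omega$ contains no periodic orbit. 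Hence $\omega$ contains an equilibrium, and since by assumption it then contains no periodic orbit, alternative~2 holds. Exactly one of the two alternatives therefore occurs.

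The main obstacle is the invariance statement (i): one must carefully extract from the integrability condition the local-uniform convergence of nonautonomous trajectories to autonomous ones, and ensure $\Phi$ is well defined (and trajectories unique) on a neighbourhood of $K$ --- this is where a regularity hypothesis on $g$, or otherwise a compactness argument for the set of solutions, enters. It is also worth emphasizing why the ambient dimension two is essential: in higher dimensions the $\omega$-limit set of an asymptotically autonomous system need not be internally chain transitive, and the analogue of the dichotomy fails; here the Poincar\'e--Bendixson trichotomy is precisely what forces a compact invariant set free of equilibria to carry a periodic orbit.
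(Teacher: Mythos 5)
There is nothing in the paper to compare against: the statement is not proved there, but quoted (as Theorem~1.4 of \cite{Th:1994}) precisely so it can be applied to the limiting system. Judged on its own, your argument is sound and is essentially the standard route to this dichotomy. Your two ingredients are the right ones: the Gronwall/shadowing estimate giving $x(t_n+s)\to\Phi_s(p)$ uniformly on bounded $s$-intervals, which works verbatim with the integral form of quasi-autonomy used here because $\int_\tau^{\tau+T}\sup_{y\in K}\|f(t,y)-g(y)\|\,dt\to 0$ as $\tau\to\infty$, and hence positive invariance of $\omega$ under the limit semiflow $\Phi$; and the planar Poincar\'e--Bendixson theorem applied to the $\Phi$-$\omega$-limit set of a point $q\in\omega$, which produces a periodic orbit of the limit system inside $\omega$ whenever $\omega$ is equilibrium-free. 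The two caveats you flag are exactly the ones a complete write-up must settle: uniqueness and local Lipschitz regularity of $g$ near $K=\overline{\mathcal{O}^+}$ (needed both for $\Phi$ and for Poincar\'e--Bendixson; automatic in the paper's application, where $g$ is $C^1$), and a continuation/bootstrap step ensuring that $\Phi_s(x(t_n))$ and $\Phi_s(p)$ stay in a compact neighbourhood of $K$ inside $X$ on $[0,T]$ before the Lipschitz constant and Gronwall are invoked, after which global forward existence of $\Phi_t$ on the compact positively invariant set $\omega$ follows. It is worth noting that the dichotomy as stated is much weaker than Thieme's full results (which describe the structure of $\omega$ itself, e.g.\ convergence to an equilibrium or to a periodic orbit of the limit system, and require chain-recurrence arguments); for the weak form quoted in the paper, your invariance-plus-Poincar\'e--Bendixson argument suffices, and your closing remark about why dimension two is essential is correct.
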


To show that system~\cref{eq:system} is a quasi-autonomous system with
limiting system \cref{eq:limitsystem}, we first prove a lemma.

\begin{lem}\label{lem:quasi-autonomous}
Let $\dot{x}(t){=}f(t,x(t))$ be quasi-autonomous with limiting system $\dot{y}(t){=}$ $g(y(t))$ and assume that there exists $h(x(t))$ such that for all $K\subset X$ compact
\begin{equation}
	\int_{t_0}^\infty \sup_{x(t)\in K} ||g(x(t)) - h(x(t))||dt < \infty.
\end{equation}
Then $\dot{x}(t) = f(t,x(t))$ is quasi-autonomous with limiting system $\dot{y}(t) = h(y(t))$.
\end{lem}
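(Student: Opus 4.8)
The plan is to reduce the claim to the triangle inequality applied to the defect terms. First I would fix an arbitrary compact set $K\subset X$. For every $t\geq t_0$ and every $x\in K$ we have
\[
||f(t,x)-h(x)|| \leq ||f(t,x)-g(x)|| + ||g(x)-h(x)||.
\]
Taking the supremum over $x\in K$ on both sides, and using that the supremum of a sum is bounded by the sum of the suprema, yields
\[
\sup_{x(t)\in K}||f(t,x(t))-h(x(t))|| \leq \sup_{x(t)\in K}||f(t,x(t))-g(x(t))|| + \sup_{x(t)\in K}||g(x(t))-h(x(t))||.
\]

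Next I would integrate this inequality in $t$ over $[t_0,\infty)$. The integral of the first term on the right-hand side is finite because $\dot{x}(t)=f(t,x(t))$ is quasi-autonomous with limiting system $\dot{y}(t)=g(y(t))$, and the integral of the second term on the right-hand side is finite by the hypothesis on $h$. Hence $\int_{t_0}^\infty \sup_{x(t)\in K}||f(t,x(t))-h(x(t))||\,dt<\infty$. Since $K\subset X$ was an arbitrary compact set, this is precisely the statement that $\dot{x}(t)=f(t,x(t))$ is quasi-autonomous with limiting system $\dot{y}(t)=h(y(t))$, as required.

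There is no substantive obstacle here: the argument uses only the triangle inequality and monotonicity/additivity of the integral, and no dynamical-systems machinery (asymptotic autonomy, periodic orbits, Poincar\'e--Bendixson) is invoked. The only point worth a remark is that the integrand $\sup_{x(t)\in K}||f(t,x(t))-h(x(t))||$ is integrable: this follows because it is dominated pointwise by the sum of the two integrable functions appearing in the hypotheses, so the finiteness of its integral is immediate from the comparison above.
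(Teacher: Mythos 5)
Your argument is correct and matches the paper's own proof: both add and subtract $g$, apply the triangle inequality, bound the supremum of the sum by the sum of the suprema, and conclude by the finiteness of the two integrals in the hypotheses. Nothing further is needed.
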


\begin{proof}
	 By the triangle inequality, 
	\begin{align*}
		\int_{t_0}^\infty \sup_{x\in K} ||f(t,x)-h(x)|| dt &\leq \int_{t_0}^\infty \sup_{x\in K} ||f(t,x)-h(x)+g(x)-g(x)|| dt \\
 & \leq \int_{t_0}^\infty \sup_{x\in K} ||f(t,x)-g(x)||+||g(x)-h(x)|| dt\\
	&\leq \int_{t_0}^\infty \sup_{x\in K} ||f(t,x)-g(x)|| dt+\int_{t_0}^\infty \sup_{x\in K} ||g(x)-h(x)|| dt\\
	&<\infty .
	\end{align*}
\qed
\end{proof}

	\begin{proof} {\it of \cref{prop:limit}}
				
		First we show  that
		\cref{eq:system} is quasi-autonomous with limiting
		system:
\begin{subequations}\label{eq:intermediatelimit}
\begin{align}
		\dot{S_2} &= (-S_2+\lambda_2)D-y_3\mu_2(S_2,S_3)X_2,\label{eq:intermediateaceticacid}\\
		\dot{S_3} &= -DS_3+\lambda_3D,\label{eq:intermediateammonia}\\
		\dot{X_2} &= -D_2X_2 + \mu_2(S_2,S_3)X_2.\label{eq:intermediatemethanogens}
\end{align}
\end{subequations}
		Since we are assuming  that $\mu_1(S_1)$ is a monotone
		response function,  the results in \cite{WL:1992}
		can be applied to
		the first two equations in
		\cref{eq:system} to prove that
		 $(S_1(t), X_1(t))$ converge exponentially to
		$(\lambda_1, X_1^*)$ as $t\to \infty$.  (The restriction
		that the results in \cite{WL:1992} only apply to a general class of monotone response functions rather than any monotone
		response function  does not apply to the single
		species growth model.)
	
		Let $x(t) = (S_1(t),X_1(t),S_2(t),S_3(t),X_2(t))$ be
		any solution of \cref{eq:system},   $K\subset \mathbb{R}^5_+$ be a compact
		set, and let  $||\cdot||$ denote the Euclidean norm. For $t_0\geq 0$, consider
\begin{align*}
 \mathcal{Q}_1 = \int_{t_0}^\infty \sup_{x\in
	K}||(Y_1(t),Y_2(t),Y_3(t),Y_4(t),0)|| \ dt,
\end{align*}
where	
\begin{align*}
	Y_1(t) &= (S^{(0)}-S_1(t))D-y_1\mu_{1}(S_1(t))X_1(t),\\
	Y_2(t) & = -D_1X_1(t)+\mu_{1}(S_1(t))X_1(t),\\
	Y_3(t) & = y_2\mu_1(S_1(t))X_1(t)-D\lambda_2,\\
	Y_4(t) & = y_4\mu_1(S_1(t))X_1(t)-D\lambda_3.
\end{align*}
  If $t_0 = 0$, then for any $0<t_1<\infty$, by continuity of the norm,
\begin{equation}
	\int_0^{t_1}\sup_{x\in K}||(Y_1(t),Y_2(t),Y_3(t),Y_4(t),0)|| \
	dt <\infty.
\end{equation}
 Thus, we need only consider the case $t_0>0$. By the Cauchy-Schwartz inequality,
\begin{align}\label{eq:Quasi-bound}
&\mathcal{Q}_1 \leq\left(\int_{t_0}^\infty \frac{1}{t^2}\, dt\right)^{\frac{1}{2}}\Bigg(\int_{t_0}^\infty t^2\sup_{x\in K}(Y_1(t)^2+Y_2(t)^2+Y_3(t)^2+Y_4(t)^2)\, dt\Bigg)^{\frac{1}{2}}.
\end{align}
The first integral, $\int_{t_0}^\infty \frac{1}{t^2}\, dt$,
is finite. Since all of the terms of
\begin{equation}\label{eq:Quasi-bound-parts}
\int_{t_0}^\infty t^2\sup_{x\in
	K}\left(Y_1(t)^2+Y_2(t)^2+Y_3(t)^2+Y_4(t)^2\right)\  dt,
\end{equation}
		are positive, we   can consider them individually. We begin with the second term,
\begin{align*}
	\int_{t_0}^\infty t^2\sup_{x\in K}Y_2(t)^2dt=  \int_{t_0}^\infty t^2 \sup_{x\in K} X_1^2(t)\left[-D_1 +\mu_1\left(S_1(t)\right)\right]^2dt.
\end{align*}
		Since $\mu_1(S_1) \in C^1$, by the Mean Value Theorem, for every $t>0$, there
	exists $\theta(t)$, such that $S_1(\theta(t))$ lies between
	$S_1(t)$ and  $\lambda_1$. Let $M_0=\sup_{t\in [0, \infty)}
		|\mu_1'(S_1(\theta(t))|  > 0$. Since $S_1(t)\to \lambda$ as $t\to
		\infty$, $\mu_1'(S_1(\theta(t))$ remains bounded, $M_0$
		is finite and
	$| - D_1 + \mu_1(S_1(t))|=
	| -\mu_1(\lambda_1) + \mu_1(S_1(t))|
	 =
		|\mu_1'(S_1(\theta(t)))| | -\lambda_1+ S_1(t)|\leq M_0 |
		-\lambda_1+ S_1(t)| \to 0,$  exponentially as $t\to 0$.
		 Thus, there is a $k>0$, such that
\begin{align*}
& \int_{t_0}^\infty t^2\sup_{x\in K}X_1^2(t) [-D_1 +\mu_1(S_1(t))]^2dt \leq   \overline{X_1} \widetilde{M_0}^2\int_{t_0}^\infty t^2e^{-2kt}dt <\infty,
\end{align*}
where $\overline{X_1}$ is the maximum value of $X_1(t) \in K$, and $\widetilde{M_0}=M_0|S_1(0)-\lambda_1|$.

We now consider the first term,
\begin{align*}
\int_{t_0}^\infty t^2\sup_{x\in K}Y_1&(t)dt =\int_{t_0}^\infty t^2 \sup_{x\in K} \big[ (S^{(0)}-S_1(t))D
-y_1\mu_{1}(S_1(t))X_1(t) \big]^2dt \\
&\leq \int_{t_0}^\infty t^2\sup_{x\in K} \big[(\lambda_1-S_1(t))D-y_1\mu_1(S_1(t))X_1(t)+ (S^{(0)}-\lambda_1)D\big]^2dt.
\end{align*}
By Young's inequality ( \textit{i.e.} that for any two real numbers, $a$ and $b$, $(a+b)^2\leq 2a^2+2b^2$) and using $D(S^{(0)}-\lambda_1) = D_1y_1 X_1^*$,
\begin{align*}
\int_{t_0}^\infty t^2 \sup_{x\in K} Y_1(t)dt
{\leq}\int_{t_0}^\infty 2t^2 \sup_{x\in K} \big[(\lambda_1-S_1(t))^2D^2{+}y_1^2\left(\mu_1(S_1(t))X_1(t) - X_1^*D_1\right)^2 \big]dt.
\end{align*}
Since this integral is a sum of positive terms we may consider each term individually. The first term is bounded above by the integral of a decaying exponential, and so is finite. We use Young's inequality to bound the second term,
\begin{align}
&2\int_{t_0}^\infty t^2 \sup_{x\in K} \Big[ y_1^2\left(\mu_1(S_1(t))X_1(t)-D_1X_1(t)+D_1X_1(t)- X_1^*D\right)^2\Big]dt\notag\\\label{eq:Ineq2}
&\leq 4y_1^2\int_{t_0}^\infty t^2 \sup_{x\in K}\Big[ X_1(t)^2(\mu_1(S_1(t))-D_1)^2+D_1^2\left(X_1(t) -X_1^*\right)^2 \Big]dt,
\end{align}
where both of the terms in \cref{eq:Ineq2} are bounded above by a decaying exponential and so this integral is finite.
For the third term in \cref{eq:Quasi-bound-parts}, write
\begin{align*}
&\int_{t_0}^\infty t^2\sup_{x\in K}\left[y_2\mu_{1}(S_1(t))X_1(t)-D\lambda_2\right]^2dt\\&= \int_{t_0}^\infty t^2y_2^2\sup_{x\in K}\left[\mu_{1}(S_1(t))X_1(t) -D_1X_1(t)+ D_1X_1(t) - \frac{D \lambda_2}{y_2}\right]^2 dt \\
&\leq 2\int_{t_0}^\infty t^2y_2^2\sup_{x\in K}\left[\mu_{1}(S_1(t))X_1(t){-}D_1X_1(t)\right]^2dt{+}2{\int_{t_0}^\infty} {t^2}\sup_{x\in K}\left[y_2D_1X_1(t) {-} D\lambda_2 \right]^2dt.
\end{align*}
Noting that $y_2 D_1 X_1^* = D \lambda_2$, the exponential decay of
$(X_1(t)-X_1^*)^2$, and the same decay arguments as with the first term
in \cref{eq:Quasi-bound-parts}. The finiteness of the fourth term in
\cref{eq:Quasi-bound-parts} follows from a similar idea, noting that
$y_4 D_1 X_1^* = D \lambda_3$. Thus, \cref{eq:system} is
quasi-autonomous with limiting system \cref{eq:intermediatelimit}.

 Now we finally show that \cref{eq:system} has limiting system
\cref{eq:limitsystem}.  From  \cref{eq:intermediateammonia}, if follows that
\begin{equation}\label{eq:S3bound}
	|S_3(t)- \lambda_3| = \left|S_3(0)-\lambda_3 \right|e^{-Dt}.
\end{equation}
We use this to argue that
\begin{equation*}
\mathcal{Q}_2 = \int_{t_0}^\infty \sup_{x\in K} \sqrt{(y_3^2+1)Y_5(t)^2 + D^2Y_6(t)^2}dt< \infty,
\end{equation*}
where, $Y_5(t) = \mu_2(S_2(t),\lambda_3)-\mu_2(S_2(t),S_3(t))X_2(t)$, and $Y_6(t) =S_3(t) - \lambda_3$. The Cauchy-Schwartz inequality allows us to split the integral into more manageable pieces,

\begin{align}
\mathcal{Q}_2&\leq \left(\int_{t_0}^\infty\frac{1}{t^2}dt\right)^{\frac{1}{2}}\left(\int_{t_0}^\infty t^2\sup_{x\in K}[(y_2^2+1)Y_5(t)^2 + D^2Y_6(t)^2]dt\right)^\frac{1}{2}.\notag
\end{align}
By \cref{eq:S3bound}, the term containing $Y_6(t)$ is bounded above. In order to show the integral containing $Y_5(t)$ is bounded above we use the fact that $\mu_2(S_2,S_3)\in C^1$ and \cref{eq:S3bound} to argue that there exists $M_1\geq0$ such that
\begin{align*}
|\mu_2(S_2(t),\lambda_3)-\mu_2(S_2(t),S_3(t))| \leq M_1|S_3(0)-\lambda_3|e^{-Dt}.
\end{align*}
Since $X_2(t)$ is bounded we have
\begin{equation}\label{eq:Y5bound}
\int_{t_0}^\infty t^2\sup_{x\in K} \Big[ (y_3^2+1)\overline{X_2}M_1|S_3(0)-\lambda_3|e^{-Dt}\Big] dt,
\end{equation}
Where $\overline{X_2}$ is the maximum value of $X(t)$ in $K$. The
integral on the right is finite and therefore, by \cref{lem:quasi-autonomous}, \cref{eq:system} is quasi-autonomous with
limiting system \cref{eq:limitsystem}.
\end{proof}

\end{appendix}
\bibliographystyle{siamplain}

\bibliography{SIAPM119878Bib}

\begin{thebibliography}{10}

\bibitem{ADM1}
{\sc D.~Batstone, J.~Keller, I.~Angelidaki, S.~Kalyhuzhnyi, S.~Pavlosthathis,
  A.~Rozzi, W.~Sanders, H.~Siegrist, and V.~Vavilin}, {\em Anaerobic Digestion
  Model No.1 (ADM1)}, IWA Publishing, London UK, 2002.

\bibitem{Benyahia:2012}
{\sc B.~Benyahia, T.~Sari, B.~Cherki, and J.~Harmand}, {\em Bifurcation and
  stability analysis of a two-step model for monitoring anaerobic digestion
  processes}, J. Process Contr., 22 (2012), pp.~1008--1019.

\bibitem{Bernard:2001}
{\sc O.~Bernard, Z.~Hadj-Sadok, D.~Dochain, A.~Genovesi, and J.-P. Steyer},
  {\em Dynamic model development and parameter identification for an anaerobic
  wastewater treatment process}, Biotechnol. Bioeng., 75 (2001), pp.~440--47.

\bibitem{Bornhoft:2013}
{\sc A.~Bornh\"{o}ft, R.~Hanke-Rauschenbach, and K.~Sundmacher}, {\em
  Steady-state analysis of the anaerobic digestion model no.1 ({ADM1})},
  Nonlinear Dyn., 73 (2013), pp.~535--549.

\bibitem{BW:1985}
{\sc G.-J. Butler and G.-S.-K. Wolkowicz}, {\em A mathematical model of the
  chemostat with a general class of functions describing nutrient uptake}, SIAM
  J. Appl. Math., 45 (1985), pp.~138--151.

\bibitem{Campillo:2011}
{\sc F.~Campillo, M.~Joannides, and I.~Larramendy-Valverde}, {\em Stochastic
  modelling of the chemostat}, Ecological Modelling, 222 (2011),
  pp.~2676--2689.

\bibitem{Gillespie:2006}
{\sc Y.~Cao, D.~Gillespie, and L.~Petzold}, {\em Efficient step size selection
  for the tau-leaping simulation method}, J. Chem. Phys., 124 (2006),
  pp.~044109--1--11.

\bibitem{Gillespie:1977}
{\sc D.~D.~Gillespie}, {\em Exact stochastic simulation of coupled chemical
  reactions}, J. Chem. Phys., 81 (1977), pp.~2340--2361.

\bibitem{Gillespie:2001}
{\sc D.~D.~Gillespie}, {\em Approximate accelerated stochastic simulation of
  chemically reacting systems}, J. Chem. Phys., 115 (2001), pp.~1716--1733.

\bibitem{HMH:2010}
{\sc M.~Hajji, F.~Mazenc, and J.~Harmand}, {\em A mathematical study of
  syntrophic relationship of a model of anaerobic digestion process}, Math.
  Biosci. Eng., 7 (2010), pp.~641--656.

\bibitem{Harmand:2017}
{\sc J.~Harmand, C.~Lobry, A.~Rapaport, and T.~Sari}, {\em The chemostat:
  Mathematical theory of microorganism cultures}, vol.~1 of Chemical
  engineering series, Wiley, 2017.

\bibitem{HB:2008}
{\sc J.~Hess and O.~Bernard}, {\em Design and study of a risk management
  criterion for an unstable wastewater treatment process}, J. Process Contr.,
  18 (2008), pp.~71--79.

\bibitem{Imhof:2005}
{\sc L.~Imhof and S.~Walcher}, {\em Exclusion and persistence in deterministic
  and stochastic chemostat models}, J. Differential Equations, 217 (2005),
  pp.~26--53.

\bibitem{Jeya:1997}
{\sc S.~Jeyaseelan}, {\em A simple mathematical model for anaerobic digestion
  process}, Water Sci. Technol., 35 (1997), pp.~185--191.

\bibitem{Lokshina:2001}
{\sc L.~Lokshina, V.~Vavilin, R.~Kettunen, J.~Rintala, C.~Holliger, and
  A.~Nozhevnikova}, {\em Evaluation of kinetic coefficients using integrated
  monod and haldane models for low-temperature acetoclastic methanogenesis},
  Water Res., 35 (2001), pp.~2913--2922.

\bibitem{maple}
{\sc MAPLE}, {\em 2017, Maplesoft, a division of Waterloo Maple Inc.},
  Waterloo, Ontario.

\bibitem{MATLAB:2015}
{\sc MATLAB}, {\em version 8.5.0 (R2015a)}, The MathWorks Inc., Natick,
  Massachusetts.

\bibitem{MA:1977}
{\sc R.~McGehee and R.~Armstrong}, {\em Some mathematical problems concerning
  the ecological principle of competitive exclusion}, J. Differ. Equations, 23
  (1977), pp.~30--52.

\bibitem{Shen:2007}
{\sc S.~Shen, G.~Premier, A.~Guwy, and R.~Dinsdale}, {\em Bifurcation and
  stability analysis of an anaerobic digestion model}, Nonlinear Dyn., 48
  (2007), pp.~391--408.

\bibitem{SW:1995}
{\sc H.~Smith and P.~Waltman}, {\em The theory of the chemostat: dynamics of
  microbial competition}, Cambridge University Press, 1995.

\bibitem{SoRWE:2005}
{\sc S.~S\"{o}termann, N.~Ristow, M.~Wentzel, and G.~Ekama}, {\em A
  steady-state model for anaerobic digestion of sewage sludges}, Water SA, 31
  (2005), pp.~511--527.

\bibitem{Th:1994}
{\sc H.~Thieme}, {\em Asymptotically autonomous differential equations in the
  plane {II}. stricter {P}oincar\'e-{B}endixson type results}, Diff. Int. Eq.,
  7 (1994), pp.~1625--1640.

\bibitem{Wang:2016}
{\sc L.~Wang and D.~Jiang}, {\em The threshold and ergodicity of a stochastic
  chemostat model with regime switching in washout rate}.
\newblock To Appear.

\bibitem{WeeSeoWolk:2013}
{\sc M.~Weedermann, G.~Seo, and G.-S.-K. Wolkowicz}, {\em Mathematical model of
  anaerobic digestion in a chemostat: Effects of syntrophy and inhibition}, J.
  Biol. Dyn., 7 (2013), pp.~59--85.

\bibitem{WeeWolSas:2015}
{\sc M.~Weedermann, G.-S.-K. Wolkowicz, and J.~Sasara}, {\em Optimal biogas
  production in a model for anaerobic digestion}, Nonlinear Dyn., 81 (2015),
  pp.~1097--1112.

\bibitem{WL:1992}
{\sc G.-S.-K. Wolkowicz and Z.~Lu}, {\em Global dynamics of a mathematical
  model of competition in the chemostat: general response function and
  differential death rates}, SIAM J. Appl. Math., 52 (1992), pp.~222--233.

\bibitem{Xu:2013}
{\sc C.~Xu, S.~Yuan, and T.~Zhang}, {\em Asymptotic behaviour of a chemostat
  model with stochastic perturbation on the dilution rate}, Abstract and
  Applied Analysis, ID 423154 (2013), p.~11 pages.

\end{thebibliography}

\end{document}